\newcommand{\ellone}{$\mathcal{L}_1$\xspace}
\newcommand{\elloneDRAC}{$\mathcal{L}_1$-DRAC\xspace}
\newcommand{\Borel}[1]{\mathcal{B} \left( #1 \right)}
\newcommand{\br}[1]{\left( #1 \right)} % Height adjusted round brackets
\newcommand{\sbr}[1]{\left[ #1 \right]} % Height adjusted square brackets
\newcommand{\cbr}[1]{\left\{ #1 \right\}} % Height adjusted curly brackets
\newcommand{\diracmeasure}[2]{\delta_{#1}\left( #2 \right)} % Dirac measure
\newcommand{\Fmu}[1]{F_\mu \left( #1 \right)}
\newcommand{\Fsigma}[1]{F_\sigma \left( #1 \right)}
\newcommand{\Fbarmu}[1]{\bar{F}_\mu \left( #1 \right)}
\newcommand{\ito}{It\^{o}\xspace}
\newcommand{\Lmu}[1]{\Lambda_\mu \left( #1 \right)}
\newcommand{\Lsigma}[1]{\Lambda_\sigma \left( #1 \right)}
\newcommand{\fbar}[1]{\bar{f} \left( #1 \right)}
\newcommand{\fdbkIL}{\pi_{\text{\tiny IL}}}
\newcommand{\fdbktasil}{\pi_{\text{\tiny TaSIL}}}
\newcommand{\fdbkellone}{\pi_{\mathcal{L}_1}}
\newcommand{\ellonedrac}{$\mathcal{L}_1$-DRAC\xspace}
\newcommand{\Wt}[1]{W_{#1}}
\newcommand{\Wfilt}[1]{\mathfrak{W}_{#1}}
\newcommand{\xstart}[1]{x^\star_{#1}}
\newcommand{\xpt}[1]{x'_{#1}}
\newcommand{\xnomt}[2]{x_{#1}\left( #2 \right)}
\newcommand{\pistar}[1]{\pi^\star \left( #1 \right)}
\newcommand{\piIL}[1]{\pi_{\text{\tiny IL}}\left( #1 \right)}
\newcommand{\Xt}[1]{X_{#1}}
\newcommand{\Xdist}[1]{\mathbb{X}_{#1}}
\newcommand{\Wstart}[1]{W^\star_{#1}}
\DeclareFontFamily{U}{stix2bb}{\skewchar\font127 }
\DeclareFontShape{U}{stix2bb}{m}{n} {<-> stix2-mathbb}{}
\DeclareMathAlphabet{\lowermathbb}{U}{stix2bb}{m}{n}
\newcommand{\xstardist}[1]{\lowermathbb{x}^\star_{#1}}
\newcommand{\xpdist}[1]{\lowermathbb{x}'_{#1}}
\newcommand{\Probability}[1]{\mathbb{P}\left( #1 \right)}
\newcommand{\partition}[2]{\mathfrak{p}_{#1}(#2)}
\newcommand{\Yt}[1]{Y_{#1}}
\newcommand{\Wrt}[1]{\widehat{W}_{#1}}
\newcommand{\Gmu}[1]{G_\mu \left( #1 \right)}
\newcommand{\Gsigma}[1]{G_\sigma \left( #1 \right)}
\newcommand{\Ut}[1]{U_{#1}}
\newcommand{\norm}[1]{\left\| #1 \right\|}
\newcommand{\Frobenius}[1]{\norm{#1}_F}
\newcommand{\Boldomega}{ \omega  }
\newcommand{\BoldTs}{ T_s  }
\DeclareMathOperator*{\argmin}{argmin}
\newcommand*{\paral}{\stretchrel*{\parallel}{\perp}}
\newcommand{\expo}[1]{e^{ #1 }}
\newcommand{\Lhat}[1]{\hat{\Lambda}\left( #1\right)}
\newcommand{\Lparahat}[1]{\hat{\Lambda}^{\paral}\left( #1\right)}
\newcommand{\indicator}[2]{\mathbbm{1}_{#1}\left( #2 \right)} % Indicator function
\NewDocumentCommand{\FL}{o o}{ 
  \pi_{\mathcal{L}_1} \IfValueT{#1}{\left(#1\right)} \IfValueT{#2}{\left(#2\right)}
}
\NewDocumentCommand{\Filter}{o o}{
  \mathcal{F}_\Boldomega \IfValueT{#1}{\left( #1 \right)} \IfValueT{#2}{(#2)} 
} 
\NewDocumentCommand{\FilterW}{o o}{
  \mathcal{F}_{\mathcal{N},\Boldomega} \IfValueT{#1}{\left( #1 \right)} \IfValueT{#2}{(#2)} 
}
\NewDocumentCommand{\AdaptationLaw}{o o o}{
  \mathcal{F}_{\BoldTs} \IfValueT{#1}{\left( #1, \IfValueT{#2}{#2} \right)} \IfValueT{#3}{(#3)} 
}
\NewDocumentCommand{\AdaptationLawParal}{o o}{
  \mathcal{F}^{\paral}_{\BoldTs} \IfValueT{#1}{\left( #1 \right)} \IfValueT{#2}{(#2)} 
}
\NewDocumentCommand{\Predictor}{o o}{
  \mathcal{F}_{\lambda_s} \IfValueT{#1}{\left( #1 \right)} \IfValueT{#2}{(#2)} 
}
\NewDocumentCommand{\ReferenceInput}{o o}{
  \mathcal{F}_r \IfValueT{#1}{\left( #1 \right)} \IfValueT{#2}{(#2)} 
}
\NewDocumentCommand{\ControlError}{o}{
  \widehat{\mathcal{F}} \IfValueT{#1}{\left( #1 \right)}  
}
\newtheorem{definition}{Definition}
\newtheorem{remark}{Remark}[section]
\newtheorem{assumption}{Assumption}
\newtheorem{proposition}{Proposition}[section]
\newtheorem{theorem}{Theorem}[section]
\newtheorem{corollary}{Corollary}[section]
\title{Distributionally Robust Imitation Learning: Layered Control Architecture for Certifiable Autonomy}
\author{Aditya Gahlawat\footnote{Research Engineer, Department of Mechanical Science and Engineering, University of Illinois Urbana-Champaign, Urbana, IL 61801, USA}, Ahmed Aboudonia \footnote{Postdoctoral Research Associate, Department of Mechanical Science and Engineering, University of Illinois Urbana-Champaign, Urbana, IL 61801, USA}, Sandeep Banik\footnote{Postdoctoral Research Associate, Department of Mechanical Science and Engineering, University of Illinois Urbana-Champaign, Urbana, IL 61801, USA} and Naira Hovakimyan\footnote{W. Grafton and Lillian B. Wilkins Professor, Department of Mechanical Science and Engineering, University of Illinois Urbana-Champaign, Urbana, IL 61801, USA. AIAA fellow}}
\affil{University of Illinois Urbana-Champaign}
\author{Nikolai Matni\footnote{Assistant Professor, Department of Electrical and Systems Engineering, University of Pennsylvania, Philadelphia, PA 19104, USA.}}
\affil{University of Pennsylvania}
\author{{Aaron D. Ames}\footnote{Bren Professor, Mechanical and Civil Engineering, Control and Dynamical Systems, California Institute of Technology, California 91125, USA}}
\affil{California Institute of Technology}
\author{{Gioele Zardini}\footnote{Rudge (1948) and Nancy Allen Career Development Assistant Professor, Massachusetts Institute of Technology, Laboratory for Information and Decision Systems (LIDS), Department of Civil and Environmental Engineering, Institute for Data Systems and Society (IDSS), Cambridge, MA 02139-4307, USA}}
\affil{Massachusetts Institute of Technology}
\author{{Alberto Speranzon}\footnote{Chief Scientist, Advanced Technology Labs, Lockheed Martin, 1303 Corporate Center Drive, Eagan, Minnesota, MN 55121, USA.}}
\affil{Advanced Technology Labs, Lockheed Martin}
\begin{document}

% \listoftodos

% %%%%%%%%%%%%%%%%%%
% \newpage

\maketitle

\begin{abstract}
Imitation learning (IL) enables autonomous behavior by learning from expert demonstrations. 
While more sample-efficient than comparative alternatives like reinforcement learning, IL is sensitive to compounding errors induced by distribution shifts. 
There are two significant sources of distribution shifts when using IL-based feedback laws on systems: distribution shifts caused by policy error and distribution shifts due to exogenous disturbances and endogenous model errors due to lack of learning. 
Our previously developed approaches, Taylor Series Imitation Learning (TaSIL) and $\mathcal{L}_1$ -Distributionally Robust Adaptive Control (\ellonedrac), address the challenge of distribution shifts in complementary ways. 
While TaSIL offers robustness against policy error-induced distribution shifts, \ellonedrac offers robustness against distribution shifts due to aleatoric and epistemic uncertainties. 
To enable certifiable IL for learned and/or uncertain dynamical systems, we formulate \textit{Distributionally Robust Imitation Policy (DRIP)} architecture, a Layered Control Architecture (LCA) that integrates TaSIL and~\ellonedrac. 
By judiciously designing individual layer-centric input and output requirements, we show how we can guarantee certificates for the entire control pipeline. 
Our solution paves the path for designing fully certifiable autonomy pipelines, by integrating learning-based components, such as perception, with certifiable model-based decision-making through the proposed LCA approach.
% Our solution opens up further forays into the design of certifiable complete autonomy pipelines by, e.g., stacking perception and decision-making by following the presented LCA approach.
\end{abstract}

%%%%%%%%%%%%%%%%%%%%%%%%%%%%%%%%%%%%%%%%%%%%%
\section{Introduction}\label{sec:Introduction}
%%%%%%%%%%%%%%%%%%%%%%%%%%%%%%%%%%%%%%%%%%%%%
% Outline 1(a) and 1(b)
Learning from expert demonstrations, particularly Imitation Learning (IL)~\cite{husseinImitationLearningSurvey2018}, has emerged as a powerful framework for synthesizing control policies directly from data. 
Compared to reinforcement learning (RL), IL has shown improved data efficiency enabling its successful deployments in domains ranging from robotics~\cite{ankileJUICERDataEfficientImitation2024,ciftciSAFEGILSAFEtyGuided2024a} and autonomous driving~\cite{pomerleauALVINNAutonomousLand1988,codevillaEndtoEndDrivingConditional2018} to aerial vehicles~\cite{abbeelApprenticeshipLearningInverse2004a}, navigation~\cite{husseinDeepImitationLearning2018}, and gaming~\cite{rossReductionImitationLearning2011}.
%%%%%%%%%%%%%
% Outline 2(a) and 2(b).

Despite this promise, IL inherits a fundamental limitation shared by all learning-based feedback methods: its dependence on training data.
Any deviation between the states encountered during deployment and those present in the expert demonstrations can lead to compounding errors, resulting in degraded performance, and, in safety-critical systems, potential loss of stability.
This mismatch, commonly termed \emph{distribution shift}, appears in IL as the \emph{imitation gap}, the performance difference between the expert and learned policies. 
Because learned policies rarely reproduce expert behavior exactly, this imitation gap accumulates over time, giving rise to \emph{policy-induced} distribution shift.
The inability of standard IL methods to guarantee robustness to such distribution shift remains a central barrier to their adoption in certifiable autonomous systems.
This vulnerability is well documented in the IL literature and has motivated various mitigation strategies, including dataset aggregation (e.g., DAgger~\cite{rossReductionImitationLearning2011a}), hybrid approaches combining IL with reinforcement learning (e.g., on-policy~\cite {sunDualPolicyIteration2018,hoGenerativeAdversarialImitation2016a}), and noise injection~\cite{laskeyDARTNoiseInjection2017}. 
A majority of the approaches, however, rely on unverifiable assumptions, such as access to an interactive expert or repeated system interaction via a simulator.
For safety-critical systems, such assumptions are often prohibitively expensive, if not impossible.      

The term ``imitation gap'' typically suggests that distribution shift arises solely because the learned policy fails to accurately reproduce the expert policy.
However, this view is incomplete: it overlooks both the \emph{quality} of the model used for policy synthesis and the \emph{exogenous disturbances} that inevitably affect real systems.
Any model used for policy synthesis, be it data-driven or obtained via classical system identification approaches using scale models in a wind tunnel, will be inherently imperfect, and thus, subject to \emph{epistemic} (learnable) uncertainties. 
Additionally, the systems are subject to \emph{aleatoric} (unbearable but statistically descriptive) uncertainties.
Finally, ambiguity in the system's initialization state propagates over time and can further exacerbate the effects of epistemic and aleatoric uncertainties.
\ul{Existence of any subset or combination of epistemic, aleatoric, and initialization ambiguity will contribute to a \emph{distribution shift} even if one assumes perfect imitation of the expert policy.}

One may seek to mitigate these uncertainties leveraging robust or adaptive control techniques, but most existing approaches depend on uncertainty models that introduce substantial conservatism~\cite{mitchellTimedependentHamiltonJacobiFormulation2005,herbertFaSTrackModularFramework2017,singhRobustFeedbackMotion2023}.
For example, uncertainty representations that are amenable for synthesis but poor in representational accuracy include bounded uncertainties assumed to belong to known compact sets, parametrized uncertainties with known basis functions, and parametric distributions such as Gaussian.
In addition to being surrogates that generalize poorly, such representations fail to capture uncertainties in data-driven learned systems, whose properties are guaranteed only for the empirical distributions supported by the training data.
We therefore argue that the safe and predictable deployment of IL policies requires controllers that are robust to \textbf{both policy-induced and uncertainty-induced distribution shifts}, while simultaneously providing \emph{a priory} certifiable guarantees on their performance.
%We thus argue that a reasonable requirement for safe and predictable deployment of IL policies is that the synthesized policies guarantee robustness to the effects of \textbf{both policy and uncertainty-induced distribution shifts while also providing \emph{a priori} guarantees for certifiability.}    

\subsection{Prior art}
We now discuss the existing results in the literature that provide results for problems similar to the one we consider in this manuscript.

\noindent \textbf{Imitation Learning}:
Behavioral cloning (BC) represents the simplest instantiation of IL, formulated as supervised regression over state--action pairs, with early demonstrations in autonomous driving and navigation~\cite{pomerleauALVINNAutonomousLand1988, panAgileAutonomousDriving2019}.
In parallel, IL was recognized as a principled mechanism for motor skill acquisition, emphasizing that complex behaviors can emerge through the reuse and generalization of demonstrated trajectories~\cite{schaalImitationLearningRoute}.
These ideas influenced learning from demonstration and programming by demonstration paradigms in robotics, where policies are obtained by mapping observed states to expert actions or motion primitives~\cite{calinonRobotProgrammingDemonstration2009, argallSurveyRobotLearning2009}.
Such approaches emphasized simplicity, data efficiency, and compatibility with high-dimensional perception.

Subsequent work broadened the scope of IL to address limitations of BC and to improve generalization. 
Inverse reinforcement learning (IRL) recasts IL as the problem of inferring an underlying reward function that yields expert behavior, enabling policy optimization under the learned reward function\cite{abbeelApprenticeshipLearningInverse2004c, abbeelApprenticeshipLearningInverse2004a}. 
Adversarial imitation learning \cite{hoGenerativeAdversarialImitation2016a} formulations further refine this idea by matching expert and learner distribution of state-action pairs without explicitly recovering reward function.
Numerous surveys have unified and organized IL methods along axes of supervision, representation, and optimization, and documenting their application across manipulation, locomotion, and autonomous driving \cite{husseinImitationLearningSurvey2018, ROB-053, ravichandarRecentAdvancesRobot2020, lemeroSurveyImitationLearning2022}.
 % Collectively, the prior stated set of works establishes IL as a versatile and effective paradigm for learning complex behaviors from data.
 % However, it also exposes a fundamental vulnerability, that the policies trained on expert-induced data distributions can fail under closed-loop execution when there is a mismatch between the training and operating distributions.
 % This observation motivates the next line of prior art on imitation learning methods designed to be robust to policy-induced distribution shifts.

\noindent \textbf{IL Robust to Policy Shifts}:
Interactive IL methods mitigate policy-induced distribution shift by explicitly correcting the mismatch between training and deployment distributions.
DAgger iteratively augments the dataset with expert labels on states visited by the learned policy, thereby aligning training and execution distributions~\cite{rossReductionImitationLearning2011}.
Subsequent variants reduce expert burden or target critical states, including human-gated intervention~\cite{kellyHGDAggerInteractiveImitation2019}, budget-aware querying~\cite{hoqueThriftyDAggerBudgetAwareNovelty2021}, and query-efficient schemes for perception-driven tasks~\cite{zhangQueryEfficientImitationLearning2016}.
Complementary approaches inject stochasticity or adversarial perturbations into demonstrations to broaden coverage of the expert support, as in DART \cite{laskeyDARTNoiseInjection2017}, or estimate the support of expert behavior directly to discourage out-of-distribution actions \cite{wangRandomExpertDistillation2019}. 
Other methods recast robustness as a regularization problem, penalizing disagreement between trained and deployed policies \cite{Brantley2020Disagreement-Regularized} or blending IL with sparse-reward reinforcement learning to recover from distribution shifts \cite{reddySQILImitationLearning2019}.

More recent work moves beyond data aggregation to relax assumptions of persistent expert access or high-fidelity simulators.
Active and coaching-based methods reinterpret expert interaction through the lens of active learning or corrective feedback, selectively requesting supervision when the learned policy deviates from desired behavior \cite{judahActiveImitationLearning2012, heImitationLearningCoaching2012, leSmoothImitationLearning2016}. 
Trajectory-level abstractions further mitigate compounding error by reducing sensitivity to fine-grained action deviations, such as waypoint-based imitation in long-horizon manipulation tasks \cite{shiWaypointBasedImitationLearning2023}. 
A particularly notable advance is Taylor Series Imitation Learning (TaSIL), which explicitly models the effect of policy errors on future states via a local Taylor expansion of the dynamics \cite{NEURIPS2022_7f10c3d6}.
TaSIL incorporates higher-order sensitivity information into the learning objective, directly penalizing error directions that amplify under closed-loop dynamics. 
% This formulation provides a principled bridge between imitation learning and control-theoretic notions of stability, revealing how robustness to policy-induced distribution shift depends on the interaction between policy approximation error and system dynamics. 

\noindent \textbf{IL for Uncertain Systems and Environments}:

\begin{itemize}
    \item \underline{Data-augmentation/training based approaches}: A significant body of work addresses uncertainty by appending the training data or modifying the learning objective to improve robustness.
    Expert demonstrations may be noisy, suboptixmal, or inconsistent, motivating methods that explicitly model demonstrator quality or leverage failed demonstrations as informative signals~\cite{grollmanRobotLearningFailed2012, wuImitationLearningImperfect2019, sasaki2021behavioral, beliaevImitationLearningEstimating2022}.
    Uncertainty due to perception and embodiment mismatch is addressed through third-person imitation and imitation from observation, which learn invariant representations across viewpoints or embodiments~\cite{stadieThirdPersonImitationLearning2019, liuImitationObservationLearning2018}.
    Self-supervised representation learning, including time-contrastive objectives, further enables imitation from raw sensory data in unstructured environments~\cite{sermanetTimeContrastiveNetworksSelfSupervised2018}.
    Other approaches explicitly model distributions over behaviors or trajectories, using probabilistic policies to handle multimodality~\cite{caiProbabilisticEndtoEndVehicle2020} or encouraging robustness across diverse expert behaviors and environment variations~\cite{wangRobustImitationDiverse2017, chaeRobustImitationLearning2022}.
    Hybrid methods that combine IL with reinforcement learning, such as example-guided policy optimization, use demonstrations as anchors while adapting to stochastic dynamics and contact-rich interactions~\cite{pengDeepMimicExampleguidedDeep2018}.
    Recent work on stability-aware density modeling and robust online learning from humans further constrains policy adaptation under uncertainty and human feedback~\cite{kangLyapunovDensityModels2022, mehtaStROLStabilizedRobust2024}.
    \item \underline{System-theoretic approach}: A complementary line of work integrates IL with control-theoretic tools to provide formal guarantees despite modeling uncertainty.
    Methods combining IL with model predictive control or Lyapunov-based analysis establish stability under bounded model mismatch and disturbances~\cite{hertneckLearningApproximateModel2018, leeSafeEndtoendImitation2019, yinImitationLearningStability2022}.
    These guarantees have been further formalized through finite-sample analyses that characterize the data requirements needed to ensure stability under uncertain dynamics~\cite{tuSampleComplexityStability2022}, as well as convergence results that isolate the role of system structure in linear--quadratic settings~\cite{caiGlobalConvergenceImitation2019}.
    While these approaches offer stronger theoretical assurances, they often rely on restrictive modeling assumptions or limited system classes.
\end{itemize}

\noindent Collectively, these works demonstrate that existing IL methods for uncertain systems either rely on strong assumptions, provide limited or system specific guarantees, or lack certifiable performance, motivating approaches that unify robustness, stability, and learning under both policy-induced and uncertainty-induced distribution shifts.

%%%%%%%%%%%%%%%%%%%%%%%%%%%%%%%%%%%%%%%
\subsection{Contributions}

% The training data collected from expert demonstrations originate from the \emph{unknown true system}, whereas IL policies are trained using a nominal or approximate model. 
% These discrepancies imply that the classical notion of the imitation gap is incomplete and requires a more comprehensive definition of the \emph{total imitation gap}, which quantifies the performance loss arising from policy-induced distribution shift together with disturbance-induced distribution shift (aleatoric, epistemic, and initialization uncertainty).

% Taylor Series Imitation Learning (TaSIL)~\cite{NEURIPS2022_7f10c3d6} mitigates policy-induced distribution shift by exploiting the notion of incremental input-to-state stability via a loss augmentation.
% The augmented loss is characterized by higher-order terms of the Taylor series expansion of the learned and expert policies.
% TaSIL achieves generalization bounds which scale with number of expert demonstration without requiring access to interactive expert or a simulator reproducing the true system behavior.

We propose a decoupled approach to achieve \textbf{guaranteed robustness against both policy and uncertainty-induced distribution shifts.}
Our approach brings together our previously developed approaches TaSIL~\cite{NEURIPS2022_7f10c3d6}, and \ellone-distributionally robust adaptive control (\elloneDRAC)~\cite{L1DRAC} within a layered control architecture (LCA)~\cite{matniTheoryDynamicsControl2016, matniTheoryControlArchitecture2024, matniQuantitativeFrameworkLayered2024} framework, termed \textit{\textbf{Distributionally Robust Imitation Policy (DRIP)}}.
The approach of TaSIL offers robustness to policy-induced distribution shift via loss augmentation to learn policies exploiting input-to-state stability properties of the known system. 
On the other hand, the feedback law \elloneDRAC enforces the nominal behavior of TaSIL by guaranteeing robustness against uncertainty-induced distribution shifts. 
Furthermore, \elloneDRAC provides certificates of robustness in the space of probability measures (distributions). 
We unify TaSIL and \elloneDRAC under the layered control architecture (LCA) such that the adage of ``\emph{the whole is greater than the sum of its parts}'' holds true. 
Individually, neither TaSIL, nor \elloneDRAC can offer robustness to the entire spectrum of sources that lead to distribution shifts.  
Instead, by combining the individual components in a layered architecture, we unify the robustness axes of each to certifiably counter the wide spectrum of distribution shift sources. 
The key features of DRIP are as follows:
\begin{itemize}
    \item The low-level control law offers robustness certificates in the form of ambiguity sets in space of Borel measures under the Wasserstein metric. 
    The robust certificates are guaranteed by design and are sample-free;  
    \item The $\mathcal{L}_1$-DRAC controller is based on the architecture of $\mathcal{L}_1$~\cite{hovakimyan2010ℒ1}, which, as demonstrated before in~\cite{lakshmanan2020safe,gahlawat2021contraction,sungrobust}, is well-suited for integration with a high-level controller like TaSIL;
    \item The space of certificates that the unified LCA guarantees opens up further forays into robust use of high-dimensional perception system and other data-driven components that demonstrate high-performance empirically, but lack the robustness guarantees needed for predictable operation. 
    \item \textbf{Train Once, Train TaSIL}: The decoupled approach, along with the training-free nature of \elloneDRAC, allows us to synthesize the DRIP without making any changes to the TaSIL training procedure. That is, our approach achieves robustness to uncertainties via a robust adaptive controller without using data-driven techniques like adversarial training or domain randomization. 
\end{itemize}

%%%%%%%%%%%%%%%%%%%%%%%%%%%%%%%%%%%%%%%
\subsection{Organization}
%%%%%%%%%%%%%%%%%%%%%%%%%%%%%%%%%%%%%%%%%%%%%%%%%%%%%%%%%%%%
We present the system under consideration, followed by the required assumption, the training data for IL, and problem formulation in Section~\ref{sec:proset}. 
In Section~\ref{sec:Methodology}, we present our core results on the combined robustness analysis of TaSIL and \elloneDRAC.
Section~\ref{sec:decouple_distribution_shifts} characterizes the decoupled nature of the policy-induced and uncertainty-induced distribution shifts along with assumption required for the stability of the system.
In Section~\ref{sec:PolicyIG}, we derive the bounds on the imitation gap due to the learned policy, and in Section~\ref{sec:UncertaintyIG}, we derive the bounds as a result of uncertainty, followed by discussion on robust by design in Section~\ref{sec:discussion}. 
We validate our result on numerical experiments in Section~\ref{sec:numerical} and conclude the paper in Section~\ref{sec:conclusion}.
\begin{figure}[h]
    \centering
    \includegraphics[width=0.6\linewidth]{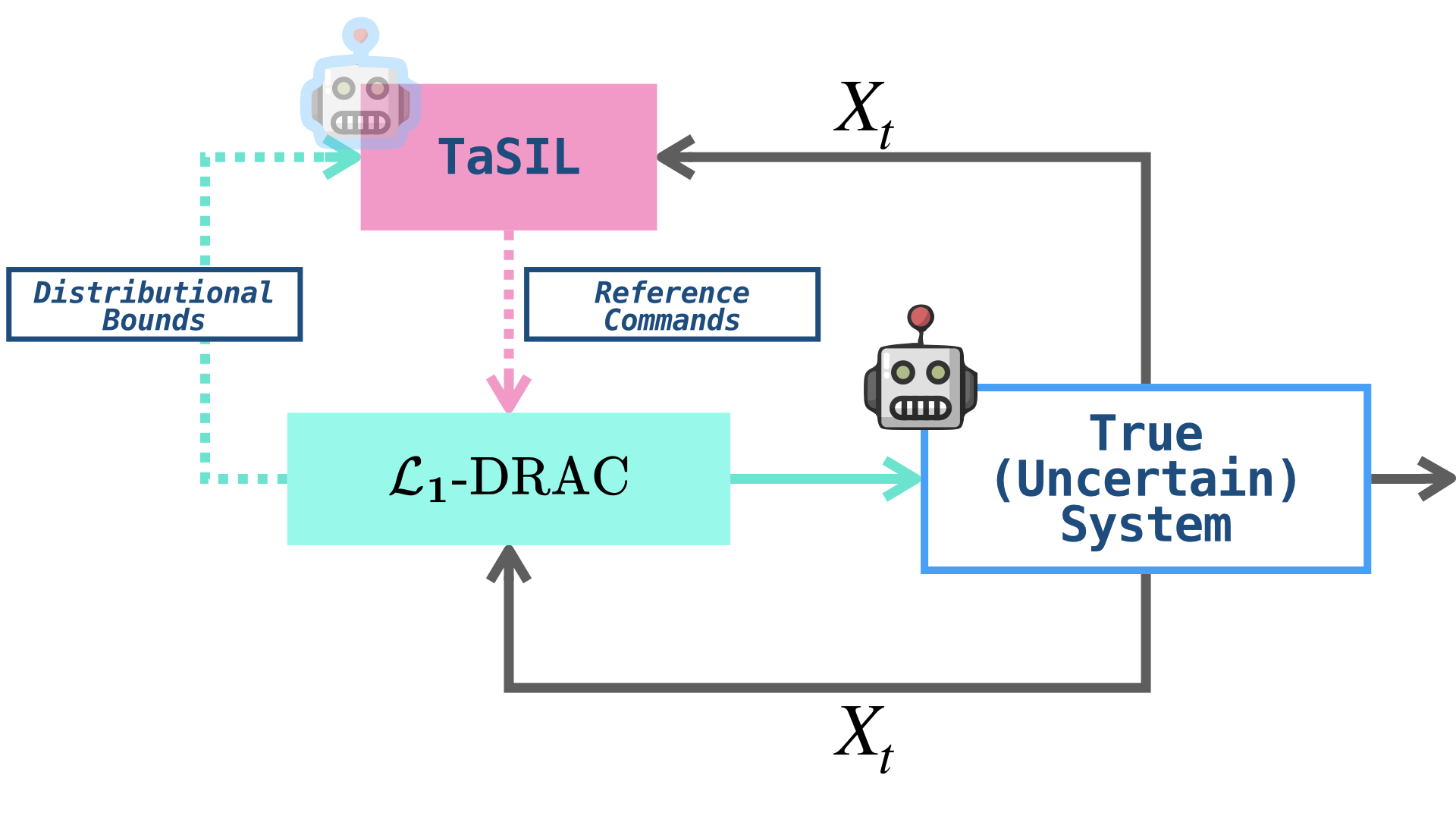}
    \caption{ 
        % \todo[inline]{To be edited to reflect Distributionally Robust Imitation Policy (DRIP)}
        Illustration of a layered control architecture that integrates TaSIL and \elloneDRAC, where $X_{t}$ represents the state of the system. 
        In this architecture TaSIL operates as a mid-level controller that generates reference commands for the low-level \elloneDRAC. 
    }
    \label{fig:LCA_TaSIL_DRAC}
\end{figure}
%%%%%%%%%%%%%%%%%%%%%%%%%%%%%%%%%%%%%%%%%%%%%%%%%%%%%%%%%%%%

%%%%%%%%%%%%%%%%%%%%%%%%%%%%%%%%%%%%%%%
\subsection{Notation}

We denote by $\mathbb{R}_{>0}$ and $\mathbb{R}_{\geq 0}$ the set of positive and non-negative reals, respectively. $\mathbb{N}_0$ is the set of natural numbers starting at $0$. $\mathcal{C}\left(\mathbb{R}^n;\mathbb{R}^m\right)$ denotes the set of continuously differentiable maps $\mathbb{R}^n \to \mathbb{R}^m$. We denote by $\Borel{F}$, the Borel $\sigma$-algebra generated by $F$. 
% $\mathds{1}_{S}$ denotes the indicator set of the set $S$. 
% Furthermore, let $(S,\Sigma, \mathcal{M})$ be a measure space and $1 \leq p < \infty$, $\|f\|_{L_p}$ denotes the $L_p$ norm given by $\|f\|_{L_p} := \left(\int_{S} |f|^p d \mathcal{M} \right)^{1/p}$ and $\pWass{2\sfp}{\Xdist{t}}{\Xstardist{t}}$ denotes the $2$ Wasserstein metric between probability measures $\Xdist{t}$ and $\Xstardist{t}$ and $\mathbb{S}^n$ denotes the set of symmetric matrices $\mathbb{R}^n \to \mathbb{R}^n$. 
We denote $\mathbb{I}_n \in \mathbb{S}^n$ as the identity matrix of dimension $n$, $0_{m,n}, 1_{m,n}$ as the matrices $\mathbb{R}^n \to \mathbb{R}^m$ with all entries equal to $0$ and $1$ respectively.

% {\color{blue}{Notation}
% Borel, Expectation wrt measures, $0_m$
% }

%%%%%%%%%%%%%%%%%%%%%%%%%%%%%%%%%%%%%%%%%%%%%%%%%%%%%%%%%%%%%%%%%%%%%%%%%%%%%%%%

%%%%%%%%%%%%%%%%%%%%%%%%%%%%%%%%%%%%%%%%%%%%%%%%%%%%%%%%%%%%%%%%%%%%%%%%%%%%%%%%
\section{Problem Setup}\label{sec:proset}

In this section we define the systems under consideration, the training data available for policy synthesis, the assumptions on the system dynamics, and the problem statement.

%%%%%%%%%%%%%%%%%%%%%%%%%%%%%%%
\subsection{The Systems}
%%%%%%%%%%%%%%%%%%%%%%%%%%%%%%%
We begin with the definitions of the processes under consideration. 
The following defines the known and unknown drift and diffusion vector fields.
%%%%%%%
\begin{definition}[Vector Fields]\label{def:VectorFields}
    Consider the \textbf{known functions} $f:\mathbb{R}_{\geq 0} \times \mathbb{R}^n \rightarrow \mathbb{R}^n$, and $g: \mathbb{R}_{\geq 0} \rightarrow  \mathbb{R}^{n \times m}$, for $n,m \in \mathbb{N}$.
    Consider also the \textbf{unknown functions} $\Lambda_\mu :\mathbb{R}_{\geq 0} \times \mathbb{R}^n \rightarrow \mathbb{R}^n$ and $\Lambda_\sigma :\mathbb{R}_{\geq 0} \times \mathbb{R}^n \rightarrow \mathbb{R}^{n \times d}$, for $d \in \mathbb{N}$.
    We denote by
    \begin{align}\label{eqn:TrueVectorFields}
            \Fmu{t,a,b} \doteq  f(t,a) + g(t) b + \Lmu{t,a}  \in \mathbb{R}^n, \quad
            \Fsigma{t,a} \doteq   \Lsigma{t,a}  \in \mathbb{R}^{n \times d},
    \end{align}
    for all $\cbr{a,b,t} \in \mathbb{R}^n \times \mathbb{R}^m \times \mathbb{R}_{\geq 0}$, the \textbf{true (uncertain) drift and diffusion vector fields}, respectively.
    Similarly, for any $\cbr{a,b,t} \in \mathbb{R}^n \times \mathbb{R}^m \times \mathbb{R}_{\geq 0}$, we denote by
    \begin{align}\label{eqn:NominalVectorFields}
            \Fbarmu{t,a} \doteq  f(t,a)  \in \mathbb{R}^n,
            \quad 
            \fbar{t,a,b} \doteq \Fbarmu{t,a} + g(t) b \in \mathbb{R}^n,
    \end{align}
    the \textbf{nominal (known) drift vector fields}.
    Note that we can also write
    \begin{subequations}\label{eqn:VectorFields:Decomposition}
        \begin{align}
            \Fmu{t,a,b} 
            = 
            \Fbarmu{t,a} + g(t) b + \Lmu{t,a}
            =
            \fbar{t,a,b} + \Lmu{t,a} 
            \in \mathbb{R}^n,
            \\
            \Fsigma{t,a} = \Lsigma{t,a}  \in \mathbb{R}^{n \times d}.
        \end{align}    
    \end{subequations}
    
\end{definition}
%%%%%%%
Next, we define the processes under consideration in the manuscript.
We follow the notation in~\cite{oksendal2013stochastic} for quantities related to continuous-time stochastic processes.
%%%%%%%%%%%%%%%%% 
\begin{definition}[Systems]\label{def:MainSystems}
    Let $\br{\Omega, \mathcal{F}, \mathbb{P}}$ be a complete probability space, which will be the underlying space throughout the manuscript.
    Let us denote by $\Wt{t}$ and $\Wfilt{t}$ a $\mathbb{P}$-Brownian motion and the filtration it generates, respectively.
    We also define $\Wfilt{\infty} = \sigma \br{ \cup_{t \geq} \Wfilt{t}}$.
    Let $\xi \sim \mathcal{D}$ and $\bar{\xi} \sim \bar{\mathcal{D}}$ be two $\mathbb{R}^n$-valued random variables that are independent of the $\sigma$-algebra $\Wfilt{\infty}$, where $\mathcal{D}$ and $\bar{\mathcal{D}}$ are the respective distributions (probability measures) on the Borel $\sigma$-algebra $\Borel{\mathbb{R}^n}$.
    The probability measure $\mathcal{D}$ is assumed to be supported on a known compact set $\mathcal{X} \subset \mathbb{R}^n$
    Then, we define $\Wfilt{0,t} = \sigma\br{\bar{\mathcal{D}}} \vee \Wfilt{t}$.
    
    Let us denote by the absolutely continuous maps $\pi^\star,~\fdbkIL,~\pi_{\mathcal{L}_1}: \mathbb{R}_{\geq 0} \rightarrow \mathbb{R}^m$, the \textbf{expert}, \textbf{imitation learned}, and \textbf{\ellonedrac feedback processes}, respectively.
    Then, using the known vector fields in Def.~\ref{def:VectorFields}, we denote by 
    \begin{align}\label{eqn:expanded_solutions}
        \xstart{t} \doteq  \xnomt{t}{\xi; \pi^\star} 
        \quad \text{and} \quad 
        \xpt{t} \doteq  \xnomt{t}{\xi; \fdbkIL},
    \end{align}
    the \textbf{expert} and \textbf{nominal (IL) trajectories}, respectively, if they are the solutions to the following ODEs:
    \begin{subequations}\label{eqn:NominalProcesses}
        \begin{align}
            d\xstart{t} = \fbar{t,\xstart{t}, \pistar{\xstart{t}}}, \quad \xstart{0} = \xi \sim \mathcal{D},&
            \quad \text{Nominal (Expert)}
            \label{eqn:ExpertProcess} 
            \\  
            d\xpt{t} = \fbar{t,\xpt{t}, \piIL{\xpt{t}}}, \quad \xpt{0} = \xi \sim \mathcal{D}.&
            \quad \text{Nominal (IL)}
            \label{eqn:TaSILProcess}
        \end{align}
    \end{subequations}
    Similarly, using the unknown vector fields in Def.~\ref{def:VectorFields}, we denote by 
    \begin{equation}
        \Xt{t} \doteq \Xt{t}\left(\bar{\xi},\, \Wt{}; \pi_{ad} \right), \quad
        \pi_{ad} \doteq \fdbkIL + \fdbkellone,
    \end{equation} 
    the \textbf{uncertain (true) trajectory}, if it is the strong solution to the following \ito stochastic differential equation (SDE) adapted to $\Wfilt{0,t}$:
    \begin{align}\label{eqn:L1DRACProcess}
        d\Xt{t} 
        = 
        \Fmu{t,\Xt{t}, \pi_{ad}\br{\Xt{t}} }dt + \Fsigma{t,\Xt{t}}d\Wt{t}, 
        \quad \Xt{0} = \bar{\xi} \sim \bar{\mathcal{D}},
        \quad \text{Uncertain (True)}.
    \end{align}
    We define the \textbf{(instantaneous) law} $\Xdist{t}$ of $\Xt{t}$ as
    \begin{align*}
        \Xdist{t}\br{B} \doteq \Probability{\Xt{t}^{-1}\br{B}}
        = 
        \Probability{\Xt{t} \in B}
        ,
        \quad
        (t,B) \in [0,T] \times \Borel{\mathbb{R}^n},
    \end{align*}
    where $\Xt{t}^{-1}$ denotes the pushforward of $\mathbb{P}$ under $\Xt{t}$.
    We similarly denote by $\xstardist{t}, \xpdist{t}: \Borel{\mathbb{R}^n} \rightarrow [0,1]$ the laws for ~\eqref{eqn:ExpertProcess} and~\eqref{eqn:TaSILProcess}, respectively.
\end{definition}
%%%%%%%%%%%%%%%%% 
A few remarks regarding the systems defined above are in order.
%%%%%%%%%%%%%%%%%
\begin{remark}
    As is standard in imitation learning literature, the nominal systems~\eqref{eqn:NominalProcesses} are free of aleatoric uncertainties. 
    As mentioned in Section~\ref{sec:Introduction}, a subset of existing IL literature considers aleatoric uncertainties as additive statistically determinant disturbances, while others consider epistemic uncertainties. 
    However, the presence of \textbf{both} aleatoric and epistemic uncertainties in the true system~\eqref{eqn:L1DRACProcess} that we consider in this paper is substantially more general as it introduces nonlinear interaction between the epistemic and aleatoric uncertainties.
    For such general systems, beyond the well-posedness in the strong sense, one cannot make any claims on the distributional properties of the process laws.     
\end{remark}
%%%%%%%%%%%%%%%%%

We make the following assumptions on the vector fields for the purposes of well-posedness and analysis.
%%%%%%%%%%%%%%%%%
\begin{assumption}[Regularity and Bounds]\label{assmp:VectorFields}
    The unknown functions $\Lambda_\mu$ and $\Lambda_\sigma$, presented in Definition~\ref{def:VectorFields}, satisfy 
    \begin{align*}
        \norm{\Lmu{t,a}}^2 \leq \Delta_\mu^2 \br{1 + \norm{a}^2}, \quad 
        \norm{\Lsigma{t,a}}_F^2 \leq \Delta_\sigma^2 \br{1 + \norm{a}^2}^\frac{1}{2},~\forall (t,a) \in \mathbb{R}_{\geq 0} \times \mathbb{R}^n, 
    \end{align*}
    where $\Delta_{\mu}, \Delta_{\sigma} \in \mathbb{R}_{>0}$ are known
     The input operator $g: \mathbb{R}_{\geq 0} \rightarrow \mathbb{R}^{n \times m}$ has full column rank $\forall t \in \mathbb{R}_{\geq 0}$, and satisfies 
    \begin{align*}
        g \in \mathcal{C}^1([0,\infty);\mathbb{R}^{n \times m}), \quad \norm{g(t)}_F \leq \Delta_g, \forall t \in \mathbb{R}_{\geq 0},
    \end{align*}
    where $\Delta_g \in \mathbb{R}_{>0}$ is assumed to be known.
    Since $g(t)$ is full rank, we can construct a $g^\perp: \mathbb{R}_{\geq 0} \rightarrow \mathbb{R}^{n \times n-m}$ such that $\text{\emph{Im}} \, g(t)^\perp = \text{\emph{ker}} \, g(t)^\top$ and $\text{\emph{rank}}\br{\bar{g}(t)}=n$, $\forall t \in \mathbb{R}_{\geq 0}$ where 
    \begin{align*}
        \bar{g}(t) \doteq \begin{bmatrix} g(t) & g(t)^\perp  \end{bmatrix} \in \mathbb{R}^{n \times n}.
    \end{align*}
\end{assumption}
%%%%%%%%%%%%%%%%%% 

% %%%%%%%%%%%%%%%%%%%%%%%%%%%%%%%%%%%%%%%%%%%%%%%%%%%%%%%%%%%%
% \begin{figure}[t]
%     \centering
%     \includegraphics[width=0.6\linewidth]{Figures/LearningSetup.jpeg}
%     \caption{The expert trajectories available for imitation learning in our formulation are generated by the uncertain (true) system operating under some expert input process. 
%     The expert trajectory data is then used to learn a nominal model whose predictive performance is only guaranteed on the expert trajectories.}
%     \label{fig:LearningSetup}
% \end{figure}
%%%%%%%%%%%%%%%%%%%%%%%%%%%%%%%%%%%%%%%%%%%%%%%%%%%%%%%%%%%%
%%%%%%%%%%%%%%%%%%%%%%%%%%%%%%%%%%%%%%%%%%%%%%%%%%%%%%%%%%%%
\begin{figure}[t]
    \centering
    \includegraphics[width=0.6\linewidth]{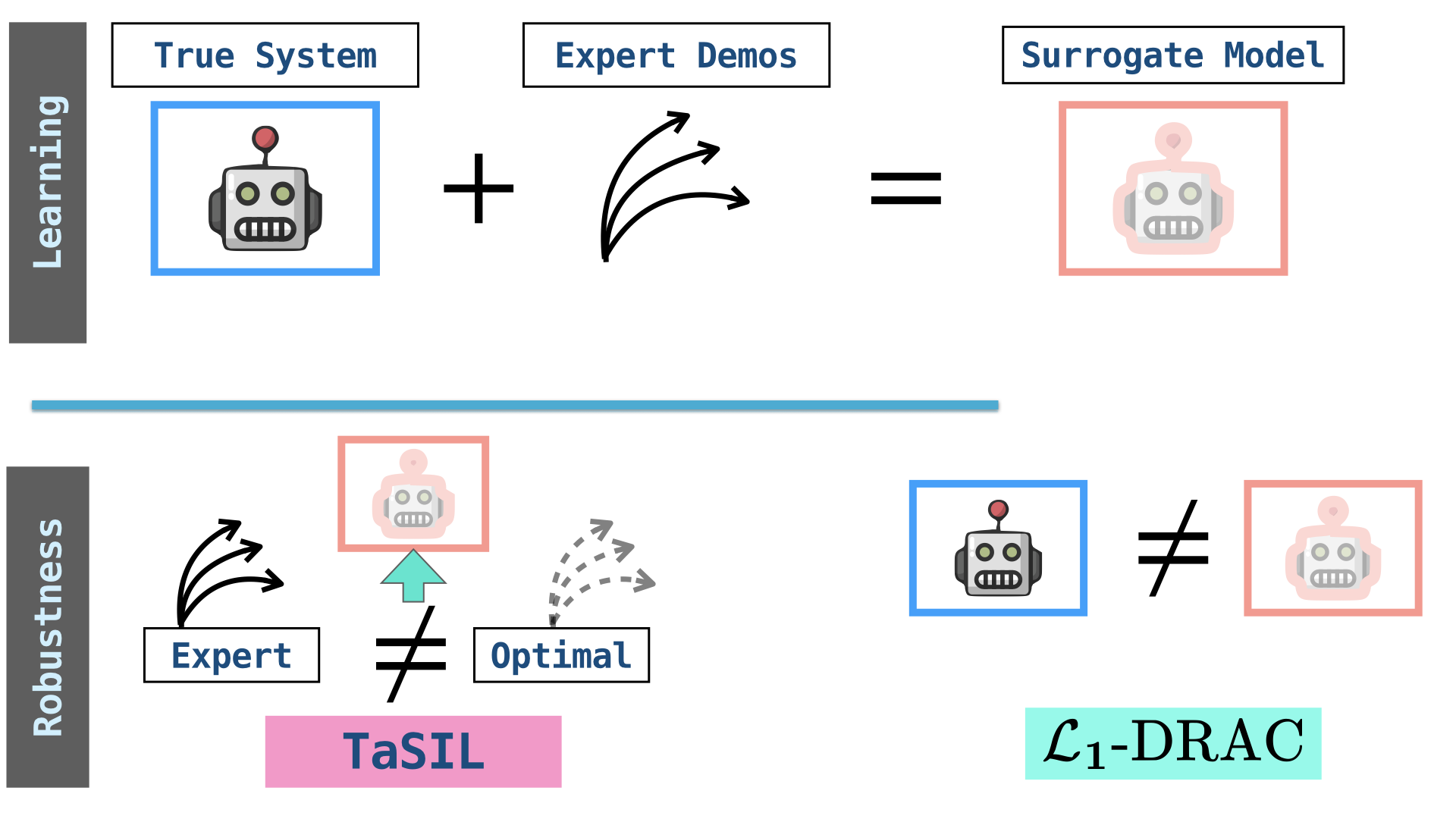}
    \caption{\emph{Top panel:} The expert trajectories available for imitation learning in our formulation are generated by the uncertain (true) system operating under some expert input process. 
    The expert trajectory data is then used to learn a nominal model whose predictive performance is only guaranteed on the expert trajectories.
    \emph{Bottom panel:} The methodology of TaSIL is designed to be robust against distribution shifts due to a difference between the expert and optimal policies. On the other hand, \elloneDRAC is designed to be robust against the effects of inaccuracies between the true and surrogate models. Together, TaSIL and \elloneDRAC can thus offer robustness guarantees against a comprehensive set of distribution shift sources.}
    \label{fig:HighLevel}
\end{figure}
%%%%%%%%%%%%%%%%%%%%%%%%%%%%%%%%%%%%%%%%%%%%%%%%%%%%%%%%%%%%
%%%%%%%%%%%%%%%%%%%%%%%%%%%%%%%
\subsection{Training Data}\label{subsec:TrainingData}
%%%%%%%%%%%%%%%%%%%%%%%%%%%%%%%

Following the standard formulation of imitation learning, we assume access to a dataset of expert demonstrations.
However, in our formulation, the expert demonstrations are generated by the uncertain (true) system operating under some unknown expert inputs. 
Thus, we assume availability of $n \in \mathbb{N}$ \emph{expert trajectories} $\cbr{ \Xt{t}\left(\xi_i,\, \Wstart{}; U^\star \right),~\xi_i \sim \mathcal{D} }_{i=1}^n$, where $U^\star$ is the \emph{unknown (expert) input process}, and $\Wstart{t}$ is a $\mathbb{P}$-Brownian motion independent of $\sigma\br{\mathcal{D}}$ the driving $\Wt{t}$ in~\eqref{eqn:L1DRACProcess}.
Using the expert trajectories, one can learn the nominal vector field $\bar{f}$ and a nominal expert policy $\pi^\star$, that define the nominal (expert) process~\eqref{eqn:ExpertProcess}.
The learning setup is illustrated in Fig.~\ref{fig:HighLevel}.

We emphasize that the \textbf{predictive performance of the learned nominal model can only be guaranteed on its training data (expert trajectories)}. 
This is a crucial aspect of our formulation as it ensures a realistic limitation that \emph{a priori} guaranteeing generalization bounds on learned models beyond their training data is infeasible without further assumptions. 
However, since we do not make any generalization assumptions, the only guarantee we can make is \emph{only} on the accuracy of the learned model on the expert trajectories, usually given by the final empirical risk. 
We assume w.l.o.g. that the empirical risk is insignificant and can thus be ignored in the subsequent analysis.
As we will see later, the bounds on the imitation gap that we provide can additively incorporate the empirical risk of the learned model and is a trivial extension. 

With the nominal model accurate on the expert trajectories, i.e., $\Xt{t}\left(\xi_i,\, \Wstart{}; U^\star \right) = \xnomt{t}{\xi_i; \pi^\star}$, $\xi_i \sim \mathcal{D}$, $\forall i \in \cbr{1,\dots,n}$, we can now define the \emph{training data}.

%%%%%%%%%%%
\begin{definition}[Training Data]\label{Def:TrainingData}  
    We define the \textbf{training data (expert trajectories) set} as 
    \begin{align}\label{eqn:partition}
        \mathcal{S}_n 
        \doteq 
        \left\{
            \xnomt{t}{\xi_i; \pi^\star}, \, t \in \partition{k}{T}, \, \xi_i \overset{i.i.d.}{\sim} \mathcal{D}, \, i \in  \cbr{1, \dots, n}, \, n \in \mathbb{N}. 
        \right\}
        ,
    \end{align}
    where we define the \emph{partition} of the interval $[0,T]$ as 
    \begin{equation}\label{eqn:HorizonPartition}
        \partition{k}{T} \doteq \cbr{0, t_1,t_2,\dots,t_k}, \quad \text{where} \quad
        0 < t_1 < t_2 < \ldots < t_k = T, \quad \text{ for some } k \in \mathbb{N}.
    \end{equation}  
    Without loss of generality, we assume that the partition $\partition{k}{T}$ is uniform, i.e., $\Delta T \doteq t_{j} - t_{j-1} = T/k$, $\forall j \in \cbr{1,\dots,k}$.

    The samples $\cbr{\xi_i}_{i=1}^n$, are i.i.d. drawn from $\mathcal{D}$, generating the \textbf{empirical} distribution $\widehat{\mathcal{D}}_n \approx \mathcal{D}$ defined as  
    \begin{align}\label{eqn:empiricaldistribution}
        \widehat{\mathcal{D}}_n\br{B} = \frac{1}{n} \sum_{i=1}^n \diracmeasure{\xi_i}{B}, \quad B \in \Borel{\mathbb{R}^n},
    \end{align}
    where, for any $a \in \mathbb{R}^n$, $\diracmeasure{a}{\cdot}: \Borel{\mathbb{R}^n} \to \cbr{0,1}$ is the \textbf{dirac measure} at $a \in \mathbb{R}^n$~\cite[Sec.~1.4]{steyer2017probability}. 
\end{definition}
%%%%%%%%%%%

% %%%%%%%%%%%%%%%%%%%%%%%%%%%%%%%
% \subsection{Assumptions}
% %%%%%%%%%%%%%%%%%%%%%%%%%%%%%%%

% {\color{blue}{Collecting the assumptions}
% \begin{enumerate}
%     \item Accuracy of the learned model (ONLY on the expert trajectories)
%     \item as in~\cite[Sec.~2]{pfrommer2022tasil}, we define $\mathcal{D}$ to be a probability measure supported on a \textbf{compact} set $\mathcal{X} \subset \mathbb{R}^n$
% \end{enumerate}}

%%%%%%%%%%%%%%%%%%%%%%%%%%%%%%%
\subsection{Problem Statement}
%%%%%%%%%%%%%%%%%%%%%%%%%%%%%%%

We wish to establish the distribution gap between the nominal (expert) system's trajectories initialized over the distribution $\mathcal{D}$ (equivalently, the uncertain (expert) system, as defined in Sec.~\ref{subsec:TrainingData}), and the uncertain (true) system's trajectories initialized over an arbitrary distribution $\bar{\mathcal{D}}$, and with the \ellone-TaSIL policy $\pi_{ad} = \fdbktasil + \fdbkellone$.
% Here, we have set $\fdbktasil = \pi^\star$.   

To this end, we define the following joint process composed of the nominal (expert) and uncertain (\ellonedrac) systems.
%%%%%%%%%%%%%%%%%%%%%%%%%%%
\begin{definition}[TaSIL - \ellonedrac Error Process]
    We say that $\Yt{t} \in \mathbb{R}^{2n}$, with $t \in [0,T]$, is the joint nominal (expert)---uncertain (\ellone-TaSIL) process---if it is the strong solution to the following:
    \begin{align}
        d\Yt{t} = \Gmu{t,\Yt{t},\Ut{t}}dt + \Gsigma{t,\Yt{t}}d\Wrt{t}, 
        \quad  
        \Yt{0} = \widetilde{\xi} \sim \widetilde{\mathcal{D}},  
    \end{align}
    on $\left(\Omega, \mathcal{F}, \delta_{0_n} \times \Wfilt{0,t}, \mathbb{P}\right)$, where $\widetilde{\xi} = \left(\xi, \bar{\xi}\right)$ and $\widetilde{\mathcal{D}}$ is an arbitrary coupling between the measures $\mathcal{D}$ and $\bar{\mathcal{D}}$ on $\Borel{\mathbb{R}^{2n}}$~\cite[Chpt.~1]{villani2009optimal}, and
    \begin{align*}
        \Yt{t} = \begin{bmatrix} \xstart{t} \\ \Xt{t} \end{bmatrix} \in \mathbb{R}^{2n},
        \quad 
        \Ut{t} =  \begin{bmatrix} \pistar{\xstart{t}} \\ \pi_{ad}\br{\Xt{t}} = \br{\fdbktasil+\fdbkellone}\br{\Xt{t}}  \end{bmatrix} \in \mathbb{R}^{2m}, 
        \quad 
        \Wrt{t} = \begin{bmatrix} 0_{d} \\ \Wt{t} \end{bmatrix} \in \mathbb{R}^{2d},
        \\ 
        \Gmu{t,\Yt{t},\Ut{t}} =
        \begin{bmatrix} 
            \fbar{t,\xstart{t}, \pistar{\xstart{t}}} 
            \\ 
            \Fmu{t,\Xt{t}, \pi_{ad}\br{\Xt{t}} }
        \end{bmatrix} \in \mathbb{R}^{2n},
        \quad  
        \Gsigma{t,\Yt{t}} =
        \begin{bmatrix} 
            0_{n,d} & 0_{n,d} 
            \\ 
            0_{n,d} & \Fsigma{t,\Xt{t}}
        \end{bmatrix} \in \mathbb{R}^{2n \times 2d},
    \end{align*}
\end{definition}

%%%%%%%%%%%%%%%%%%%%%%%%%%%
Next, we define the total imitation gap (TIG) for the imitation learning problem under consideration.
%%%%%%%%%%%%%%%%%%%%%%%%%%%

\begin{definition}\label{def:ImitationGap}
    Given the expert policy $\pi^\star$, we define the \textbf{total imitation gap (TIG)} for any feasible policy $\pi$ as  
    \begin{align}
        \Upsilon_T \left(\widetilde{\mathcal{D}}; \pi  \right)
        = \max_{t \in [0,T]}
        \mathbb{E}_{\widetilde{\xi} \sim \widetilde{\mathcal{D}}}
        \norm{\Xt{t}\left(\bar{\xi}; \pi \right) - \xnomt{t}{\xi; \pi^\star}}
    \end{align}
    where $\widetilde{\xi} = \left(\xi, \bar{\xi}\right)$, and $\widetilde{\mathcal{D}}$ is an arbitrary coupling of the initial measures $\mathcal{D}$ and $\bar{\mathcal{D}}$ on $\Borel{\mathbb{R}^{2n}}$.
\end{definition}
%%%%%%%%%%%%%%%%%%%%%%%%%%%

Note the general nature of the total imitation gap as the TIG quantifies the \textbf{out-of-distribution (OOD)} imitation gap. 
The distribution shift leading to the OOD imitation gap, and not just out-of-sample (within-distribution) imitation gap, arises from the presence of even a single source out of the three mentioned in Sec.~\ref{sec:Introduction}, namely; (i) initial distributional ambiguity ($\mathcal{D} \neq \bar{\mathcal{D}}$), (ii) epistemic uncertainty ($\Lambda_{\cbr{\mu,\sigma}}$), and (iii) aleatoric uncertainty (Brownian motion $\Wt{}$). 
Our problem formulation contains all three sources of distribution shifts, in addition to the policy-shift, and thus the TIG captures a significantly general notion of imitation gap.

\noindent \textbf{Problem Statement:} Given expert demonstrations $\mathcal{S}_n$, as in Definition~\ref{Def:TrainingData},  without any further assumptions on the availability of an expert oracle or a simulator of the true system, we wish to synthesize a feedback policy $\pi_{ad} = \fdbktasil + \fdbkellone$ that ensures the existence of an \emph{a priori} known $\rho \in \mathbb{R}_{>0}$ such that  
    \begin{align}\label{eqn:TIG}
        \Upsilon_T \left(\widetilde{\mathcal{D}}; \pi_{ad} = \fdbktasil + \fdbkellone  \right)
        \leq \rho.
    \end{align}

%%%%%%%%%%%%%%%%%%%%%%%%%%%%%%%%%%%%%%%%%%%%%%%%%%%%%%%%%%%%%%%%%%%%%%%%%%%%%%%%

%%%%%%%%%%%%%%%%%%%%%%%%%%%%%%%%%%%%%%%%%%%%%%%%%%%%%%%%%%%%%%%%%%%%%%%%%%%%%%%%
\section{Methodology: Layered Robustness via Control Composition}\label{sec:Methodology}
%%%%%%%%%%%%%%%%%%%%%%%%%%%%%%%%%%%%%%%%%%%%%%%%%%%%%%%%%%%%%%%%%%%%%%%%%%%%%%%%

%%%%%%%%%%%%%%%%%%%%%%%%%%%%%%%%%%%%%%%%%%%%%%%%%%%%%%%%%%%%%
\begin{figure}[h]
    \centering
    \begin{subfigure}[b]{0.30\textwidth}
        \centering
        \includegraphics[width=\textwidth]{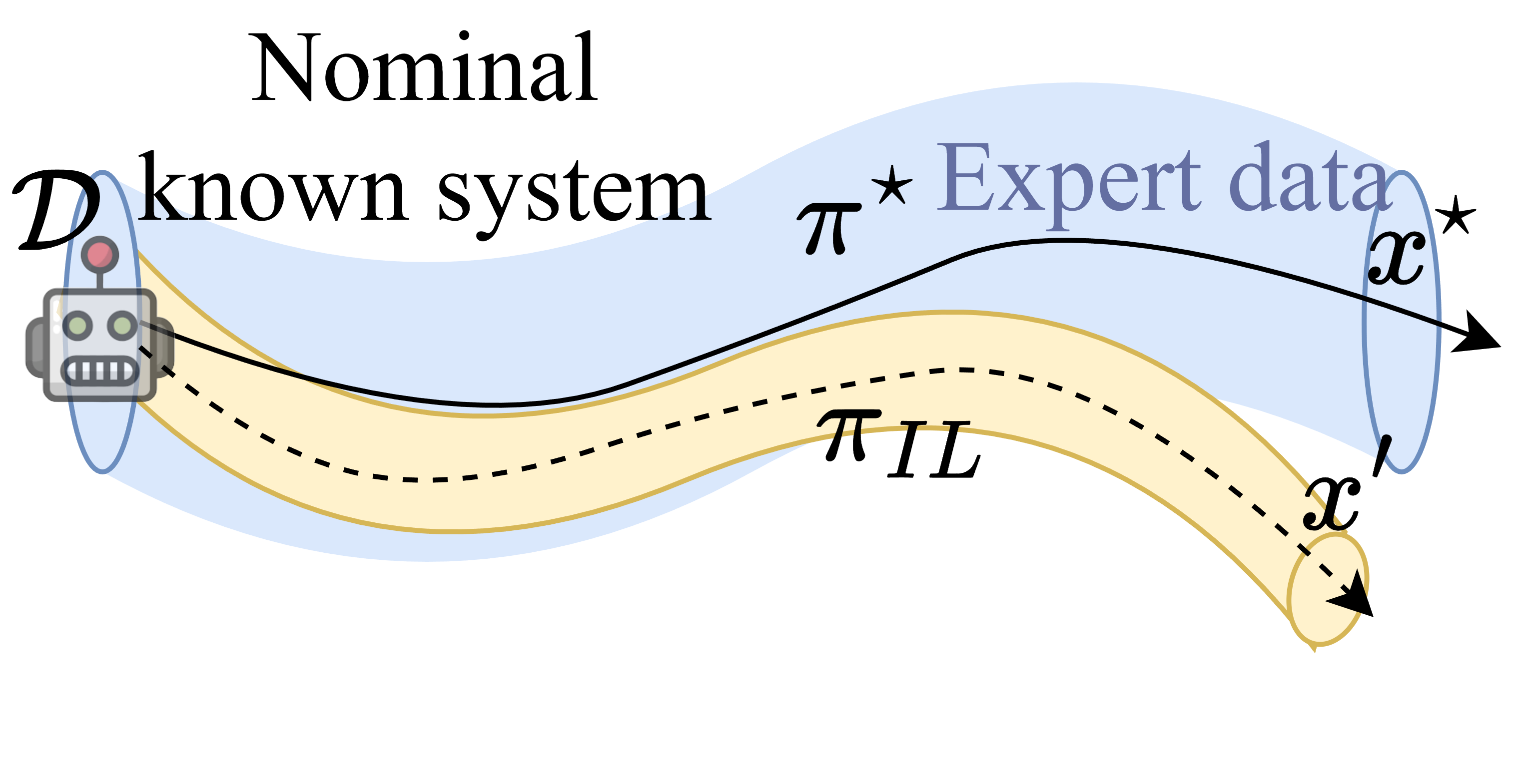}
        \caption{Policy-induced shift}
        \label{fig:dist-shift-policy}
    \end{subfigure}
    \begin{subfigure}[b]{0.30\textwidth}
        \centering
        \includegraphics[width=\textwidth]{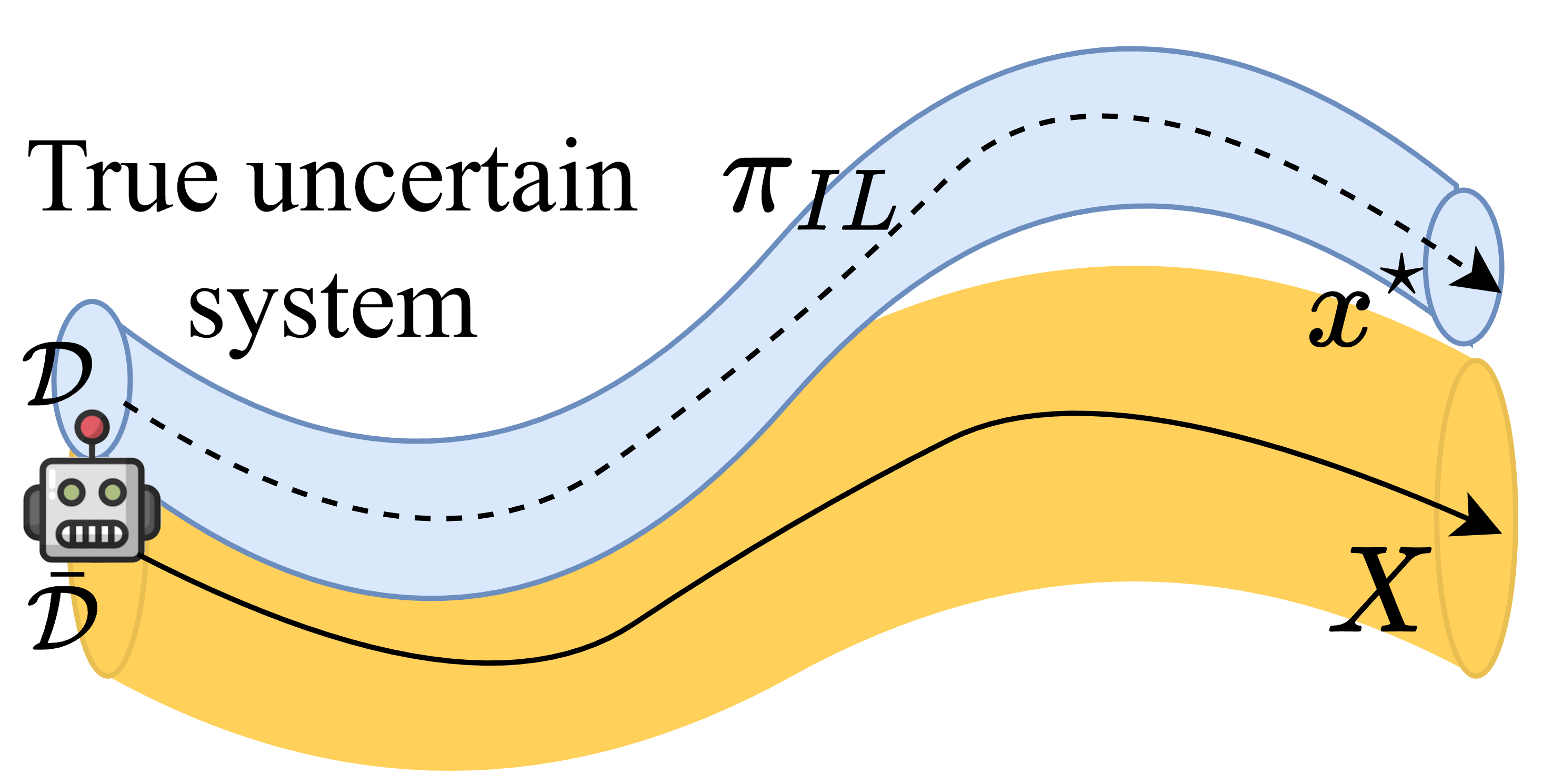}
        \caption{Uncertainty-induced shift}
        \label{fig:dist-shift-uncertainty}
    \end{subfigure}
    \begin{subfigure}[b]{0.30\textwidth}
        \centering
        \includegraphics[width=\textwidth]{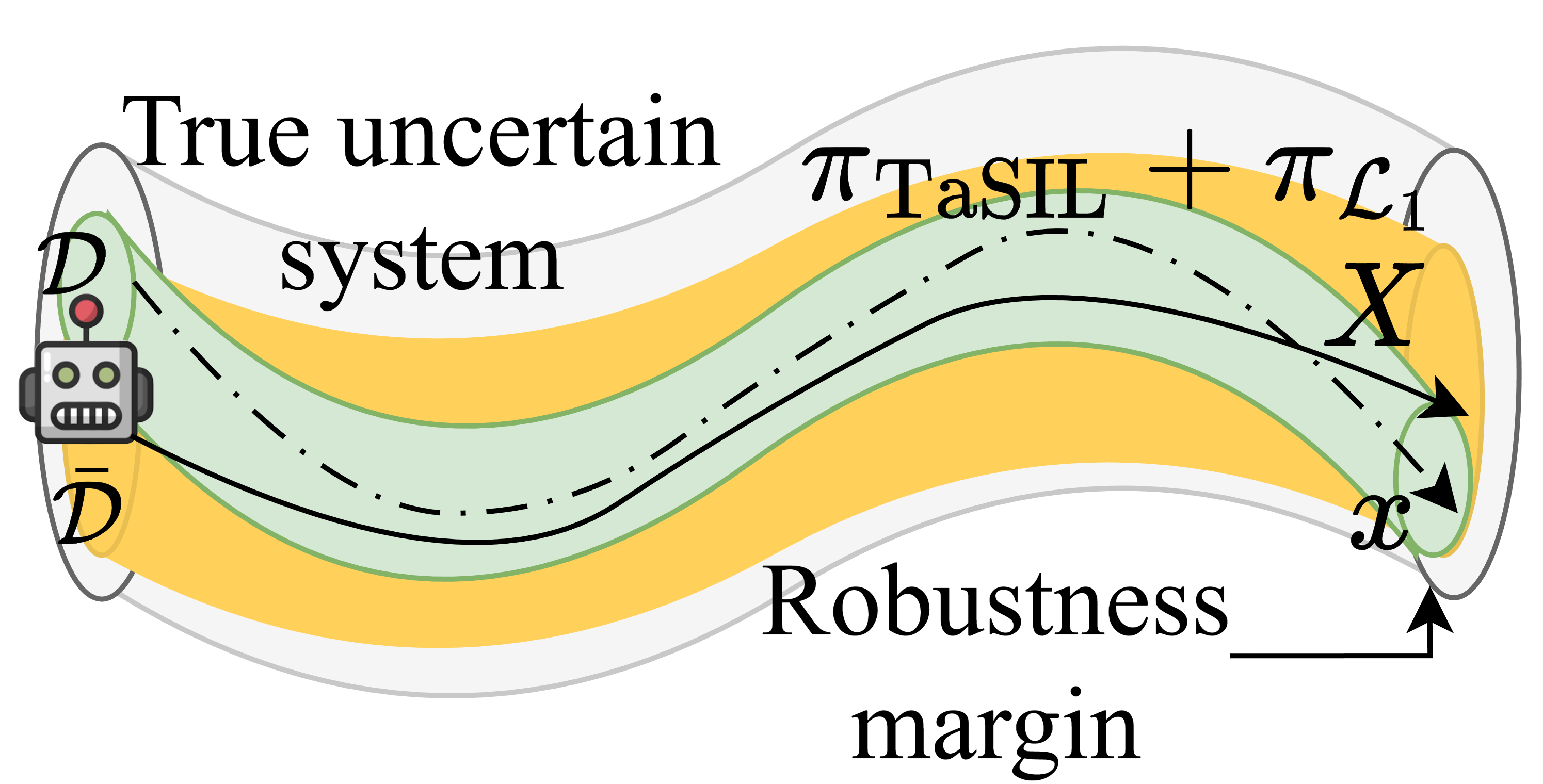}
        \caption{Decoupled approach (DRIP)}
        \label{fig:dist-shift-drip}
    \end{subfigure}
    \caption{
    Distribution shift sources and mitigation. (\subref{fig:dist-shift-policy})~Distribution shift from learned policy deviating from expert. (\subref{fig:dist-shift-uncertainty})~Distribution shift from epistemic and aleatoric uncertainties. (\subref{fig:dist-shift-drip})~Unified robustness via TaSIL + \elloneDRAC layered architecture.}
    \label{fig:decoupling}
\end{figure}
%%%%%%%%%%%%%%%%%%%%%%%%%%%%%%%%%%%%%%%%%%%%%%%%%%%%%%%%%%%%%%

We propose an LCA that integrates TaSIL and $\mathcal{L}_1$-DRAC to quantify certifiable robustness of the overall feedback operator. The proposed architecture is illustrated in Fig.~\ref{fig:LCA_TaSIL_DRAC}, which shows i) TaSIL, serving as a mid-layer planner, and ii) \elloneDRAC feedback, serving as a low-level controller. 
Although the reference commands from the mid-level TaSIL to the low-level \elloneDRAC is fairly standard, our approach also enables the converse low-level to mid-level communication in terms of  guaranteed distributional ambiguity sets. 
This bi-level communication is a key feature the enables the certifiability of the entire feedback pipeline. 

This layered, certifiably robust approach offers several benefits:
\begin{itemize}
    \item \textbf{Modularity:} Each layer can be independently designed and analyzed, promoting re-usability across tasks and platforms;
    \item \textbf{Compositional Guarantees:} Certificates from TaSIL and $\mathcal{L}_1$-DRAC can be integrated with learning-enabled components (e.g., perception modules), enabling system-wide safety and performance assurance;
    \item \textbf{Scalability and Extensibility:} The architecture supports the inclusion of high-level symbolic reasoning, perception modules, and additional learning components.
\end{itemize}

We also illustrate the high-level motivation in Fig.~\ref{fig:HighLevel}. 
While TaSIL accounts for the policy-induced distribution shifts for the learned (nominal) system, \elloneDRAC accounts for the distribution shifts due to the modeling discrepancy itself. 
Thus, by unifying the individual components we enable certifiable robustness to the wide class of sources that can lead to distribution shifts. 

%%%%%%%%%%%%%%%%%%%%%%%%%%%%%%%%%%%%%%%%%%%%%%%%%%%%%%%%%%%%
\subsection{Decoupling the Policy and Uncertainties}\label{sec:decouple_distribution_shifts}
%%%%%%%%%%%%%%%%%%%%%%%%%%%%%%%%%%%%%%%%%%%%%%%%%%%%%%%%%%%%

Our approach relies on the \textbf{decoupling} of the TIG into individual components where one ensures robustness against policy-induced distribution shifts and the other against uncertainties. 
For any admissible policy $\hat{\pi}$, we apply the Minkowski's inequality to the total imitation gap in Definition~\ref{def:ImitationGap} upon adding and subtracting $\xnomt{t}{\xi; \hat{\pi}}$:
\begin{multline}\label{eqn:Decomposition:IG:1}
    \Upsilon_T \left(\widetilde{\mathcal{D}}; \pi  \right)
        = 
        \max_{t \in [0,T]}
        \mathbb{E}_{\widetilde{\xi} \sim \widetilde{\mathcal{D}}}
        \norm{\Xt{t}\left(\bar{\xi}; \pi \right) - \xnomt{t}{\xi; \pi^\star}}
        \\
        \leq 
        \max_{t \in [0,T]}
        \mathbb{E}_{\widetilde{\xi} \sim \widetilde{\mathcal{D}}}
        \norm{\xnomt{t}{\xi; \hat{\pi}} - \xnomt{t}{\xi; \pi^\star}}
        + 
        \max_{t \in [0,T]}
        \mathbb{E}_{\widetilde{\xi} \sim \widetilde{\mathcal{D}}}
        \norm{\Xt{t}\left(\bar{\xi}; \pi \right) - \xnomt{t}{\xi; \hat{\pi}}}.
\end{multline}
Since $\widetilde{\mathcal{D}}: \Borel{\mathbb{R}^{2n}} \rightarrow [0,1]$ is a coupling of $\mathcal{D}$ and $\bar{\mathcal{D}}$, and the first expectation on the right hand side above is a function of only $\xi \sim \mathcal{D}$, we can use the marginalization property of couplings~\cite[Chpt.~1]{villani2009optimal} to conclude that  
\begin{multline*}
    \mathbb{E}_{\widetilde{\xi} \sim \widetilde{\mathcal{D}}}
        \norm{\xnomt{t}{\xi; \hat{\pi}} - \xnomt{t}{\xi; \pi^\star}}
    = 
    \int_{\mathbb{R}^{n}}
    \int_{\mathbb{R}^{n}}    
        \norm{\xnomt{t}{\xi; \hat{\pi}} - \xnomt{t}{\xi; \pi^\star}}
    \widetilde{\mathcal{D}}\br{d\xi,d\bar{\xi} \, }
    \\
    = 
    \int_{\mathbb{R}^{n}}    
        \norm{\xnomt{t}{\xi; \hat{\pi}} - \xnomt{t}{\xi; \pi^\star}}
    \mathcal{D}\br{d\xi}
    = 
    \mathbb{E}_{\xi \sim \mathcal{D}}
        \norm{\xnomt{t}{\xi; \hat{\pi}} - \xnomt{t}{\xi; \pi^\star}}.
\end{multline*}
Substituting into~\eqref{eqn:Decomposition:IG:1} leads to the following decomposition of the \textbf{total imitation gap (TIG)}:
\begin{align}\label{eqn:Decomposition:IG:Final}
    \Upsilon_T \left(\widetilde{\mathcal{D}}; \pi  \right)
    \leq 
    \widehat{\Upsilon}_T \left(\mathcal{D}; \hat{\pi}  \right)
    +
    \widetilde{\Upsilon}_T \left(\widetilde{\mathcal{D}}; \pi, \hat{\pi}  \right).
\end{align}
where the \textbf{\emph{policy shift induced imitation gap (policy-IG)} $\widehat{\Upsilon}_T \left(\mathcal{D}; \hat{\pi}  \right)$} and the \textbf{\emph{uncertainty induced imitation gap (uncertainty-IG)} $\widetilde{\Upsilon}_T \left(\widetilde{\mathcal{D}}; \pi, \hat{\pi}  \right)$} are respectively defined as:
\begin{subequations}\label{eqn:Decomposition:IG:Components}
    \begin{align}
        \widehat{\Upsilon}_T \left(\mathcal{D}; \hat{\pi}  \right)
        =
        \max_{t \in [0,T]}
        \mathbb{E}_{\xi \sim \mathcal{D}}
        \norm{\xnomt{t}{\xi; \hat{\pi}} - \xnomt{t}{\xi; \pi^\star}},&
        \quad \text{(policy-IG)}
        \label{eqn:Decomposition:IG:Policy}
        \\
        \widetilde{\Upsilon}_T \left(\widetilde{\mathcal{D}}; \pi, \hat{\pi}  \right)
        = 
        \max_{t \in [0,T]}
        \mathbb{E}_{\widetilde{\xi} \sim \widetilde{\mathcal{D}}}
        \norm{\Xt{t}\left(\bar{\xi}; \pi \right) - \xnomt{t}{\xi; \hat{\pi}}}.&
        \quad \text{(uncertainty-IG)}
        \label{eqn:Decomposition:IG:Uncertainty}
    \end{align}
\end{subequations}
Enabled by the decoupled formulation, we bound the individual components in~\eqref{eqn:Decomposition:IG:Components} using TaSIL and \elloneDRAC methodologies, and in turn establish the robustness guarantees for the overall \ellone-TaSIL feedback policy $\pi_{ad} = \fdbktasil + \fdbkellone$ given by:
\begin{align*}
    \Upsilon_T \left(\widetilde{\mathcal{D}}; \pi_{ad}  \right)
    \leq 
    \widehat{\Upsilon}_T \left(\mathcal{D}; \fdbktasil  \right)
    +
    \widetilde{\Upsilon}_T \left(\widetilde{\mathcal{D}}; \pi_{ad}, \fdbktasil  \right), 
    \quad 
    \pi_{ad} = \fdbktasil + \fdbkellone.
\end{align*}
For our analysis, we assume that the nominal system is contracting under the expert policy. 
To do so, we need the following definition of the the \emph{induced logarithmic norm} of $A \in \mathbb{R}^{n \times n}$:
    \begin{align*}
        \mu\left(A\right) = \lim_{h \rightarrow 0^+} \frac{\norm{\mathbb{I}_n + hA} - 1}{h}.        
    \end{align*}
The induced logarithmic norm (\emph{log norm} for brevity) is guaranteed to exist and can be interpreted as the derivative of $e^{At}$ in the direction of $A$ and evaluated at $\mathbb{I}_n$.
Refer to~\cite[Sec.~2.3]{bulloContractionBook} for further details.  
%%%%%%%%%%%%%%%
\begin{assumption}[Nominal (Expert) System Stability]\label{assmp:ILF}
    There exists a known $\lambda \in \mathbb{R}_{>0}$ such that 
    \begin{align}\label{eqn:ILFConditions} 
        \mu\left(
           \nabla_\zeta \fbar{t, \zeta, \pistar{\zeta}}  
        \right)
        \leq -\lambda, \quad \forall (t,\zeta) \in \mathbb{R}_{\geq 0} \times \mathbb{R}^n,
    \end{align}
    where $\bar{f}$ and $\pi^\star$ are introduced in Definitions~\ref{def:VectorFields} and~\ref{def:MainSystems}, respectively.
\end{assumption}
%%%%%%%%%%%%%%%
The condition in~\eqref{eqn:ILFConditions} allows one to use contraction theory to conclude the \emph{incremental exponential stability (IES)} of the nominal (expert) trajectories $\xnomt{t}{\xi; \pi^\star} $~\cite[Thm.~3.9]{bulloContractionBook}.

%%%%%%%%%%%%%%%%%%%%%%%%%%%%%%%%%%%%%%%%%%%%%%%%%%%%%%%%%%%%
\subsection{Bounding the Policy Shift Induced Imitation Gap}\label{sec:PolicyIG}
%%%%%%%%%%%%%%%%%%%%%%%%%%%%%%%%%%%%%%%%%%%%%%%%%%%%%%%%%%%%

We establish the bounds on the policy shift induced imitation gap (policy-IG), defined in~\eqref{eqn:Decomposition:IG:Policy}, using our previously developed approach in~\cite{pfrommer2022tasil}. 
We begin by considering a single expert trajectory, initialized at some \textbf{fixed} $\xi_0 \in \widehat{\mathcal{D}}_n$, over which the policy-IG for any admissible $\hat{\pi}$ can be written as   
\begin{align}\label{eqn:policy-IG:trajectory}
    \widehat{\Upsilon}_T \left(\xi_0; \hat{\pi}  \right)
    =
    \max_{t \in [0,T]}
    \norm{\xnomt{t}{\xi_0; \hat{\pi}} - \xnomt{t}{\xi_0; \pi^\star}}.
\end{align}
To analyze the trajectory-level policy-IG above, we make use of the contraction property of the nominal (expert) system in Assumption~\ref{assmp:ILF}.
We have the following straightforward result.
%%%%%%%%%%%%%%%
\begin{proposition}\label{prop:deltaISS}
    Under Assumption~\ref{assmp:ILF}, the nominal (expert) system~\eqref{eqn:ExpertProcess} is incrementally input-to-state stable ($\delta$-ISS)~\cite{angeli2002lyapunov}. 
    In particular, for any $\xi_1, \xi_2 \in \mathbb{R}^n$ and $\varsigma \in \mathcal{C}\left([0,T];\mathbb{R}^m\right)$ 
    \begin{align*}
        \norm{
            x_t^{\pi^\star}\left(\xi_1;\varsigma_t\right)
            -
            x_t^{\pi^\star}\left(\xi_2;0_m\right)
        }
        \leq
        e^{-\lambda_\theta t/2 } \norm{\xi_1 - \xi_2}
        + 
        \frac{1}{\sqrt{\theta \lambda_\theta}} 
        \left( 1 - e^{-\lambda_\theta t} \right)^\frac{1}{2}
        \sup_{s \in [0,T]} \norm{\varsigma_s},
        \quad \forall t \in [0,T],
    \end{align*}
     where $\lambda_\theta = 2 \lambda - \theta \Delta_g^2$, $\theta \in \left(0, 2\lambda / \Delta_g^2 \right)$, and $y_t^\star = x_t^{\pi^\star}\left(\xi_1;\varsigma_t\right)$ is the solution to the the input-perturbed nominal (expert)  
    \begin{align*}
        y_t^\star 
        = 
        \fbar{t,y_t^\star, \pistar{y_t^\star} + \varsigma_t},
        \quad y_0^\star = \xi_1. 
    \end{align*}
\end{proposition}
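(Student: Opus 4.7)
The plan is to treat the proposition as a standard incremental input-to-state stability result obtained by combining the log-norm contraction of the closed-loop nominal expert system with a Young-type inequality to absorb the actuation perturbation. Define the error $e_t = y_t^\star - x_t^{\pi^\star}\br{\xi_2;0_m}$ and note that, since $\bar{f}(t,a,b) = f(t,a) + g(t)b$ is affine in $b$, the perturbed dynamics split cleanly as
\begin{equation*}
    \dot{e}_t
    =
    \bar{f}\br{t,y_t^\star,\pistar{y_t^\star}} - \bar{f}\br{t,x_t^{\pi^\star},\pistar{x_t^{\pi^\star}}} + g(t)\varsigma_t,
\end{equation*}
with $e_0 = \xi_1 - \xi_2$. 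The first difference will be controlled via Assumption~\ref{assmp:ILF}, while the last term is the exogenous perturbation that Young's inequality will tame.

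The next step is to differentiate $V_t \doteq \norm{e_t}^2$ and invoke the log-norm condition. Writing the drift difference as an integral of Jacobians, $\int_0^1 \nabla_\zeta \bar{f}\br{t,\zeta,\pistar{\zeta}}\big|_{\zeta = x_t + s e_t}\, e_t\, ds$, and using the fact that the log norm in the Euclidean sense equals the largest eigenvalue of the symmetric part of the Jacobian, Assumption~\ref{assmp:ILF} yields $e_t^\top \br{\bar{f}\br{t,y_t^\star,\pistar{y_t^\star}} - \bar{f}\br{t,x_t^{\pi^\star},\pistar{x_t^{\pi^\star}}}} \leq -\lambda \norm{e_t}^2$. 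Combining this with $2 e_t^\top g(t) \varsigma_t \leq \theta \Delta_g^2 \norm{e_t}^2 + \theta^{-1} \norm{\varsigma_t}^2$ (Young's inequality with the parameter $\theta$ chosen to match the target $\lambda_\theta$) and the bound $\norm{g(t)}_F \leq \Delta_g$ from Assumption~\ref{assmp:VectorFields}, I obtain the scalar differential inequality
\begin{equation*}
    \dot{V}_t \leq -\lambda_\theta V_t + \frac{1}{\theta}\, \norm{\varsigma_t}^2,
    \quad \lambda_\theta = 2\lambda - \theta \Delta_g^2.
\end{equation*}

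The constraint $\theta \in (0, 2\lambda/\Delta_g^2)$ ensures $\lambda_\theta > 0$, so the comparison lemma (or direct integration of the integrating factor $e^{\lambda_\theta t}$) gives $V_t \leq e^{-\lambda_\theta t} V_0 + \br{\theta \lambda_\theta}^{-1} \br{1 - e^{-\lambda_\theta t}} \sup_{s \in [0,T]} \norm{\varsigma_s}^2$. Taking square roots and using $\sqrt{a+b} \leq \sqrt{a} + \sqrt{b}$ for $a,b \geq 0$ yields exactly the claimed bound. The only subtle point, and the main potential obstacle, is the justification that the induced log norm inequality $\mu(\nabla_\zeta \bar{f}(t,\zeta,\pistar{\zeta})) \leq -\lambda$ translates to the pointwise quadratic form bound $e^\top \nabla_\zeta \bar{f}\, e \leq -\lambda \norm{e}^2$ uniformly along the integral path joining $x_t^{\pi^\star}$ and $y_t^\star$; this follows because the log norm here is understood in the Euclidean norm and the bound in Assumption~\ref{assmp:ILF} is global in $\zeta$. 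Everything else reduces to the Young/Gronwall calculation sketched above, and the final exponential-plus-ISS form of the estimate is then immediate.
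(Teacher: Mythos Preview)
Your proposal is correct and follows essentially the same approach as the paper's proof: define $V_t=\norm{e_t}^2$, use the log-norm contraction from Assumption~\ref{assmp:ILF} to bound the drift difference by $-2\lambda V_t$, apply Young's inequality with parameter $\theta$ to the $g(t)\varsigma_t$ term, and then invoke the comparison lemma followed by sub-additivity of the square root. The only cosmetic difference is that you spell out the mean-value-theorem/Jacobian-integral justification for the one-sided Lipschitz bound, whereas the paper cites an external result (\cite[Thm.~29]{DavydovJafarpourBullo2022}) for that step.
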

%%%%%%%%%%%%%%%%
%%%%%%%%%%%%%%%%
\begin{proof}
    Consider the Lyapunov functional $V(t)=e(t)^\top e(t)$, where 
    \begin{align*}
        e_t = y_t^\star - x_t^\star = x_t^{\pi^\star}\left(\xi_1;\varsigma_t\right)-x_t^{\pi^\star}\left(\xi_2;0_m\right), 
        \quad 
        \xstart{t} =x_t^{\pi^\star}\left(\xi_2;0_m\right) = x_t\left(\xi_2;\pi^\star\right).
    \end{align*} 
    It then follows that  
    \begin{multline}\label{eqn:Prop1:1}
        \dot{V}(t)
        = 
        2 e(t)^\top 
        \left[ 
            \fbar{t,y_t^\star, \pistar{y_t^\star} + \varsigma_t} - \fbar{t, \xstart{t}, \pistar{\xstart{t}}} 
        \right]
        \\
        =
        2 e(t)^\top 
        \left[ 
            \fbar{t,y_t^\star, \pistar{y_t^\star}} - \fbar{t, \xstart{t}, \pistar{\xstart{t}}} 
        \right]
        +
        2 e(t)^\top 
            g(t) \varsigma_t. 
    \end{multline}
    Due to the contraction property in Assumption~\ref{assmp:ILF}, we invoke~\cite[Thm.~29]{DavydovJafarpourBullo2022} to conclude that  
    \begin{multline*}
        e(t)^\top 
        \left[ 
            \fbar{t,y_t^\star, \pistar{y_t^\star}} - \fbar{t, \xstart{t}, \pistar{\xstart{t}}} 
        \right]
        \\
        =
        \left(y_t^\star - x_t^\star\right)^\top 
        \left[ 
            \fbar{t,y_t^\star, \pistar{y_t^\star}} - \fbar{t, \xstart{t}, \pistar{\xstart{t}}} 
        \right]
        \leq 
        -\lambda \norm{y_t^\star - x_t}^2 = -\lambda V(t).
    \end{multline*}
    Substituting the above into~\eqref{eqn:Prop1:1} yields
    \begin{align*}
        \dot{V}(t) 
        \leq
        - 2 \lambda V(t) 
        +
        2 e(t)^\top 
            g(t) \varsigma_t
        \overset{(\star)}{\leq}
        - \lambda_\theta  V(t) + \frac{1}{\theta}  \norm{\varsigma_t}^2 
        % \\
        \leq
        -\lambda_\theta V(t) + \frac{1}{\theta}  \sup_{s \in [0,T]}\norm{\varsigma_s}^2, \quad \forall t \in [0,T],
    \end{align*}
    where $(\star)$ is due to following bound which results from Assumption~\ref{assmp:VectorFields} and the Young's inequality for $\theta \in \mathbb{R}_{>0}$:
    \begin{align*}
        \norm{e(t)^\top g(t) \varsigma_t} 
        \leq 
        \norm{e(t)} \Frobenius{g(t)} \norm{\varsigma_t}
        \leq 
        \Delta_g \norm{e(t)}  \norm{\varsigma_t}
        \leq 
        \frac{\theta}{2} \Delta_g^2 \norm{e(t)}^2 + \frac{1}{2\theta}  \norm{\varsigma_t}^2.
    \end{align*} 
    Since $\lambda_\theta = 2 \lambda - \theta \Delta_g^2 > 0$, we apply the special case of the comparison lemma~\cite{Khalil2002NonlinearSystems} in~\cite[Lem.~2.1]{TsukamotoChungSlotine2021} to conclude that  
    \begin{align*}
        V(t) \leq 
        e^{-\lambda_\theta t} V(0)
        + 
        \frac{1}{\theta \lambda_\theta} 
        \left( 1 - e^{-\lambda_\theta t} \right)
        \sup_{s \in [0,T]}\norm{\varsigma_s}^2, \quad \forall t \in [0,T].
    \end{align*}
    The proof is then concluded by using the definition $V(t) = \norm{e(t)}^2$ and the sub-additivity of the square root operator. 
\end{proof}
%%%%%%%%%%%%%%%%
The result in Proposition~\ref{prop:deltaISS} further translates to the discrete-time setting, for which, we note that we can write
\begin{align}\label{eqn:DiscreteNominal}
    x_{t+\Delta T} 
    = 
    \Phi_{\Delta T}\br{t, x_t, \pi^\star(x_t)}
    = 
    x_t + \int_{t}^{t+\Delta T} \fbar{s, x_s, \pi^\star(x_s)} ds
    , \quad x_0 = \xi_0, \quad t \in \partition{k}{T}/\{T\}.
\end{align}
where $\Phi_{\Delta T}:\mathbb{R}_{>0} \times \mathbb{R}^n \times \mathbb{R}^m \rightarrow \mathbb{R}^n$ is the flow map of the nominal system, and the partition $\partition{k}{T}$ of the interval $[0,T]$ is defined in~\eqref{eqn:HorizonPartition}.
As is evident , the discrete-time nominal system in~\eqref{eqn:DiscreteNominal} is simply the samples of the continuous-time nominal trajectory at the elements of the partition $\partition{k}{T}/\{T\}$.
Then, if we assume that $\varsigma_t$ is a piecewise constant signal over the partition $\partition{k}{T}$, i.e., $\varsigma_t = \varsigma_{t_i}$, for all $t \in [t_i, t_{i+1})$, $i \in \{0, \ldots, k-1\}$, we have the following corollary to Proposition~\ref{prop:deltaISS}, which we state without proof. 
%%%%%%%%%%%%%%%%
\begin{corollary}\label{corol:DiscreteDeltaISS}
    Under Assumption~\ref{assmp:ILF}, the discrete-time nominal (expert) system~\eqref{eqn:ExpertProcess} satisfies
    \begin{align*}
        \norm{
            x_t^{\pi^\star}\left(\xi_1;\varsigma_t\right)
            -
            x_t^{\pi^\star}\left(\xi_2;0_m\right)
        }
        \leq
        e^{-\lambda_\theta t/2 } \norm{\xi_1 - \xi_2}
        + 
        \frac{1}{\sqrt{\theta \lambda_\theta}} 
        \left( 1 - e^{-\lambda_\theta t} \right)^\frac{1}{2}
        \max_{s \in \partition{k}{T}/\{T\}} \norm{\varsigma_s},
    \end{align*}
    for all $t \in \partition{k}{T}/\{T\}$, and for any $\xi_1, \xi_2 \in \mathbb{R}^n$ and $\cbr{\varsigma_{t_i} \in \mathbb{R}^m}_{i=0}^{k-1}$, where $\lambda_\theta, \theta \in \mathbb{R}_{>0}$ are defined in Proposition~\ref{prop:deltaISS}.    
\end{corollary}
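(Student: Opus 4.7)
The plan is to reduce the discrete-time bound to a direct corollary of the continuous-time result in Proposition~\ref{prop:deltaISS} by exploiting the fact that the discrete samples $x_t^{\pi^\star}(\xi;\varsigma)$ at $t \in \partition{k}{T}$ coincide with the continuous-time trajectory of the input-perturbed nominal (expert) system evaluated at those same times.

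First, I would formalize the identification between the discrete samples and the continuous trajectory. Since $\varsigma$ is specified as a collection $\cbr{\varsigma_{t_i}}_{i=0}^{k-1}$, I would promote it to a continuous-time signal $\tilde{\varsigma}: [0,T] \to \mathbb{R}^m$ via the zero-order hold $\tilde{\varsigma}_t \doteq \varsigma_{t_i}$ for $t \in [t_i, t_{i+1})$, and extended arbitrarily (e.g., $\tilde\varsigma_T \doteq \varsigma_{t_{k-1}}$). The map $\tilde\varsigma$ is bounded and measurable, and standard ODE existence results give a well-defined continuous-time solution $y_t^\star = x_t^{\pi^\star}(\xi_1;\tilde{\varsigma}_t)$ to the input-perturbed ODE in Proposition~\ref{prop:deltaISS}. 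By the definition of the flow map $\Phi_{\Delta T}$ in~\eqref{eqn:DiscreteNominal}, the discrete update coincides with integration of this input-perturbed ODE over each subinterval with the constant hold $\varsigma_{t_i}$, so the discrete-time trajectory samples the continuous trajectory at partition points.

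Second, I would strengthen Proposition~\ref{prop:deltaISS} minimally to accommodate a bounded, piecewise-constant (hence only measurable) input rather than a $\mathcal{C}([0,T];\mathbb{R}^m)$ input. Inspecting the proof of Proposition~\ref{prop:deltaISS}, the only use of continuity is through the Cauchy--Schwarz bound $|e(t)^\top g(t) \tilde\varsigma_t| \leq \Delta_g \norm{e(t)}\norm{\tilde\varsigma_t}$ and the comparison lemma applied to $\dot V(t)$. Both steps remain valid when $\tilde\varsigma$ is merely bounded and measurable, since the right-hand side remains absolutely integrable and Carathéodory solutions satisfy the same differential inequality almost everywhere. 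The comparison lemma in~\cite[Lem.~2.1]{TsukamotoChungSlotine2021} then yields the same exponential bound with $\sup_{s \in [0,T]}\norm{\tilde\varsigma_s}$.

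Finally, I would restrict the continuous-time bound to $t \in \partition{k}{T}/\{T\}$ and observe that, by construction of the zero-order hold, $\sup_{s \in [0,T]} \norm{\tilde{\varsigma}_s} = \max_{i \in \{0,\dots,k-1\}} \norm{\varsigma_{t_i}} = \max_{s \in \partition{k}{T}/\{T\}} \norm{\varsigma_s}$. Substituting into the bound of Proposition~\ref{prop:deltaISS} yields the claimed discrete-time inequality. The only potential obstacle is the regularity extension to measurable inputs noted above, but this is a routine relaxation that does not affect any of the energy estimates; no new inequalities are required beyond those already established for Proposition~\ref{prop:deltaISS}.
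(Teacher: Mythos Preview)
Your proposal is correct and matches the paper's intended approach exactly: the paper states the corollary without proof, merely remarking that it follows from Proposition~\ref{prop:deltaISS} once $\varsigma$ is taken to be the piecewise-constant (zero-order-hold) signal over $\partition{k}{T}$. Your additional care in noting that the proof of Proposition~\ref{prop:deltaISS} goes through verbatim for bounded measurable inputs (rather than $\mathcal{C}([0,T];\mathbb{R}^m)$) is the only step the paper leaves implicit, and it is indeed routine.
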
    
%%%%%%%%%%%%%%%% 
Now, under the test policy $\hat{\pi}$, the trajectory $x^{\hat{\pi}}_t(\xi) = \xnomt{t}{\xi; \hat{\pi}}$ satisfies 
\begin{align*} 
    x^{\hat{\pi}}_{t+\Delta T}(\xi)
    = 
    \Phi_{\Delta T}\br{t, x^{\hat{\pi}}_t(\xi) , \hat{\pi}\left(x^{\hat{\pi}}_t(\xi)\right)}
    = 
    \Phi_{\Delta T}\br{t, x^{\hat{\pi}}_t(\xi) , \pi^\star \left(x^{\hat{\pi}}_t(\xi)  \right) + \Theta^{\hat{\pi}}_t(\xi,\hat{\pi})}, \quad t \in \partition{k}{T}/\{T\},
\end{align*}
where 
\begin{align}\label{eqn:InputPerturbation}
    \Theta^{\hat{\pi}}_t(\xi,\hat{\pi}) = \hat{\pi}\left(x^{\hat{\pi}}_t(\xi)\right) - \pi^\star \left(x^{\hat{\pi}}_t(\xi)  \right) \in \mathbb{R}^m, 
    \quad \quad t \in \partition{k}{T}/\{T\}.
\end{align}
Using the notation introduced in Proposition~\ref{prop:deltaISS}, the previous expression implies that
\begin{align*}
    \xnomt{t}{\xi; \hat{\pi}} = x_t^{\pi^\star}\left(\xi; \Theta^{\hat{\pi}}_t(\xi,\hat{\pi})\right),
\end{align*} 
which in turn allows us to write the policy-IG in~\eqref{eqn:policy-IG:trajectory} as
\begin{align}\label{eqn:policy-IG:trajectory:Perturbed}
    \widehat{\Upsilon}_T \left(\xi; \hat{\pi}  \right)
    =
    \max_{t \in \partition{k}{T}/\{T\}}
    \norm{x_t^{\pi^\star}\left(\xi; \Theta^{\hat{\pi}}_t(\xi,\hat{\pi})\right) - x_t^{\pi^\star}\left(\xi; 0_m \right)  },
\end{align}
where we have used the definition $\xnomt{t}{\xi; \pi^\star} = x_t^{\pi^\star}\left(\xi; 0_m \right)$.
%%%%%%%%%%%%%%%%
\begin{remark}
    Compared to the definition in~\eqref{eqn:policy-IG:trajectory}, in the re-written expression for the policy-IG above, the maximum is taken over the discrete-time partition $\partition{k}{T}/\{T\} = \cbr{0,t_1, \ldots, t_{k-1}}$. 
    The discrete-time version can be developed into the continuous-time bound via Lipschitz continuity arguments of the feedback policies. 
    Additionally, the subsequent analysis holds true for the continuous-time setting \emph{mutatis mutandis}, given that the $\delta$-ISS property in Proposition~\ref{prop:deltaISS} is originally developed for the continuous-time setting.
    Since this is a technical detail, we omit it w.l.o.g for brevity. 
\end{remark}
%%%%%%%%%%%%%%%%
Now, it follows from Corollary~\ref{corol:DiscreteDeltaISS} that 
\begin{align}
    \norm{x_t^{\pi^\star}\left(\xi; \Theta^{\hat{\pi}}_t(\xi,\hat{\pi})\right) - x_t^{\pi^\star}\left(\xi; 0_m \right)  }
    \leq
    \frac{1}{\sqrt{\kappa \lambda_\kappa}} 
    \left( 1 - e^{-\lambda_\kappa t} \right)^\frac{1}{2}
    \max_{s \in \partition{k}{T}/\{T\}} \norm{\Theta^{\hat{\pi}}_s(\xi,\hat{\pi})},
\end{align}
for all $t \in \partition{k}{T}/\{T\}$, which leads us to the conclusion in our previous work~\cite{pfrommer2022tasil} that the trajectory-level policy-IG is controlled by the maximum deviation between the test policy $\hat{\pi}$ and the expert policy $\pi^\star$ along the trajectory generated by $\hat{\pi}$. 

We place the following regularity assumption on the policies up to their first-order derivatives.  
%%%%%%%%%%%%%%%%%%%%%%%%%%%%%%%%%
\begin{assumption}\label{assmp:PolicyRegularity}
    There exist known $L_\pi, L_{\partial \pi} \in \mathbb{R}_{>0}$ such that for $\bar{\pi} \in \cbr{\hat{\pi},\pi^\star}$ 
    \begin{subequations}
        \begin{align}
           \norm{\bar{\pi}(\zeta_1) - \bar{\pi}(\zeta_2)} \leq L_\pi \norm{\zeta_1 - \zeta_2},  \quad \forall \left(\zeta_1, \zeta_2\right) \in \mathbb{R}^n \times \mathbb{R}^n,
           \\
           \norm{\bar{\pi}(\zeta) - \left(\bar{\pi}(\zeta_0)^\top + \nabla \bar{\pi}(\zeta_0)^\top (\zeta - \zeta_0)\right)} 
           \leq   
           \frac{L_{\partial \pi}}{2} \norm{\zeta - \zeta_0}^2, \quad \forall \left(\zeta, \zeta_0\right) \in \mathbb{R}^n \times \mathbb{R}^n.
        \end{align}
    \end{subequations}
\end{assumption}
%%%%%%%%%%%%%%%%%%%%%%%%%%%%%%%%%

With the setup above, we can now directly appeal to our previous work~\cite{pfrommer2022tasil} under a few observations.  
The $\delta$-ISS property in Corollary~\ref{corol:DiscreteDeltaISS} implies the robustness of the nominal system governed by the class-$\mathcal{K}$ function $Cx$, $C = \frac{1}{\sqrt{\theta \lambda_\theta}} \left( 1 - e^{-\lambda_\theta t} \right)^\frac{1}{2}$.  
For such nominal systems, we can set $p=1$ and $r=1$ in~\cite{pfrommer2022tasil}, thus we place the assumption on the first order ($p$=1) Taylor polynomial in Assumption~\ref{assmp:PolicyRegularity}. 
That is, we match the derivatives of the test policy $\hat{\pi}$ to the expert policy $\pi^\star$ up to the first order.
While we can choose $p>1$, the choice offers a tradeoff that is unnecessary for our approach.
Please see the discussion post Thm.~$3.2$ in~\cite{pfrommer2022tasil} for more details on the choice of $p$ and its relation to the growth rate of the class-$\mathcal{K}$ governing the $\delta$-ISS property.
In order to obtain the TaSIL policy $\fdbktasil$, one solves the following: 
\begin{align}\label{eqn:LossTaSIL}
    \fdbktasil = \argmin_{\hat{\pi} \in \Pi} \mathbb{E}_{\xi \sim \widehat{\mathcal{D}}_n}
        l^{\pi^\star}\br{\xi,\hat{\pi}}, 
    \quad  
    l^{\pi^\star}\br{\xi,\hat{\pi}} 
    = \frac{1}{2} 
    \left(  
        \max_{s \in \partition{k}{T}/\{T\}} \norm{ \Psi^{\pi^\star}_t(\xi,\hat{\pi})}
        + 
        \max_{s \in \partition{k}{T}/\{T\}} \norm{ \nabla_x \Psi^{\pi^\star}_t(\xi,\hat{\pi})}     
    \right),
\end{align} 
where, $\Psi^{\pi^\star}_t(\xi,\hat{\pi})$ is the input perturbation defined in~\eqref{eqn:InputPerturbation} but with the rollout policy $\pi^\star$
\begin{align*}
    \Psi^{\pi^\star}_t(\xi,\hat{\pi}) = \hat{\pi}\left(x^{\pi^\star}_t(\xi)\right) - \pi^\star \left(x^{\pi^\star}_t(\xi)  \right) \in \mathbb{R}^m, 
    \quad \quad t \in \partition{k}{T}/\{T\}.
\end{align*}
It is therefore evident that the loss function in~\eqref{eqn:LossTaSIL} is evaluated only on the training data without any further requirements.  
We now state the abbreviated result from~\cite{pfrommer2022tasil} for the systems considered in the manuscript.
Please see~\cite[Thm.~4.3]{pfrommer2022tasil} for the details. 
%%%%%%%%%%%%%%%%%
\begin{theorem}\label{thm:TaSIL}
    Fix a failure probability $\delta \in (0,1)$.
    The policy-IG $\widehat{\Upsilon}_T \left(\mathcal{D}; \hat{\pi}  \right)$, defined in~\eqref{eqn:Decomposition:IG:Policy}, evaluated at $\fdbktasil$ that is obtained by solving the optimization problem in~\eqref{eqn:LossTaSIL}, satisfies
    \begin{align}
        \widehat{\Upsilon}_T \left(\mathcal{D}; \fdbktasil  \right)
        \leq 
        \rho_{\text{\tiny TaSIL}}(\delta,n),
    \end{align}
    where the constant $\rho_{\text{\tiny TaSIL}}(\delta,n)$ is a computable constant that satisfies 
    \begin{align}
         \rho_{\text{\tiny TaSIL}}(\delta,n) \in \mathcal{O}\left(\frac{\log n}{n}\right), \text{ as } n \uparrow \infty, \quad \text{and} \quad
        \rho_{\text{\tiny TaSIL}}(\delta,n) \in \mathcal{O}\left(\frac{1}{\delta}\right), \text{ as } \delta \downarrow 0.    
    \end{align}  
\end{theorem}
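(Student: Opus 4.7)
The plan is to reduce the theorem to the bound already established in~\cite{pfrommer2022tasil} by verifying that the hypotheses of that bound hold in our continuous-time ODE setting, and then invoking the concentration argument therein. The main conceptual step is to translate the trajectory-level deviation, which is naturally expressed along the \emph{learned-policy rollout}, into a quantity that depends only on the \emph{expert-policy rollout}, since only the latter is accessible through the training data $\mathcal{S}_n$.

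First, I would fix a trajectory initialized at $\xi$ and, using Corollary~\ref{corol:DiscreteDeltaISS}, write
\begin{align*}
    \norm{\xnomt{t}{\xi;\hat{\pi}} - \xnomt{t}{\xi;\pi^\star}}
    \leq
    C_T \max_{s \in \partition{k}{T}/\{T\}} \norm{\Theta^{\hat{\pi}}_s(\xi,\hat{\pi})},
\end{align*}
where $C_T = \br{\theta \lambda_\theta}^{-1/2}\br{1-\expo{-\lambda_\theta T}}^{1/2}$ and $\Theta^{\hat{\pi}}_s$ is the on-rollout perturbation defined in~\eqref{eqn:InputPerturbation}. Next, to relate $\Theta^{\hat{\pi}}_s$ to the off-rollout surrogate $\Psi^{\pi^\star}_s$ that actually appears in the TaSIL loss~\eqref{eqn:LossTaSIL}, I would apply Assumption~\ref{assmp:PolicyRegularity} to the \emph{difference} policy $\hat{\pi}-\pi^\star$. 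Expanding around $\xnomt{s}{\xi;\pi^\star}$ yields
\begin{align*}
    \norm{\Theta^{\hat{\pi}}_s(\xi,\hat{\pi})}
    \leq
    \norm{\Psi^{\pi^\star}_s(\xi,\hat{\pi})}
    +
    \norm{\nabla_x \Psi^{\pi^\star}_s(\xi,\hat{\pi})}\, \norm{e_s}
    + L_{\partial \pi}\, \norm{e_s}^2,
\end{align*}
where $e_s \doteq \xnomt{s}{\xi;\hat{\pi}} - \xnomt{s}{\xi;\pi^\star}$.

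Combining the two displays gives a self-bounding inequality in $\max_s \norm{e_s}$. The hard part is closing this loop: I would run a small-gain/bootstrap argument, arguing that whenever the zeroth- and first-order TaSIL residuals are small enough (which is guaranteed with high probability by the concentration step below), the quadratic term in $\norm{e_s}$ cannot cause the bound to blow up, and one obtains
\begin{align*}
    \max_{s \in \partition{k}{T}/\{T\}} \norm{e_s}
    \lesssim
    C_T \, l^{\pi^\star}(\xi,\hat{\pi}),
\end{align*}
uniformly in $\xi$, where $l^{\pi^\star}$ is the per-sample TaSIL loss from~\eqref{eqn:LossTaSIL}. Taking expectation over $\xi\sim\mathcal{D}$ converts this into a bound on $\widehat{\Upsilon}_T(\mathcal{D};\hat{\pi})$ in terms of the \emph{population} TaSIL risk $\mathbb{E}_{\xi\sim\mathcal{D}} l^{\pi^\star}(\xi,\hat{\pi})$.

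Finally, I would close the argument by bounding the population risk by the empirical risk on $\widehat{\mathcal{D}}_n$ plus a uniform-deviation term. Since $\fdbktasil$ is an empirical risk minimizer over $\Pi$, standard Rademacher-type concentration (as invoked in~\cite[Thm.~4.3]{pfrommer2022tasil}) delivers, with probability at least $1-\delta$,
\begin{align*}
    \mathbb{E}_{\xi\sim\mathcal{D}} l^{\pi^\star}(\xi,\fdbktasil)
    \in
    \mathcal{O}\!\br{\frac{\log n}{n}} + \mathcal{O}\!\br{\frac{1}{\delta}}
\end{align*}
under the regularity and boundedness conditions provided by Assumptions~\ref{assmp:VectorFields} and~\ref{assmp:PolicyRegularity} together with the $\delta$-ISS gain $C_T$. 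The main obstacle I anticipate is the bootstrapping step: one must verify the small-gain condition holds on the high-probability event delivered by the concentration inequality, so that the Taylor remainder remains dominated and the inequality does not blow up. Once that is verified, combining the three steps yields the claimed bound $\rho_{\text{\tiny TaSIL}}(\delta,n)$ with the stated asymptotic behavior.
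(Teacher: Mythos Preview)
Your proposal is correct and follows essentially the same logical route as the paper: use the $\delta$-ISS property (Corollary~\ref{corol:DiscreteDeltaISS}) to bound the trajectory deviation by the on-rollout perturbation, transfer to the off-rollout TaSIL loss via the Taylor expansion from Assumption~\ref{assmp:PolicyRegularity}, close with a small-gain/bootstrap step, and finish with the concentration argument from~\cite[Thm.~4.3]{pfrommer2022tasil} at $p=r=1$. The paper's own proof is far terser: it simply invokes~\cite[Thm.~4.3]{pfrommer2022tasil} directly (since the entire chain you sketch is exactly what that theorem establishes) and then adds one observation you leave implicit---that the cited result bounds $\mathbb{E}_{\xi\sim\mathcal{D}}\bigl[\max_{t}\norm{e_t}\bigr]$ (a pathwise quantity), whereas the policy-IG here is $\max_{t}\mathbb{E}_{\xi\sim\mathcal{D}}\norm{e_t}$ (a pointwise-in-time quantity), and the former dominates the latter. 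Your argument handles this automatically because you bound $\max_s\norm{e_s}$ before taking expectations, but it would be worth stating the inequality $\max_t \mathbb{E}\norm{e_t}\leq \mathbb{E}\max_t\norm{e_t}$ explicitly so the reader sees why the final object is indeed $\widehat{\Upsilon}_T(\mathcal{D};\fdbktasil)$.
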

%%%
\begin{proof}
    The results follow from~\cite[Thm.~4.3]{pfrommer2022tasil} with $p=1$ and $r=1$.
    In particular, for $p=1$,~\cite[Thm.~4.3]{pfrommer2022tasil} proves that   
    \begin{align*}
        \Gamma_T \left(\xi; \pi_{\text{TaSIL},1}  \right) \leq \rho_{\text{\tiny TaSIL}}(\delta,n),
    \end{align*}
    where $\Gamma_T$ denotes the imitation gap.
    As evident from~\eqref{eqn:LossTaSIL}, the imitation gap above is for the $\fdbktasil$ trained to minimize $\mathbb{E}_{\xi \sim \widehat{\mathcal{D}}_n}\sbr{\max \br{\text{trajectory deviation}}}$, which corresponds to minimization of the pathwise/trajectory-level law (distribution) of  policy-IG.
    However, the policy-IG in \eqref{eqn:Decomposition:IG:Policy} that we consider in the manuscript is the maximum over the instantaneous (pointwise in time) distributions induced by the trajectories under the policy-shift.  
    The former implies the latter, and thus the result follows.
    Note that this equivalence does not hold in the other direction.
\end{proof}
%%%%%%%%%%%%%%%%%

%%%%%%%%%%%%%%%%%%%%%%%%%%%%%%%%%%%%%%%%%%%%%%%%%%%%%%%%%%%%
\subsection{Bounding the Uncertainty Induced Imitation Gap}\label{sec:UncertaintyIG}
%%%%%%%%%%%%%%%%%%%%%%%%%%%%%%%%%%%%%%%%%%%%%%%%%%%%%%%%%%%%}

We now establish the bounds on the uncertainty induced imitation gap (uncertainty-IG), defined in~\eqref{eqn:Decomposition:IG:Policy} as 
\begin{align}
    \widetilde{\Upsilon}_T \left(\widetilde{\mathcal{D}}; \pi_{ad}, \fdbktasil  \right)
    = 
    \max_{t \in [0,T]}
    \mathbb{E}_{\widetilde{\xi} \sim \widetilde{\mathcal{D}}}
    \norm{\Xt{t}\left(\bar{\xi}; \pi_{ad} \right) - \xnomt{t}{\xi; \fdbktasil}}, \quad 
    \pi_{ad} = \fdbktasil + \fdbkellone,
\end{align}
where $\fdbktasil$ is the imitation feedback policy obtained in Sec.~\ref{sec:PolicyIG}, and $\widetilde{\xi} = \left(\xi, \bar{\xi}\right)$, and $\widetilde{\mathcal{D}}$, arbitrary coupling of the initial measures $\mathcal{D}$ and $\bar{\mathcal{D}}$ on $\Borel{\mathbb{R}^{2n}}$, are provided in Definition~\ref{def:ImitationGap}.
We choose the policy $\fdbkellone$ as our recently developed \elloneDRAC control laws~\cite{L1DRAC} which ensures certifiable robustness against the distribution shifts due to the aleatoric and epistemic uncertainties in the system dynamics. 
the \ellonedrac~feedback operator $\FL$ consists of a process predictor, an adaptation law, and a low-pass filter as illustrated in Fig.~\ref{fig:L1DRAC}. 
%%%%%%%%%%%%%%%%%%%%%%%%%%%%%
\begin{figure}[h]
    \centering
    \includegraphics[width=0.50\textwidth]{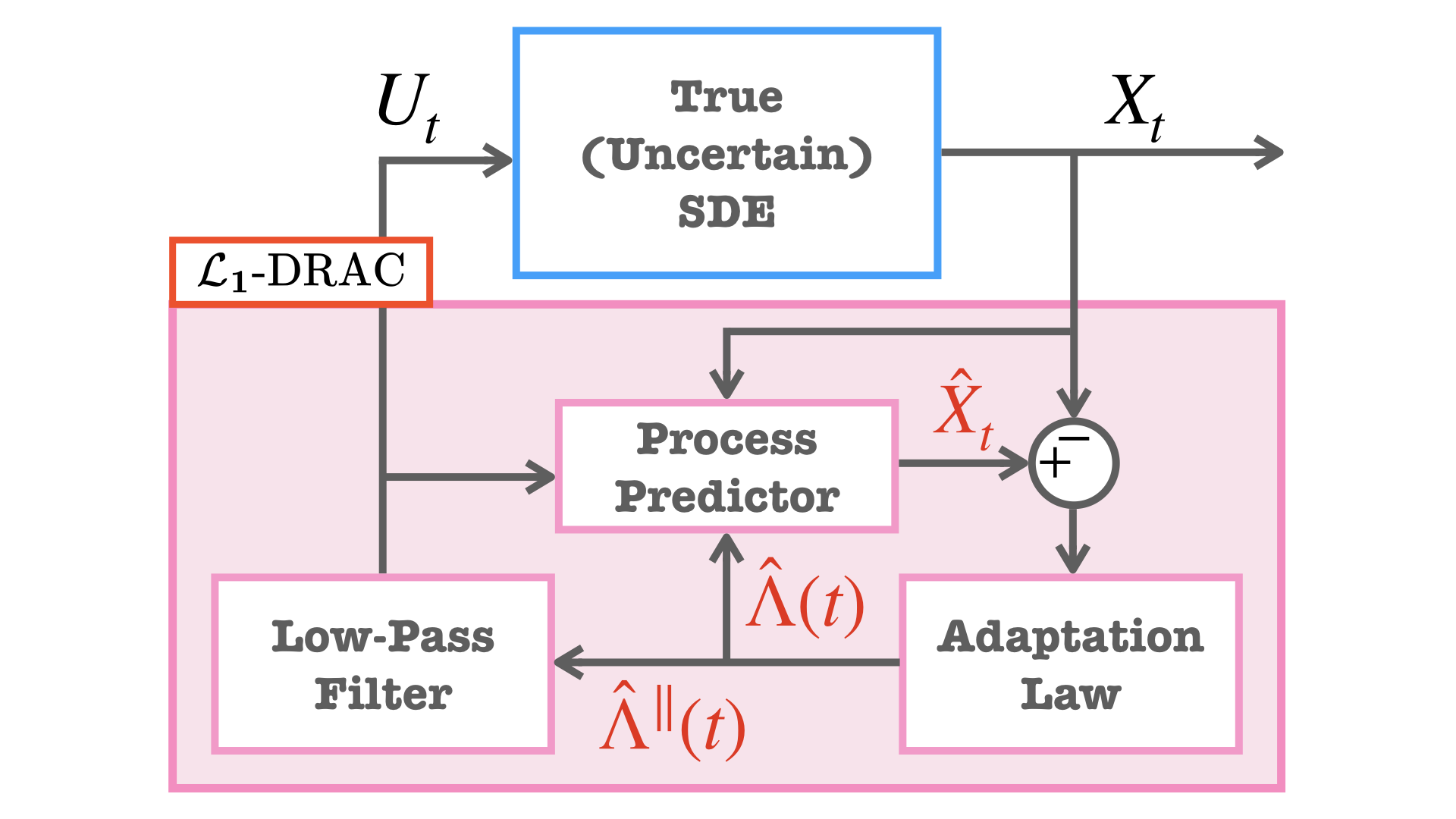}
    \caption{The architecture of the \ellonedrac~controller. The controller has three components: a \emph{process predictor} with output $\hat{X}_t$, an \emph{adaptation law}, and a \emph{low pass filter}.}
    \label{fig:L1DRAC}
\end{figure}
%%%%%%%%%%%%%%%%%%%%%%%%%%%%%

We define the \ellonedrac~feedback operator $\FL:\mathcal{C}([0,T]:\mathbb{R}^n) \rightarrow \mathcal{C}([0,T]:\mathbb{R}^m)$, as follows: 
\begin{equation}\label{eqn:L1DRAC:Definition:FeedbackOperator}
    \FL[y] \doteq \Filter \circ \AdaptationLaw \circ \Predictor[y], \quad y \in \mathcal{C}([0,T]:\mathbb{R}^n),  
\end{equation}
which we can alternatively represent as
\begin{align}\label{eqn:L1DRAC:Definition:FeedbackOperator:Alt}
    \FL[y] 
    = 
    \Filter[\hat{\Lambda}^{\paral}],
    \quad
    \hat{\Lambda}^{\paral} = \AdaptationLawParal[\hat{\Lambda}],
    \quad 
    \hat{\Lambda} = \AdaptationLaw[\hat{y}][y],
    \quad 
    \hat{y} = \Predictor[y]
    ,
\end{align} 
where
\begin{subequations}\label{eqn:L1DRAC:Definition:FeedbackOperator:Components}
    \begin{align}
        \Filter[\hat{\Lambda}^{\paral}][t] 
        \doteq  
        - \Boldomega \int_0^t \expo{-\Boldomega(t-\nu)}\Lparahat{\nu}d\nu,  \quad \text{\emph{(Low-pass filter)}}
        \label{eqn:L1DRAC:Definition:FeedbackOperator:Filter}
        \\
        \begin{aligned}
            \Lparahat{t} 
            =&
            \AdaptationLawParal[\hat{\Lambda}][t]
            = 
            \sum_{i=0}^{\lfloor \frac{t}{\BoldTs} \rfloor}  
            \Theta_{ad}(i\BoldTs) \Lhat{t}
            \indicator{[i\BoldTs,(i+1)\BoldTs)}{t}
            , 
            \\
            \Lhat{t} 
            =&
            \AdaptationLaw[\hat{y}][y][t]
            =  
            0_n \indicator{[0,\BoldTs)}{t} 
            \\
            &+
            \lambda_s \br{1 - e^{\lambda_s \BoldTs}}^{-1} 
            \sum_{i=1}^{\lfloor \frac{t}{\BoldTs} \rfloor}    
            \br{\hat{y}\br{i\BoldTs} - y\br{i\BoldTs}}
            \indicator{[i\BoldTs,(i+1)\BoldTs)}{t}, 
        \end{aligned} 
        \quad \text{\emph{(Adaptation Law)}}
        \label{eqn:L1DRAC:Definition:FeedbackOperator:AdaptationLaw}
        \\
        % d\Xhatt{t} = \br{-\lambda_s \mathbb{I}_n \left(\Xhatt{t} - \Xt{t} \right)+ f(t,\Xt{t}) +  g(t)\ULt{t} + \Lhat{t}} dt, \quad \Xhatt{0} = 0_n,
        \begin{aligned}
            \hat{y}(t) 
            =&
            \Predictor[y][t] \Rightarrow \text{ solution to the integral equation: } 
            \\
            \hat{y}(t)=& x_0 +  
            \int_0^t \br{-\lambda_s \mathbb{I}_n \left(\hat{y}(\nu) - y(\nu) \right)+ f(\nu,y(\nu)) +  g(\nu)\FL[y][\nu] + \Lhat{\nu}} d\nu, 
        \end{aligned}
        \label{eqn:L1DRAC:Definition:FeedbackOperator:Predictor}
        \quad \text{\emph{(Process Predictor)}}
    \end{align}
\end{subequations}
for $t \in [0,T]$, where $\Boldomega, \BoldTs, \lambda_s \in \mathbb{R}_{>0}$ are the control parameters.
The parameters $\Boldomega$ and $\BoldTs$ and are referred to as the \textbf{\emph{filter bandwidth}} and the \textbf{\emph{sampling period}}, respectively.  
Additionally, $ \Theta_{ad}(t) = \begin{bmatrix}\mathbb{I}_m & 0_{m,n-m}  \end{bmatrix} \bar{g}(t)^{-1} \in \mathbb{R}^{m \times n}$, where $\bar{g}(t) = \begin{bmatrix} g(t) & g(t)^\perp  \end{bmatrix} \in \mathbb{R}^{n \times n}$, and here $g^\perp$ is defined in Assumption~\ref{assmp:VectorFields}.
Finally, $\bar{\xi} \sim \bar{\mathcal{D}}$ in~\eqref{eqn:L1DRAC:Definition:FeedbackOperator:Predictor} is the initial condition of the true (uncertain) process~\eqref{eqn:L1DRACProcess}.

The choice of the control parameters $\Boldomega \in \mathbb{R}_{>0}$ and $\BoldTs \in \mathbb{R}_{>0}$, the bandwidth for the low-pass filter in~\eqref{eqn:L1DRAC:Definition:FeedbackOperator:Filter} and the sampling period for the adaptation law in~\eqref{eqn:L1DRAC:Definition:FeedbackOperator:AdaptationLaw}, respectively depend on the growth rate of the uncertain vector fields in Assumption~\ref{assmp:VectorFields}, and the $\delta$-ISS property of the nominal system in Assumption~\ref{assmp:ILF}.
We omit the details of the control parameter selection, and instead refer the reader to~\cite[Sec.~3.2]{L1DRAC} for a comprehensive exposition. 
The guarantees of the \ellonedrac~controller are summarized in the following theorem.
\begin{theorem}[\cite{L1DRAC}]\label{thm:L1DRAC}
    Fix a failure probability $\delta \in (0,1)$ and let $\mathbb{N}_{\geq 1} \ni p \geq \log\br{\sqrt{1/\delta}}$.
    Suppose that $\max_{t \in [0,T]}\mathbb{E}_{\xi \sim \mathcal{D}}\left[\norm{\xnomt{t}{\xi; \fdbktasil }}^{2p}\right]^\frac{1}{2p}\leq \Delta_\star$, then there exist an a-priori known constant $\rho_{\mathcal{L}_1}(p) = \rho_{\mathcal{L}_1}\br{p,\delta,\Delta_\star, \Delta_\mu, \Delta_\sigma} \in \mathbb{R}_{>0}$ such that 
    \begin{align}\label{eqn:result:1}
        \widetilde{\Upsilon}_T \left(\widetilde{\mathcal{D}}; \pi_{ad}, \fdbktasil  \right)
        \leq 
        \rho_{\mathcal{L}_1}(p),
    \end{align}
    where $\pi_{ad} = \fdbktasil + \fdbkellone$.
    Moreover, with probability $1-\delta$
    \begin{align}\label{eqn:result:2}
        \norm{\Xt{t}\left(\bar{\xi}; \pi_{ad} \right) - \xnomt{t}{\xi; \fdbktasil}} < 
        e \cdot \rho_{\mathcal{L}_1}\left(\log \sqrt{ \frac{1}{\delta} } \right).
    \end{align}

\end{theorem}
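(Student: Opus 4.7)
The plan is to obtain Theorem~\ref{thm:L1DRAC} as a direct specialization of the main tracking result proved in \cite{L1DRAC}, by recognizing the nominal (IL) trajectory $\xnomt{t}{\xi; \fdbktasil}$ as the \emph{reference signal} that the \ellonedrac{} feedback operator is designed to track. Specifically, since $\pi_{ad} = \fdbktasil + \fdbkellone$, the true process~\eqref{eqn:L1DRACProcess} under $\pi_{ad}$ has drift $\fbar{t,\Xt{t},\fdbktasil(\Xt{t})} + g(t)\fdbkellone(\Xt{t}) + \Lmu{t,\Xt{t}}$, while $\xpt{t} \doteq \xnomt{t}{\xi;\fdbktasil}$ evolves under the clean drift $\fbar{t,\xpt{t},\fdbktasil(\xpt{t})}$. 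The feedforward term $\fdbktasil$ is absorbed into the reference dynamics, so the remaining task for \elloneDRAC{} is to keep $\Xt{t}$ close to $\xpt{t}$ despite $\Lambda_\mu$, $\Lambda_\sigma$, and the initial-condition ambiguity between $\mathcal{D}$ and $\bar{\mathcal{D}}$. This is precisely the setting of \cite{L1DRAC}.

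First I would verify the hypotheses needed to invoke \cite{L1DRAC}: well-posedness of the SDE~\eqref{eqn:L1DRACProcess} under $\pi_{ad}$ in the strong sense on $\Wfilt{0,t}$ follows from Assumption~\ref{assmp:VectorFields} together with the Lipschitz regularity of $\fdbktasil$ inherited from Assumption~\ref{assmp:PolicyRegularity} and the $\mathcal{L}_1$ filter/adaptation structure in~\eqref{eqn:L1DRAC:Definition:FeedbackOperator:Components}; the $\delta$-ISS property of the nominal reference dynamics from Proposition~\ref{prop:deltaISS} supplies the contraction budget that \elloneDRAC{} exploits to translate matched-channel cancellation into state-space tracking. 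The hypothesis $\max_{t\in[0,T]}\mathbb{E}_{\xi \sim \mathcal{D}}[\norm{\xpt{t}}^{2p}]^{1/(2p)} \leq \Delta_\star$ supplies the reference moment bound required by the adaptation/filter analysis in~\cite{L1DRAC}, which controls the interaction of the state-dependent bounds in Assumption~\ref{assmp:VectorFields} with the piecewise-constant estimate $\Lhat{t}$.

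Invoking the main theorem of \cite{L1DRAC} then yields a pointwise-in-time $2p$-th moment bound of the form
\begin{equation*}
\max_{t \in [0,T]} \mathbb{E}_{\widetilde{\xi}\sim\widetilde{\mathcal{D}}}\!\left[\norm{\Xt{t}(\bar{\xi};\pi_{ad}) - \xnomt{t}{\xi;\fdbktasil}}^{2p}\right]^{\!1/(2p)} \leq \rho_{\mathcal{L}_1}(p),
\end{equation*}
where $\rho_{\mathcal{L}_1}(p)$ is the design-time computable constant depending on $\Delta_\mu,\Delta_\sigma,\Delta_\star,\Boldomega,\BoldTs,\lambda_s$ and the contraction rate $\lambda$. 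The first claim~\eqref{eqn:result:1} then follows by Jensen's inequality, since $\mathbb{E}\norm{\cdot} \leq \mathbb{E}[\norm{\cdot}^{2p}]^{1/(2p)}$, so the uncertainty-IG is upper bounded by $\rho_{\mathcal{L}_1}(p)$ for any admissible $p$. The coupling $\widetilde{\mathcal{D}}$ plays no role at this step because the bound holds uniformly in the marginal pairs drawn from $\mathcal{D}\times\bar{\mathcal{D}}$.

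For the high-probability statement~\eqref{eqn:result:2}, I would apply Markov's inequality to the $2p$-th moment: for each $t$,
\begin{equation*}
\Probability{\norm{\Xt{t}(\bar{\xi};\pi_{ad}) - \xnomt{t}{\xi;\fdbktasil}} \geq e\,\rho_{\mathcal{L}_1}(p)} \leq e^{-2p}.
\end{equation*}
Choosing $p = \lceil \log\sqrt{1/\delta}\rceil$ makes the right-hand side at most $\delta$, which yields the stated concentration bound at $e\cdot\rho_{\mathcal{L}_1}(\log\sqrt{1/\delta})$. The main obstacle, which is handled inside~\cite{L1DRAC} rather than here, is controlling the intersample prediction error of the piecewise-constant adaptation law~\eqref{eqn:L1DRAC:Definition:FeedbackOperator:AdaptationLaw} under state-dependent, non-Gaussian diffusion $\Fsigma{t,\Xt{t}}$: one must simultaneously manage the nonlinear interaction of $\Lambda_\mu$ and $\Lambda_\sigma$ with the adaptive filter output through the coupled integral equation for $\hat{y}$, and do so in an $L^{2p}$ sense rather than almost-surely. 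Since this is exactly the content of~\cite{L1DRAC}, here I would simply cite that result after verifying the reference-moment hypothesis.
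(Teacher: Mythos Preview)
Your proposal is correct and matches the paper's own argument essentially step for step: invoke the main $2p$-th moment tracking bound from \cite{L1DRAC} with the nominal (TaSIL) trajectory as the reference, deduce~\eqref{eqn:result:1} by the inclusion of $L_p$ spaces (equivalently Jensen), and obtain~\eqref{eqn:result:2} via Markov's inequality with $p \approx \log\sqrt{1/\delta}$. The paper presents exactly these three moves in the remarks immediately following the theorem, so there is no meaningful difference in approach.
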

A few remarks are in order. 
The control law $\pi_{ad} = \fdbktasil + \fdbkellone$ provides the following uniform bound on
\begin{align*}
    \max_{t \in [0,T]}
        \mathbb{E}_{\widetilde{\xi} \sim \widetilde{\mathcal{D}}}
        \left[\norm{\Xt{t}\left(\bar{\xi}; \pi_{ad} \right) - \xnomt{t}{\xi; \fdbktasil}}^{2p}\right]^\frac{1}{2p}
        \leq \rho_{\mathcal{L}_1}(p),
\end{align*}
from which~\eqref{eqn:result:1} follows by the inclusion of $L_p$ spaces. 
Consequently,~\eqref{eqn:result:2} follows then by the Markov inequality. 

An important requirement in Theorem~\ref{thm:L1DRAC} is the uniform boundedness of $\mathbb{E}_{\xi \sim \mathcal{D}}\left[\norm{\xnomt{t}{\xi; \fdbktasil }}^{2p}\right]^\frac{1}{2p}\leq \Delta_\star$, for a desired $p$.
While TaSIL, as set up does not directly provide this, changing the loss function of TaSIL for training the imitation feedback policy $\fdbktasil$ is a direction that we wish to explore in future work.
Further conditions under which the requirement may be satisfied is if the nominal system under the TaSIL policy, remains bounded over the horizon $[0,T]$. 
For the current setup, if we define the total imitation gap to be between the \textbf{dataset} of expert trajectories $\widehat{D}_n$, instead of $\mathcal{D}$, then the requirement is satisfied owing the compact support of $\mathcal{D}$.
Note that in this case, there is no restriction on the use of $\pi_{ad}$ policy, that still remains defined as is and initialized with the same arbitrary distribution $\mathcal{D}$, which is not restricted to be compactly supported.  
Finally, the \elloneDRAC control law requires the contraction of the baseline system, which in our case would be the nominal (TaSIL) system.  
However, Assumption~\ref{assmp:ILF} only provides a contraction property for the nominal system under the expert policy $\pi^\star$. 
Fortunately, the assumed regularity of the policies in Assumption~\ref{assmp:PolicyRegularity} allows us to follow the same approach as for TaSIL in which the \elloneDRAC will be designed to be robust against the additional perturbation of linearly growing upper bound on the policy shift.

%%%%%%%%%%%%%%%%%%%%%%%%%%%%%%%%%%%%%%%%%%%%%%%%%%%%%%%%%%%%
\subsection{Discussion}\label{sec:discussion}
%%%%%%%%%%%%%%%%%%%%%%%%%%%%%%%%%%%%%%%%%%%%%%%%%%%%%%%%%%%%

Substituting the results from Theorems~\ref{thm:TaSIL} and~\ref{thm:L1DRAC} into~\eqref{eqn:Decomposition:IG:Final} produces the value of the total imitation gap in~\eqref{eqn:TIG} as
\begin{align}\label{eqn:TotalTIG}
    \Upsilon_T \left(\widetilde{\mathcal{D}}; \pi_{ad} = \fdbktasil + \fdbkellone  \right)
    \leq \rho = \rho_{\text{\tiny TaSIL}} + \rho_{\mathcal{L}_1}.
\end{align}
As discussed above, this result holds when $\widetilde{\mathcal{D}}$ is a coupling between $\widehat{\mathcal{D}}_n$ and $\bar{\mathcal{D}}$, and not $\mathcal{D}$ and $\bar{\mathcal{D}}$. 
Therefore, our results directly quantify the difference on the performance of the uncertain system to the training distribution $\widehat{D}_n$.
The bound $\rho_{\text{\tiny TaSIL}}(\delta,n)$ in Theorem~\ref{thm:TaSIL} belongs to $ \mathcal{O}\left(\frac{\log n}{n}\right)$, as $n \uparrow \infty$ which shows a `fast' decrease in the policy-IG as a function of the available training data.  
Moreover, the TaSIL bound also belongs to $\mathcal{O}\left(\frac{1}{\delta}\right)$, as  $\delta \downarrow 0$ establishing a direct relationship.
The \ellonedrac bound $\rho_{\mathcal{L}_1}$ is $\mathcal{O}(1)$ in $n$ since it is training-free, and displays a behavior of $\mathcal{O}\left(\log^2 \sqrt{ \frac{1}{\delta} } \right)$ as $\delta \downarrow 0$, see~\cite[Sec.~4.3]{L1DRAC}.
Thus the \ellonedrac bound increases much slower than the TaSIL bound as $\delta \downarrow 0$.
Hence, the sample complexity, and the complexity of success probability of the total TaSIL and \ellonedrac is identical to TaSIL only.   
It is important to note that the bounds of \elloneDRAC and TaSIL can be consolidated at the instantaneous distributional level due to the nature of guarantees provided by \elloneDRAC.
In the absence of uncertainties, TaSIL guarantees bounds on the imitation gap on the pathwise/trajectory level.

%%%%%%%%%%%%%%%%%%%%%%%%%%%%%%%%%%%%%%%%%%%%%%%%%%%%%%%%%%%%%%%%%%%%%%%%%%%%%%%%
\section{Numerical Experiments}\label{sec:numerical}
%%%%%%%%%%%%%%%%%%%%%%%%%%%%%%%%%%%%%%%%%%%%%%%%%%%%%%%%%%%%%%%%%%%%%%%%%%%%%%%%
In this section, we validate our proposed approach by simulating the system with TaSIL along with \elloneDRAC on an uncertain system~\eqref{eqn:L1DRACProcess} described by:
\begin{equation*}
    f(t,X) = -0.05\mathbb{I}_{4}X_{t} + 0.25\mathbb{I}_{4}(\pi_{\text{\tiny TaSIL}} - h(X_{t})) \in \mathbb{R}^4,
\end{equation*}
where $h(X_{t}) \in \mathbb{R}^{n}$ is a known function which we set as a neural network.  The input operator is defined by:
\begin{equation*}
    g(t) = 0.25\mathbb{I}_{4} \in \mathbb{S}^4. 
\end{equation*}
The drift and diffusion uncertainties are chosen as
\begin{equation*}
    \Lambda_{\mu}(t,X) = \mathbb{I}_{4}(0.1+0.05\norm{X}), \quad \Lambda_{\sigma}(t,X) = \mathbb{I}_{4}(0.1 + 0.05\norm{X}^{0.5}).
\end{equation*}
The expert policy is set to $-Kx - h(x)$, where the feedback $K$ stabilizes the known system by removing $h(x)$. 
The training data set consists of $20$ trajectories under this feedback on the nominal system.
We experimentally demonstrate i) the effect of imitation gap between TaSIL and expert in the absence of uncertainty-induced distribution shifts, ii) the imitation gap with only TaSIL for the uncertain system, and iii) the imitation gap for the uncertain system with both TaSIL and \ellonedrac. 
The results are illustrated in~Fig.~\ref{fig:placeholder} for an ensemble of $100$ trajectories.

\begin{figure}
    \centering
    \includegraphics[width=0.6\linewidth]{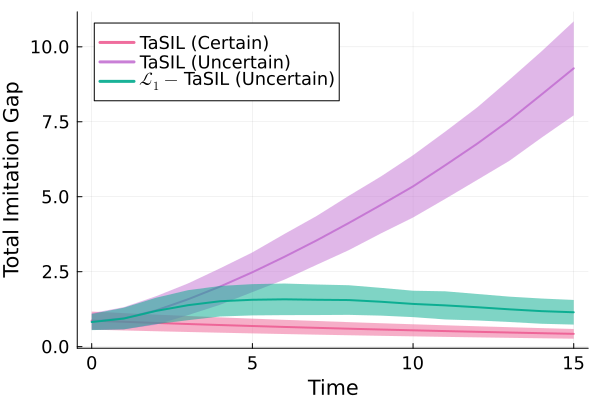}
    \caption{Comparison of total imitation gap under i) TaSIL for the nominal system, ii) TaSIL for the uncertain system, and iii) TaSIL and \elloneDRAC for the uncertain system.}
    \label{fig:placeholder}
\end{figure}

While TaSIL on the nominal system behaves as predicted, the presence of uncertainties lead to the destabilization of the uncertain system. The joint law of TaSIL and \ellonedrac stabilize the system, and keep the imitation gap bounded by delegating the sources of imitation gap in a decoupled fashion.

%%%%%%%%%%%%%%%%%%%%%%%%%%%%%%%%%%%%%%%%%%%%%%%%%%%%%%%%%%%%%%%%%%%%%%%%%%%%%%%%
\section{Conclusion}\label{sec:conclusion}
%%%%%%%%%%%%%%%%%%%%%%%%%%%%%%%%%%%%%%%%%%%%%%%%%%%%%%%%%%%%%%%%%%%%%%%%%%%%%%%%

We provide an exploratory foray into consolidating learning-base imitation learning with robust adaptive control. 
The enabling feature for the consolidation lies with the fact that both the imitation learning approach of TaSIL~\cite{pfrommer2022tasil} and \ellonedrac~\cite{L1DRAC} provides robustness at the levels of distributions thus allowing the relation to the training data itself. 

This approach opens up a new direction of providing guarantees of learning-based controllers without further requiring limiting assumptions on either the learning algorithms or the control methodologies. 
The next step for the research is to explicitly establish the training data-dependent guarantees on the methodology presented in the manuscript by enabling a bi-directional communication between the two components, which can then be communicated to a high-level planner. 
The distributional guarantees also enable the certifiable use of vision-based perception based on the training data. 

%%%%%%%%%%%%%%%%%%%%%%%%%%%%%%%%%%%%%%%%%%%%%%%%%%%%%%%%%%%%%%%%%%%%%%%%%%%%%%%%
\section*{Acknowledgments}
%%%%%%%%%%%%%%%%%%%%%%%%%%%%%%%%%%%%%%%%%%%%%%%%%%%%%%%%%%%%%%%%%%%%%%%%%%%%%%%%
This work is supported by the Air Force Office of Scientific Research Grant (AFOSR) Grant FA9550-21-1-0411, the National Aeronautics and Space Administration (NASA) under Grants 80NSSC22M0070 and 80NSSC20M0229, and by the National Science Foundation (NSF) under Grants CMMI 2135925 and IIS 2331878.

\bibliography{References-Bib/References, References-Bib/Adi}

@inproceedings{lakshmanan2020safe,
  title={Safe feedback motion planning: {A} contraction theory and $\mathcal{L}_1$-adaptive control based approach},
  author={Lakshmanan, Arun and Gahlawat, Aditya and Hovakimyan, Naira},
  booktitle={2020 59th IEEE Conference on Decision and Control (CDC)},
  pages={1578--1583},
  year={2020},
  organization={IEEE}
}

@inproceedings{gahlawat2021contraction,
  title={Contraction $\mathcal{L}_1$-Adaptive Control using {G}aussian Processes},
  author={Gahlawat, Aditya and Lakshmanan, Arun and Song, Lin and Patterson, Andrew and Wu, Zhuohuan and Hovakimyan, Naira and Theodorou, Evangelos A},
  booktitle={Learning for Dynamics and Control},
  pages={1027--1040},
  year={2021},
  organization={PMLR}
}

@inproceedings{sungrobust,
  title={Robust Model Based Reinforcement Learning Using $\mathcal{L}_1$ Adaptive Control},
  author={Sung, Minjun and Karumanchi, Sambhu Harimanas and Gahlawat, Aditya and Hovakimyan, Naira},
  booktitle={The Twelfth International Conference on Learning Representations},
  year = {2024}
}

@book{oksendal2013stochastic,
  title={Stochastic differential equations: {A}n introduction with applications},
  author={Oksendal, Bernt},
  year={2013},
  publisher={Springer Science \& Business Media}
}

@article{pfrommer2022tasil,
  title={Tasil: {T}aylor series imitation learning},
  author={Pfrommer, Daniel and Zhang, Thomas and Tu, Stephen and Matni, Nikolai},
  journal={Advances in Neural Information Processing Systems},
  volume={35},
  pages={20162--20174},
  year={2022}
}

@book{steyer2017probability,
  title={Probability and conditional expectation: Fundamentals for the empirical sciences},
  author={Steyer, Rolf and Nagel, Werner},
  year={2017},
  publisher={John Wiley \& Sons}
}

@book{villani2009optimal,
  title={Optimal transport: {O}ld and new},
  author={Villani, C{\'e}dric},
  volume={338},
  year={2009},
  publisher={Springer}
}

@book{bulloContractionBook,
  author =    {Bullo, Francesco},
  title =     {Contraction Theory for Dynamical Systems},
  year =      2024,
  edition =   {{1.2}},
  publisher = {Kindle Direct Publishing},
  url =       {https://fbullo.github.io/ctds}
}

@article{DavydovJafarpourBullo2022,
  author    = {Alexander Davydov and Saber Jafarpour and Francesco Bullo},
  title     = {Non-{E}uclidean contraction theory for robust nonlinear stability},
  journal   = {IEEE Transactions on Automatic Control},
  volume    = {67},
  number    = {12},
  pages     = {6667--6681},
  year      = {2022},
  doi       = {10.1109/TAC.2022.3151811},
  publisher = {IEEE}
}

@book{Khalil2002NonlinearSystems,
  author    = {Hassan K. Khalil},
  title     = {Nonlinear Systems},
  edition   = {3rd},
  publisher = {Prentice Hall},
  address   = {Upper Saddle River, NJ},
  year      = {2002}
}

@article{TsukamotoChungSlotine2021,
  author  = {Hiroyasu Tsukamoto and Soon-Jo Chung and Jean-Jacques Slotine},
  title   = {Contraction Theory for Nonlinear Stability Analysis and Learning-based Control: {A} Tutorial Overview},
  journal = {Annual Reviews in Control},
  volume  = {52},
  pages   = {135--169},
  year    = {2021},
  publisher = {Elsevier}
}

@article{angeli2002lyapunov,
  title={A {L}yapunov approach to incremental stability properties},
  author={Angeli, David},
  journal={IEEE Transactions on Automatic Control},
  volume={47},
  number={3},
  pages={410--421},
  year={2002},
  publisher={IEEE}
}

@article{mehtaStROLStabilizedRobust2024,
  title = {{{StROL}}: {{Stabilized}} and {{Robust Online Learning From Humans}}},
  shorttitle = {{{StROL}}},
  author = {Mehta, Shaunak A. and Meng, Forrest and Bajcsy, Andrea and Losey, Dylan P.},
  year = 2024,
  month = mar,
  journal = {IEEE Robotics and Automation Letters},
  volume = {9},
  number = {3},
  pages = {2303--2310},
  issn = {2377-3766},
  doi = {10.1109/LRA.2024.3354626},
  urldate = {2025-12-16}
}

@inproceedings{kangLyapunovDensityModels2022,
  title = {Lyapunov {{Density Models}}: {{Constraining Distribution Shift}} in {{Learning-Based Control}}},
  shorttitle = {Lyapunov {{Density Models}}},
  booktitle = {Proceedings of the 39th {{International Conference}} on {{Machine Learning}}},
  author = {Kang, Katie and Gradu, Paula and Choi, Jason J. and Janner, Michael and Tomlin, Claire and Levine, Sergey},
  year = 2022,
  month = jun,
  urldate = {2025-12-16}
}

@article{pengDeepMimicExampleguidedDeep2018,
  title = {{{DeepMimic}}: Example-Guided Deep Reinforcement Learning of Physics-Based Character Skills},
  shorttitle = {{{DeepMimic}}},
  author = {Peng, Xue Bin and Abbeel, Pieter and Levine, Sergey and {van de Panne}, Michiel},
  year = 2018,
  month = jul,
  journal = {ACM Trans. Graph.},
  volume = {37},
  number = {4},
  pages = {143:1--143:14},
  issn = {0730-0301},
  doi = {10.1145/3197517.3201311},
  urldate = {2025-12-16}
}

@inproceedings{chaeRobustImitationLearning2022,
  title = {Robust {{Imitation Learning}} against {{Variations}} in {{Environment Dynamics}}},
  booktitle = {Proceedings of the 39th {{International Conference}} on {{Machine Learning}}},
  author = {Chae, Jongseong and Han, Seungyul and Jung, Whiyoung and Cho, Myungsik and Choi, Sungho and Sung, Youngchul},
  year = 2022,
  month = jun,
  urldate = {2025-12-16}
}

@inproceedings{wangRobustImitationDiverse2017,
  title = {Robust {{Imitation}} of {{Diverse Behaviors}}},
  booktitle = {Advances in {{Neural Information Processing Systems}}},
  author = {Wang, Ziyu and Merel, Josh S and Reed, Scott E and {de Freitas}, Nando and Wayne, Gregory and Heess, Nicolas},
  year = 2017,
  volume = {30},
  publisher = {Curran Associates, Inc.},
  urldate = {2025-12-16}
}

@article{caiProbabilisticEndtoEndVehicle2020,
  title = {Probabilistic {{End-to-End Vehicle Navigation}} in {{Complex Dynamic Environments With Multimodal Sensor Fusion}}},
  author = {Cai, Peide and Wang, Sukai and Sun, Yuxiang and Liu, Ming},
  year = 2020,
  month = jul,
  journal = {IEEE Robotics and Automation Letters},
  volume = {5},
  number = {3},
  pages = {4218--4224},
  issn = {2377-3766},
  doi = {10.1109/LRA.2020.2994027},
  urldate = {2025-12-16}
}

@misc{sermanetTimeContrastiveNetworksSelfSupervised2018,
  title = {Time-{{Contrastive Networks}}: {{Self-Supervised Learning}} from {{Video}}},
  shorttitle = {Time-{{Contrastive Networks}}},
  author = {Sermanet, Pierre and Lynch, Corey and Chebotar, Yevgen and Hsu, Jasmine and Jang, Eric and Schaal, Stefan and Levine, Sergey},
  year = 2018,
  month = mar,
  number = {arXiv:1704.06888},
  eprint = {1704.06888},
  primaryclass = {cs},
  publisher = {arXiv},
  doi = {10.48550/arXiv.1704.06888},
  urldate = {2025-12-16}
}

@inproceedings{liuImitationObservationLearning2018,
  title = {Imitation from {{Observation}}: {{Learning}} to {{Imitate Behaviors}} from {{Raw Video}} via {{Context Translation}}},
  shorttitle = {Imitation from {{Observation}}},
  booktitle = {2018 {{IEEE International Conference}} on {{Robotics}} and {{Automation}} ({{ICRA}})},
  author = {Liu, YuXuan and Gupta, Abhishek and Abbeel, Pieter and Levine, Sergey},
  year = 2018,
  month = may,
  pages = {1118--1125},
  issn = {2577-087X},
  doi = {10.1109/ICRA.2018.8462901},
  urldate = {2025-12-16}
}

@misc{stadieThirdPersonImitationLearning2019,
  title = {Third-{{Person Imitation Learning}}},
  author = {Stadie, Bradly C. and Abbeel, Pieter and Sutskever, Ilya},
  year = 2019,
  month = sep,
  number = {arXiv:1703.01703},
  eprint = {1703.01703},
  primaryclass = {cs},
  publisher = {arXiv},
  doi = {10.48550/arXiv.1703.01703},
  urldate = {2025-12-16}
}

@misc{caiGlobalConvergenceImitation2019,
  title = {On the {{Global Convergence}} of {{Imitation Learning}}: {{A Case}} for {{Linear Quadratic Regulator}}},
  shorttitle = {On the {{Global Convergence}} of {{Imitation Learning}}},
  author = {Cai, Qi and Hong, Mingyi and Chen, Yongxin and Wang, Zhaoran},
  year = 2019,
  month = jan,
  number = {arXiv:1901.03674},
  eprint = {1901.03674},
  primaryclass = {cs},
  publisher = {arXiv},
  doi = {10.48550/arXiv.1901.03674},
  urldate = {2025-12-16}
}

@misc{leeSafeEndtoendImitation2019,
  title = {Safe End-to-End Imitation Learning for Model Predictive Control},
  author = {Lee, Keuntaek and Saigol, Kamil and Theodorou, Evangelos A.},
  year = 2019,
  month = feb,
  number = {arXiv:1803.10231},
  eprint = {1803.10231},
  primaryclass = {cs},
  publisher = {arXiv},
  doi = {10.48550/arXiv.1803.10231},
  urldate = {2025-12-16}
}

@inproceedings{beliaevImitationLearningEstimating2022,
  title = {Imitation {{Learning}} by {{Estimating Expertise}} of {{Demonstrators}}},
  booktitle = {Proceedings of the 39th {{International Conference}} on {{Machine Learning}}},
  author = {Beliaev, Mark and Shih, Andy and Ermon, Stefano and Sadigh, Dorsa and Pedarsani, Ramtin},
  year = 2022,
  month = jun,
  urldate = {2025-12-16}
}

@inproceedings{
sasaki2021behavioral,
title={{Behavioral Cloning from Noisy Demonstrations}},
author={Fumihiro Sasaki and Ryota Yamashina},
booktitle={International Conference on Learning Representations},
year={2021},
url={https://openreview.net/forum?id=zrT3HcsWSAt}
}

@inproceedings{wuImitationLearningImperfect2019,
  title = {Imitation {{Learning}} from {{Imperfect Demonstration}}},
  booktitle = {Proceedings of the 36th {{International Conference}} on {{Machine Learning}}},
  author = {Wu, Yueh-Hua and Charoenphakdee, Nontawat and Bao, Han and Tangkaratt, Voot and Sugiyama, Masashi},
  year = 2019,
  month = may,
  urldate = {2025-12-16}
}

@article{grollmanRobotLearningFailed2012,
  title = {Robot {{Learning}} from {{Failed Demonstrations}}},
  author = {Grollman, Daniel H. and Billard, Aude G.},
  year = 2012,
  month = jun,
  journal = {International Journal of Social Robotics},
  volume = {4},
  number = {4},
  pages = {331--342},
  publisher = {Springer},
  issn = {1875-4805},
  doi = {10.1007/s12369-012-0161-z},
  urldate = {2025-12-16}
}

@misc{shiWaypointBasedImitationLearning2023,
  title = {Waypoint-{{Based Imitation Learning}} for {{Robotic Manipulation}}},
  author = {Shi, Lucy Xiaoyang and Sharma, Archit and Zhao, Tony Z. and Finn, Chelsea},
  year = 2023,
  month = jul,
  number = {arXiv:2307.14326},
  eprint = {2307.14326},
  primaryclass = {cs},
  publisher = {arXiv},
  doi = {10.48550/arXiv.2307.14326},
  urldate = {2025-12-16}
}

@inproceedings{leSmoothImitationLearning2016,
  title = {Smooth {{Imitation Learning}} for {{Online Sequence Prediction}}},
  booktitle = {Proceedings of {{The}} 33rd {{International Conference}} on {{Machine Learning}}},
  author = {Le, Hoang and Kang, Andrew and Yue, Yisong and Carr, Peter},
  year = 2016,
  month = jun,
  urldate = {2025-12-16}
}

@inproceedings{heImitationLearningCoaching2012,
  title = {Imitation {{Learning}} by {{Coaching}}},
  booktitle = {Advances in {{Neural Information Processing Systems}}},
  author = {He, He and Eisner, Jason and Daume, Hal},
  year = 2012,
  volume = {25},
  publisher = {Curran Associates, Inc.},
  urldate = {2025-12-16}
}

@misc{judahActiveImitationLearning2012,
  title = {Active {{Imitation Learning}} via {{Reduction}} to {{I}}.{{I}}.{{D}}. {{Active Learning}}},
  author = {Judah, Kshitij and Fern, Alan and Dietterich, Thomas G.},
  year = 2012,
  month = oct,
  number = {arXiv:1210.4876},
  eprint = {1210.4876},
  primaryclass = {cs},
  publisher = {arXiv},
  doi = {10.48550/arXiv.1210.4876},
  urldate = {2025-12-16}
}

@misc{reddySQILImitationLearning2019,
  title = {{{SQIL}}: {{Imitation Learning}} via {{Reinforcement Learning}} with {{Sparse Rewards}}},
  shorttitle = {{{SQIL}}},
  author = {Reddy, Siddharth and Dragan, Anca D. and Levine, Sergey},
  year = 2019,
  month = sep,
  number = {arXiv:1905.11108},
  eprint = {1905.11108},
  primaryclass = {cs},
  publisher = {arXiv},
  doi = {10.48550/arXiv.1905.11108},
  urldate = {2025-12-16}
}

@inproceedings{
Brantley2020Disagreement-Regularized,
title={Disagreement-Regularized Imitation Learning},
author={Kiante Brantley and Wen Sun and Mikael Henaff},
booktitle={International Conference on Learning Representations},
year={2020},
url={https://openreview.net/forum?id=rkgbYyHtwB}
}

@inproceedings{wangRandomExpertDistillation2019,
  title = {Random {{Expert Distillation}}: {{Imitation Learning}} via {{Expert Policy Support Estimation}}},
  shorttitle = {Random {{Expert Distillation}}},
  booktitle = {Proceedings of the 36th {{International Conference}} on {{Machine Learning}}},
  author = {Wang, Ruohan and Ciliberto, Carlo and Amadori, Pierluigi Vito and Demiris, Yiannis},
  year = 2019,
  month = may,
  pages = {6536--6544},
  publisher = {PMLR},
  issn = {2640-3498},
  urldate = {2025-12-16},
  abstract = {We consider the problem of imitation learning from a finite set of expert trajectories, without access to reinforcement signals. The classical approach of extracting the expert's reward function via inverse reinforcement learning, followed by reinforcement learning is indirect and may be computationally expensive. Recent generative adversarial methods based on matching the policy distribution between the expert and the agent could be unstable during training. We propose a new framework for imitation learning by estimating the support of the expert policy to compute a fixed reward function, which allows us to re-frame imitation learning within the standard reinforcement learning setting. We demonstrate the efficacy of our reward function on both discrete and continuous domains, achieving comparable or better performance than the state of the art under different reinforcement learning algorithms.},
  langid = {english},
  file = {C:\Users\sande\Zotero\storage\GZTDU6CK\Wang et al. - 2019 - Random Expert Distillation Imitation Learning via Expert Policy Support Estimation.pdf}
}

@misc{zhangQueryEfficientImitationLearning2016,
  title = {Query-{{Efficient Imitation Learning}} for {{End-to-End Autonomous Driving}}},
  author = {Zhang, Jiakai and Cho, Kyunghyun},
  year = 2016,
  month = may,
  number = {arXiv:1605.06450},
  eprint = {1605.06450},
  primaryclass = {cs},
  publisher = {arXiv},
  doi = {10.48550/arXiv.1605.06450},
  urldate = {2025-12-16},
  abstract = {One way to approach end-to-end autonomous driving is to learn a policy function that maps from a sensory input, such as an image frame from a front-facing camera, to a driving action, by imitating an expert driver, or a reference policy. This can be done by supervised learning, where a policy function is tuned to minimize the difference between the predicted and ground-truth actions. A policy function trained in this way however is known to suffer from unexpected behaviours due to the mismatch between the states reachable by the reference policy and trained policy functions. More advanced algorithms for imitation learning, such as DAgger, addresses this issue by iteratively collecting training examples from both reference and trained policies. These algorithms often requires a large number of queries to a reference policy, which is undesirable as the reference policy is often expensive. In this paper, we propose an extension of the DAgger, called SafeDAgger, that is query-efficient and more suitable for end-to-end autonomous driving. We evaluate the proposed SafeDAgger in a car racing simulator and show that it indeed requires less queries to a reference policy. We observe a significant speed up in convergence, which we conjecture to be due to the effect of automated curriculum learning.},
  archiveprefix = {arXiv},
  keywords = {Computer Science - Artificial Intelligence,Computer Science - Machine Learning,Computer Science - Robotics},
  file = {C\:\\Users\\sande\\Zotero\\storage\\QAA7IQ78\\Zhang and Cho - 2016 - Query-Efficient Imitation Learning for End-to-End Autonomous Driving.pdf;C\:\\Users\\sande\\Zotero\\storage\\2DDYZDAW\\1605.html}
}

@misc{hoqueThriftyDAggerBudgetAwareNovelty2021,
  title = {{{ThriftyDAgger}}: {{Budget-Aware Novelty}} and {{Risk Gating}} for {{Interactive Imitation Learning}}},
  shorttitle = {{{ThriftyDAgger}}},
  author = {Hoque, Ryan and Balakrishna, Ashwin and Novoseller, Ellen and Wilcox, Albert and Brown, Daniel S. and Goldberg, Ken},
  year = 2021,
  month = sep,
  number = {arXiv:2109.08273},
  eprint = {2109.08273},
  primaryclass = {cs},
  publisher = {arXiv},
  doi = {10.48550/arXiv.2109.08273},
  urldate = {2025-12-16},
  abstract = {Effective robot learning often requires online human feedback and interventions that can cost significant human time, giving rise to the central challenge in interactive imitation learning: is it possible to control the timing and length of interventions to both facilitate learning and limit burden on the human supervisor? This paper presents ThriftyDAgger, an algorithm for actively querying a human supervisor given a desired budget of human interventions. ThriftyDAgger uses a learned switching policy to solicit interventions only at states that are sufficiently (1) novel, where the robot policy has no reference behavior to imitate, or (2) risky, where the robot has low confidence in task completion. To detect the latter, we introduce a novel metric for estimating risk under the current robot policy. Experiments in simulation and on a physical cable routing experiment suggest that ThriftyDAgger's intervention criteria balances task performance and supervisor burden more effectively than prior algorithms. ThriftyDAgger can also be applied at execution time, where it achieves a 100\% success rate on both the simulation and physical tasks. A user study (N=10) in which users control a three-robot fleet while also performing a concentration task suggests that ThriftyDAgger increases human and robot performance by 58\% and 80\% respectively compared to the next best algorithm while reducing supervisor burden.},
  archiveprefix = {arXiv},
  keywords = {Computer Science - Artificial Intelligence,Computer Science - Robotics}
}

@inproceedings{kellyHGDAggerInteractiveImitation2019,
  title = {{{HG-DAgger}}: {{Interactive Imitation Learning}} with {{Human Experts}}},
  shorttitle = {{{HG-DAgger}}},
  booktitle = {2019 {{International Conference}} on {{Robotics}} and {{Automation}} ({{ICRA}})},
  author = {Kelly, Michael and Sidrane, Chelsea and {Driggs-Campbell}, Katherine and Kochenderfer, Mykel J.},
  year = 2019,
  month = may,
  pages = {8077--8083},
  issn = {2577-087X},
  doi = {10.1109/ICRA.2019.8793698},
  urldate = {2025-12-16},
  abstract = {Imitation learning has proven to be useful for many real-world problems, but approaches such as behavioral cloning suffer from data mismatch and compounding error issues. One attempt to address these limitations is the DAgger algorithm, which uses the state distribution induced by the novice to sample corrective actions from the expert. Such sampling schemes, however, require the expert to provide action labels without being fully in control of the system. This can decrease safety and, when using humans as experts, is likely to degrade the quality of the collected labels due to perceived actuator lag. In this work, we propose HG-DAgger, a variant of DAgger that is more suitable for interactive imitation learning from human experts in real-world systems. In addition to training a novice policy, HG-DAgger also learns a safety threshold for a model-uncertainty-based risk metric that can be used to predict the performance of the fully trained novice in different regions of the state space. We evaluate our method on both a simulated and real-world autonomous driving task, and demonstrate improved performance over both DAgger and behavioral cloning.},
  keywords = {Cloning,Measurement,Safety,Task analysis,Training,Trajectory}
}

@article{lemeroSurveyImitationLearning2022,
  title = {A {{Survey}} on {{Imitation Learning Techniques}} for {{End-to-End Autonomous Vehicles}}},
  author = {Le Mero, Luc and Yi, Dewei and Dianati, Mehrdad and Mouzakitis, Alexandros},
  year = 2022,
  month = sep,
  journal = {IEEE Transactions on Intelligent Transportation Systems},
  volume = {23},
  number = {9},
  pages = {14128--14147},
  issn = {1558-0016},
  doi = {10.1109/TITS.2022.3144867},
  urldate = {2025-12-15},
  abstract = {The state-of-the-art decision and planning approaches for autonomous vehicles have moved away from manually designed systems, instead focusing on the utilisation of large-scale datasets of expert demonstration via Imitation Learning (IL). In this paper, we present a comprehensive review of IL approaches, primarily for the paradigm of end-to-end based systems in autonomous vehicles. We classify the literature into three distinct categories: 1) Behavioural Cloning (BC), 2) Direct Policy Learning (DPL) and 3) Inverse Reinforcement Learning (IRL). For each of these categories, the current state-of-the-art literature is comprehensively reviewed and summarised, with future directions of research identified to facilitate the development of imitation learning based systems for end-to-end autonomous vehicles. Due to the data-intensive nature of deep learning techniques, currently available datasets and simulators for end-to-end autonomous driving are also reviewed.},
  keywords = {autonomous systems,autonomous vehicles,Autonomous vehicles,Cameras,Cloning,Deep learning,Intelligent vehicles,learning,machine learning,neural networks,Task analysis,Training,Uncertainty},
  file = {C:\Users\sande\Zotero\storage\HUEZZCM2\9700770.html}
}

@article{ravichandarRecentAdvancesRobot2020,
  title = {Recent {{Advances}} in {{Robot Learning}} from {{Demonstration}}},
  author = {Ravichandar, Harish and Polydoros, Athanasios S. and Chernova, Sonia and Billard, Aude},
  year = 2020,
  month = may,
  publisher = {Annual Reviews},
  doi = {10.1146/annurev-control-100819-063206},
  urldate = {2025-12-15},
  abstract = {In the context of robotics and automation, learning from demonstration (LfD) is the paradigm in which robots acquire new skills by learning to imitate an expert. The choice of LfD over other robot learning methods is compelling when ideal behavior can be neither easily scripted (as is done in traditional robot programming) nor easily defined as an optimization problem, but can be demonstrated. While there have been multiple surveys of this field in the past, there is a need for a new one given the considerable growth in the number of publications in recent years. This review aims to provide an overview of the collection of machine-learning methods used to enable a robot to learn from and imitate a teacher. We focus on recent advancements in the field and present an updated taxonomy and characterization of existing methods. We also discuss mature and emerging application areas for LfD and highlight the significant challenges that remain to be overcome both in theory and in practice.},
  langid = {english},
  file = {C:\Users\sande\Zotero\storage\AEI7F8F5\Ravichandar et al. - 2020 - Recent Advances in Robot Learning from Demonstration.pdf}
}

@article{ROB-053,
url = {http://dx.doi.org/10.1561/2300000053},
year = {2018},
volume = {7},
journal = {Foundations and Trends® in Robotics},
title = {An Algorithmic Perspective on Imitation Learning},
doi = {10.1561/2300000053},
issn = {1935-8253},
number = {1-2},
pages = {1-179},
author = {Takayuki Osa and Joni Pajarinen and Gerhard Neumann and J. Andrew Bagnell and Pieter Abbeel and Jan Peters}
}

@article{argallSurveyRobotLearning2009,
  title = {A Survey of Robot Learning from Demonstration},
  author = {Argall, Brenna D. and Chernova, Sonia and Veloso, Manuela and Browning, Brett},
  year = 2009,
  month = may,
  journal = {Robotics and Autonomous Systems},
  volume = {57},
  number = {5},
  pages = {469--483},
  issn = {09218890},
  doi = {10.1016/j.robot.2008.10.024},
  urldate = {2025-12-15},
  abstract = {We present a comprehensive survey of robot Learning from Demonstration (LfD), a technique that develops policies from example state to action mappings. We introduce the LfD design choices in terms of demonstrator, problem space, policy derivation and performance, and contribute the foundations for a structure in which to categorize LfD research. Specifically, we analyze and categorize the multiple ways in which examples are gathered, ranging from teleoperation to imitation, as well as the various techniques for policy derivation, including matching functions, dynamics models and plans. To conclude we discuss LfD limitations and related promising areas for future research.},
  copyright = {https://www.elsevier.com/tdm/userlicense/1.0/},
  langid = {english},
  file = {C:\Users\sande\Zotero\storage\YSSXU9RG\Argall et al. - 2009 - A survey of robot learning from demonstration.pdf}
}

@book{calinonRobotProgrammingDemonstration2009,
  title = {Robot {{Programming}} by {{Demonstration}}},
  author = {Calinon, Sylvain},
  year = 2009,
  month = aug,
  publisher = {EPFL Press},
  abstract = {Also referred to as learning by imitation, tutelage, or apprenticeship learning, Programming by Demonstration (PbD) develops methods by which new skills can be transmitted to a robot. This book examines methods by which robots learn new skills through human guidance. Taking a practical perspective, it covers a broad range of applications, including service robots. The text addresses the challenges involved in investigating methods by which PbD is used to provide robots with a generic and adaptive model of control. Drawing on findings from robot control, human-robot interaction, applied machine learning, artificial intelligence, and developmental and cognitive psychology, the book contains a large set of didactic and illustrative examples. Practical and comprehensive machine learning source codes are available on the book's companion website: http://www.programming-by-demonstration.org},
  isbn = {978-1-4398-0867-2},
  langid = {english},
  keywords = {Computers / Programming / General,Computers / Software Development & Engineering / Systems Analysis & Design,Technology & Engineering / Electrical,Technology & Engineering / Electronics / General,Technology & Engineering / Power Resources / General}
}

@article{schaalImitationLearningRoute,
  title = {Is Imitation Learning the Route to Humanoid Robots?},
  author = {Schaal, Stefan},
  urldate = {2025-12-15},
  langid = {english},
  keywords = {Action-perception coupling,Humanoid robot,Imitation,Learning,Mirror neurons,Motor control,Neuroscience}
}

@inproceedings{abbeelApprenticeshipLearningInverse2004c,
  title = {Apprenticeship Learning via Inverse Reinforcement Learning},
  booktitle = {Proceedings of the Twenty-First International Conference on {{Machine}} Learning},
  author = {Abbeel, Pieter and Ng, Andrew Y.},
  year = 2004,
  month = jul,
  series = {{{ICML}} '04},
  pages = {1},
  publisher = {Association for Computing Machinery},
  address = {New York, NY, USA},
  doi = {10.1145/1015330.1015430},
  urldate = {2025-12-15},
  abstract = {We consider learning in a Markov decision process where we are not explicitly given a reward function, but where instead we can observe an expert demonstrating the task that we want to learn to perform. This setting is useful in applications (such as the task of driving) where it may be difficult to write down an explicit reward function specifying exactly how different desiderata should be traded off. We think of the expert as trying to maximize a reward function that is expressible as a linear combination of known features, and give an algorithm for learning the task demonstrated by the expert. Our algorithm is based on using "inverse reinforcement learning" to try to recover the unknown reward function. We show that our algorithm terminates in a small number of iterations, and that even though we may never recover the expert's reward function, the policy output by the algorithm will attain performance close to that of the expert, where here performance is measured with respect to the expert's unknown reward function.},
  isbn = {978-1-58113-838-2}
}

@misc{panAgileAutonomousDriving2019,
  title = {Agile {{Autonomous Driving}} Using {{End-to-End Deep Imitation Learning}}},
  author = {Pan, Yunpeng and Cheng, Ching-An and Saigol, Kamil and Lee, Keuntaek and Yan, Xinyan and Theodorou, Evangelos and Boots, Byron},
  year = 2019,
  month = aug,
  number = {arXiv:1709.07174},
  eprint = {1709.07174},
  primaryclass = {cs},
  publisher = {arXiv},
  doi = {10.48550/arXiv.1709.07174},
  urldate = {2025-12-15},
  abstract = {We present an end-to-end imitation learning system for agile, off-road autonomous driving using only low-cost sensors. By imitating a model predictive controller equipped with advanced sensors, we train a deep neural network control policy to map raw, high-dimensional observations to continuous steering and throttle commands. Compared with recent approaches to similar tasks, our method requires neither state estimation nor on-the-fly planning to navigate the vehicle. Our approach relies on, and experimentally validates, recent imitation learning theory. Empirically, we show that policies trained with online imitation learning overcome well-known challenges related to covariate shift and generalize better than policies trained with batch imitation learning. Built on these insights, our autonomous driving system demonstrates successful high-speed off-road driving, matching the state-of-the-art performance.},
  archiveprefix = {arXiv},
  keywords = {Computer Science - Robotics}
}

@article{yinImitationLearningStability2022,
  title = {Imitation {{Learning With Stability}} and {{Safety Guarantees}}},
  author = {Yin, He and Seiler, Peter and Jin, Ming and Arcak, Murat},
  year = 2022,
  journal = {IEEE Control Systems Letters},
  volume = {6},
  pages = {409--414},
  issn = {2475-1456},
  doi = {10.1109/LCSYS.2021.3077861},
  urldate = {2025-12-09},
  abstract = {A method is presented to learn neural network (NN) controllers with stability and safety guarantees through imitation learning (IL). Convex stability and safety conditions are derived for linear time-invariant systems with NN controllers by merging Lyapunov theory with local quadratic constraints to bound the activation functions in the NN. These conditions are incorporated in the IL process, which minimizes the IL loss, and maximizes the volume of the region of attraction associated with the NN controller simultaneously. An alternating direction method of multipliers based algorithm is proposed to solve the IL problem. The method is illustrated on a vehicle lateral control example.},
  keywords = {Artificial neural networks,Linear systems,Lyapunov methods,neural networks,Safety,Stability criteria,Training,US Government,Vehicle dynamics},
  file = {C:\Users\sande\Zotero\storage\KJKUH2MS\Yin et al. - 2022 - Imitation Learning With Stability and Safety Guarantees.pdf}
}

@article{hertneckLearningApproximateModel2018,
  title = {Learning an {{Approximate Model Predictive Controller With Guarantees}}},
  author = {Hertneck, Michael and K{\"o}hler, Johannes and Trimpe, Sebastian and Allg{\"o}wer, Frank},
  year = 2018,
  month = jul,
  journal = {IEEE Control Systems Letters},
  volume = {2},
  number = {3},
  pages = {543--548},
  issn = {2475-1456},
  doi = {10.1109/LCSYS.2018.2843682},
  urldate = {2025-12-09},
  abstract = {A supervised learning framework is proposed to approximate a model predictive controller (MPC) with reduced computational complexity and guarantees on stability and constraint satisfaction. The framework can be used for a wide class of nonlinear systems. Any standard supervised learning technique (e.g., neural networks) can be employed to approximate the MPC from samples. In order to obtain closed-loop guarantees for the learned MPC, a robust MPC design is combined with statistical learning bounds. The MPC design ensures robustness to inaccurate inputs within given bounds, and Hoeffding's Inequality is used to validate that the learned MPC satisfies these bounds with high confidence. The result is a closed-loop statistical guarantee on stability and constraint satisfaction for the learned MPC. The proposed learning-based MPC framework is illustrated on a nonlinear benchmark problem, for which we learn a neural network controller with guarantees.},
  keywords = {Approximation error,Artificial neural networks,constrained control,machine learning,Nonlinear systems,Numerical stability,Optimization,Predictive control for nonlinear systems,Robustness,Supervised learning},
  file = {C:\Users\sande\Zotero\storage\ZMJK9S9B\Hertneck et al. - 2018 - Learning an Approximate Model Predictive Controller With Guarantees.pdf}
}

@inproceedings{tuSampleComplexityStability2022,
  title = {On the {{Sample Complexity}} of {{Stability Constrained Imitation Learning}}},
  booktitle = {Proceedings of {{The}} 4th {{Annual Learning}} for {{Dynamics}} and {{Control Conference}}},
  author = {Tu, Stephen and Robey, Alexander and Zhang, Tingnan and Matni, Nikolai},
  year = 2022,
  month = may,
  urldate = {2025-12-09},
  abstract = {We study the following question in the context of imitation learning for continuous control: how are the underlying stability properties of an expert policy reflected in the sample complexity of an imitation learning task?  We provide the first results showing that a granular connection can be made between the expert system's incremental gain stability, a novel measure of robust convergence between pairs of system trajectories, and the dependency on the task horizon T of the resulting generalization bounds. As a special case, we delineate a class of systems for which the number of trajectories needed to achieve epsilon-suboptimality is sublinear in the task horizon T, and do so without requiring (strong) convexity of the loss function in the policy parameters.  Finally, we conduct numerical experiments demonstrating the validity of our insights on both a simple nonlinear system with tunable stability properties, and on a high-dimensional quadrupedal robotic simulation.},
  langid = {english},
  file = {C:\Users\sande\Zotero\storage\Z3XEETJX\Tu et al. - 2022 - On the Sample Complexity of Stability Constrained Imitation Learning.pdf}
}

@misc{hoGenerativeAdversarialImitation2016a,
  title = {Generative {{Adversarial Imitation Learning}}},
  author = {Ho, Jonathan and Ermon, Stefano},
  year = 2016,
  month = jun,
  number = {arXiv:1606.03476},
  eprint = {1606.03476},
  primaryclass = {cs},
  publisher = {arXiv},
  doi = {10.48550/arXiv.1606.03476},
  urldate = {2025-12-08},
  abstract = {Consider learning a policy from example expert behavior, without interaction with the expert or access to reinforcement signal. One approach is to recover the expert's cost function with inverse reinforcement learning, then extract a policy from that cost function with reinforcement learning. This approach is indirect and can be slow. We propose a new general framework for directly extracting a policy from data, as if it were obtained by reinforcement learning following inverse reinforcement learning. We show that a certain instantiation of our framework draws an analogy between imitation learning and generative adversarial networks, from which we derive a model-free imitation learning algorithm that obtains significant performance gains over existing model-free methods in imitating complex behaviors in large, high-dimensional environments.},
  archiveprefix = {arXiv},
  keywords = {Computer Science - Artificial Intelligence,Computer Science - Machine Learning},
  file = {C\:\\Users\\sande\\Zotero\\storage\\347ZSLDM\\Ho and Ermon - 2016 - Generative Adversarial Imitation Learning.pdf;C\:\\Users\\sande\\Zotero\\storage\\5Q2P5EUD\\1606.html}
}

@misc{L1DRAC,
  author={Gahlawat, Aditya and H. Karumanchi, Sambhu and Hovakimyan, Naira},
  title={{$\mathcal{L}_{1}$-DRAC}: Distributionally Robust Adaptive Control},
  howpublished = {\url{https://adityagahlawat.github.io/Preprints/DRAC.pdf}},
  year         = {2024},
}

@book{hovakimyan2010ℒ1,
  title={$\mathcal{L}_1$ Adaptive Control Theory: {G}uaranteed Robustness with Fast Adaptation},
  author={Hovakimyan, Naira and Cao, Chengyu},
  year={2010},
  publisher={SIAM}
}

@article{matniQuantitativeFrameworkLayered2024,
  title = {A {{Quantitative Framework}} for {{Layered Multirate Control}}: {{Toward}} a {{Theory}} of {{Control Architecture}}},
  shorttitle = {A {{Quantitative Framework}} for {{Layered Multirate Control}}},
  author = {Matni, Nikolai and Ames, Aaron D. and Doyle, John C.},
  year = {2024},
  month = jun,
  journal = {IEEE Control Systems Magazine},
  volume = {44},
  number = {3},
  pages = {52--94},
  issn = {1941-000X},
  doi = {10.1109/MCS.2024.3382388},
  urldate = {2025-05-20},
  abstract = {Complex engineered and natural control systems, such as those used in robotics, the power grid, human sensorimotor control, and the Internet, are characterized by needing to operate robustly and reliably across many spatiotemporal scales despite being implemented using highly constrained hardware and software. Remarkably, a universal design pattern centered around layered control architectures (LCAs) has emerged to address these challenges across vastly different domains. These LCAs are the central object of study of this article (see ``Summary'').},
  keywords = {Aerospace control,Aircraft navigation,Complex systems,Computer architecture,Feedback control,Hardware,Power grids,Reliability engineering,Robot sensing systems,Robust control,Sensor systems,Spatiotemporal phenomena,Trajectory planning},
  file = {C:\Users\sande\Zotero\storage\GHKLG64C\Matni et al. - 2024 - A Quantitative Framework for Layered Multirate Control Toward a Theory of Control Architecture.pdf}
}

@misc{matniTheoryControlArchitecture2024,
  title = {Towards a {{Theory}} of {{Control Architecture}}: {{A}} Quantitative Framework for Layered Multi-Rate Control},
  shorttitle = {Towards a {{Theory}} of {{Control Architecture}}},
  author = {Matni, Nikolai and Ames, Aaron D. and Doyle, John C.},
  year = {2024},
  month = jan,
  number = {arXiv:2401.15185},
  eprint = {2401.15185},
  primaryclass = {math},
  publisher = {arXiv},
  doi = {10.48550/arXiv.2401.15185},
  urldate = {2025-05-20},
  abstract = {This paper focuses on the need for a rigorous theory of layered control architectures (LCAs) for complex engineered and natural systems, such as power systems, communication networks, autonomous robotics, bacteria, and human sensorimotor control. All deliver extraordinary capabilities, but they lack a coherent theory of analysis and design, partly due to the diverse domains across which LCAs can be found. In contrast, there is a core universal set of control concepts and theory that applies very broadly and accommodates necessary domain-specific specializations. However, control methods are typically used only to design algorithms in components within a larger system designed by others, typically with minimal or no theory. This points towards a need for natural but large extensions of robust performance from control to the full decision and control stack. It is encouraging that the successes of extant architectures from bacteria to the Internet are due to strikingly universal mechanisms and design patterns. This is largely due to convergent evolution by natural selection and not intelligent design, particularly when compared with the sophisticated design of components. Our aim here is to describe the universals of architecture and sketch tentative paths towards a useful design theory.},
  archiveprefix = {arXiv},
  keywords = {Computer Science - Robotics,Computer Science - Systems and Control,Electrical Engineering and Systems Science - Systems and Control,Mathematics - Optimization and Control},
  file = {C\:\\Users\\sande\\Zotero\\storage\\AF2P27FT\\Matni et al. - 2024 - Towards a Theory of Control Architecture A quantitative framework for layered multi-rate control.pdf;C\:\\Users\\sande\\Zotero\\storage\\TK6N9V66\\2401.html}
}

@inproceedings{matniTheoryDynamicsControl2016,
  title = {A {{Theory}} of Dynamics, Control and Optimization in Layered Architectures},
  booktitle = {2016 {{American Control Conference}} ({{ACC}})},
  author = {Matni, Nikolai and Doyle, John C.},
  year = {2016},
  month = jul,
  pages = {2886--2893},
  issn = {2378-5861},
  doi = {10.1109/ACC.2016.7525357},
  urldate = {2025-05-20},
  abstract = {The controller of a large-scale distributed system (e.g., the internet, the power-grid and automated highway systems) is often faced with two complementary tasks: (i) that of finding an optimal trajectory with respect to a functional or economic utility, and (ii) that of efficiently making the state of the system follow this trajectory despite model uncertainty, process and sensor noise and distributed information sharing constraints. While each of these tasks has been addressed individually, there exists as of yet no controller synthesis framework that treats these two problems in a holistic manner. This paper proposes a unifying optimization based methodology that jointly addresses these two tasks by leveraging the strengths of well established frameworks for distributed control: the Layering as Optimization (LAO) framework and the distributed optimal control framework. We show that our proposed control scheme has a natural layered architecture composed of a low-level tracking layer and top-level planning layer. The tracking layer consists of a distributed optimal controller that takes as an input a reference trajectory generated by the top-level layer, where this top-level layer consists of a trajectory planning problem that optimizes a weighted sum of a utility function and a ``tracking penalty'' regularizer. We further provide an exact solution to the tracking layer problem under a broad range of information sharing constraints, discuss extensions to the proposed problem formulation, and demonstrate the effectiveness of our approach on a numerical example.},
  keywords = {Cost function,Information management,Optimal control,Planning,Process control,Trajectory},
  file = {C:\Users\sande\Zotero\storage\J83YPGFP\Matni and Doyle - 2016 - A Theory of dynamics, control and optimization in layered architectures.pdf}
}

@article{singhRobustFeedbackMotion2023,
  title = {Robust Feedback Motion Planning via Contraction Theory},
  author = {Singh, Sumeet and Landry, Benoit and Majumdar, Anirudha and Slotine, Jean-Jacques and Pavone, Marco},
  year = {2023},
  month = aug,
  journal = {The International Journal of Robotics Research},
  volume = {42},
  number = {9},
  pages = {655--688},
  publisher = {SAGE Publications Ltd STM},
  issn = {0278-3649},
  doi = {10.1177/02783649231186165},
  urldate = {2025-05-20},
  abstract = {We present a framework for online generation of robust motion plans for robotic systems with nonlinear dynamics subject to bounded disturbances, control constraints, and online state constraints such as obstacles. In an offline phase, one computes the structure of a feedback controller that can be efficiently implemented online to track any feasible nominal trajectory. The offline phase leverages contraction theory, specifically, Control Contraction Metrics, and convex optimization to characterize a fixed-size ?tube? that the state is guaranteed to remain within while tracking a nominal trajectory (representing the center of the tube). In the online phase, when the robot is faced with obstacles, a motion planner uses such a tube as a robustness margin for collision checking, yielding nominal trajectories that can be safely executed, that is, tracked without collisions under disturbances. In contrast to recent work on robust online planning using funnel libraries, our approach is not restricted to a fixed library of maneuvers computed offline and is thus particularly well-suited to applications such as UAV flight in densely cluttered environments where complex maneuvers may be required to reach a goal. We demonstrate our approach through numerical simulations of planar and 3D quadrotors, and hardware results on a quadrotor platform navigating a complex obstacle environment while subject to aerodynamic disturbances. The results demonstrate the ability of our approach to jointly balance motion safety and efficiency for agile robotic systems.},
  file = {C:\Users\sande\Zotero\storage\NVSPG7FA\Singh et al. - 2023 - Robust feedback motion planning via contraction theory.pdf}
}

@inproceedings{herbertFaSTrackModularFramework2017,
  title = {{{FaSTrack}}: {{A}} Modular Framework for Fast and Guaranteed Safe Motion Planning},
  shorttitle = {{{FaSTrack}}},
  booktitle = {2017 {{IEEE}} 56th {{Annual Conference}} on {{Decision}} and {{Control}} ({{CDC}})},
  author = {Herbert, Sylvia L. and Chen, Mo and Han, SooJean and Bansal, Somil and Fisac, Jaime F. and Tomlin, Claire J.},
  year = {2017},
  month = dec,
  pages = {1517--1522},
  doi = {10.1109/CDC.2017.8263867},
  urldate = {2025-05-20},
  abstract = {Fast and safe navigation of dynamical systems through a priori unknown cluttered environments is vital to many applications of autonomous systems. However, trajectory planning for autonomous systems is computationally intensive, often requiring simplified dynamics that sacrifice safety and dynamic feasibility in order to plan efficiently. Conversely, safe trajectories can be computed using more sophisticated dynamic models, but this is typically too slow to be used for real-time planning. We present the new algorithm FaSTrack: Fast and Safe Tracking. A path or trajectory planner using simplified dynamics to plan quickly can be incorporated into the FaSTrack framework, which provides a safety controller for the vehicle along with a guaranteed tracking error bound. This bound captures all possible deviations due to high dimensional dynamics and external disturbances. FaSTrack is modular and can be used with most current path or trajectory planners. We demonstrate this framework using a 10D nonlinear quadrotor model tracking a 3D path obtained from an RRT planner.},
  keywords = {Planning,Real-time systems,Robustness,Safety,Tracking,Trajectory,Vehicle dynamics},
  file = {C:\Users\sande\Zotero\storage\SPRZM28U\Herbert et al. - 2017 - FaSTrack A modular framework for fast and guaranteed safe motion planning.pdf}
}

@article{mitchellTimedependentHamiltonJacobiFormulation2005,
  title = {A Time-Dependent {{Hamilton-Jacobi}} Formulation of Reachable Sets for Continuous Dynamic Games},
  author = {Mitchell, I.M. and Bayen, A.M. and Tomlin, C.J.},
  year = {2005},
  month = jul,
  journal = {IEEE Transactions on Automatic Control},
  volume = {50},
  number = {7},
  pages = {947--957},
  issn = {1558-2523},
  doi = {10.1109/TAC.2005.851439},
  urldate = {2025-05-20},
  abstract = {We describe and implement an algorithm for computing the set of reachable states of a continuous dynamic game. The algorithm is based on a proof that the reachable set is the zero sublevel set of the viscosity solution of a particular time-dependent Hamilton-Jacobi-Isaacs partial differential equation. While alternative techniques for computing the reachable set have been proposed, the differential game formulation allows treatment of nonlinear systems with inputs and uncertain parameters. Because the time-dependent equation's solution is continuous and defined throughout the state space, methods from the level set literature can be used to generate more accurate approximations than are possible for formulations with potentially discontinuous solutions. A numerical implementation of our formulation is described and has been released on the web. Its correctness is verified through a two vehicle, three dimensional collision avoidance example for which an analytic solution is available.},
  keywords = {Aircraft,Collaborative software,Collision avoidance,Computational modeling,Differential games,Hamilton-Jacobi equations,Nonlinear equations,Nonlinear systems,Partial differential equations,reachability,Trajectory,Vehicle dynamics,verification,Viscosity},
  file = {C:\Users\sande\Zotero\storage\J39MD9LA\Mitchell et al. - 2005 - A time-dependent Hamilton-Jacobi formulation of reachable sets for continuous dynamic games.pdf}
}

@inproceedings{NEURIPS2022_7f10c3d6,
  title = {{{TaSIL}}: {{Taylor}} Series Imitation Learning},
  booktitle = {Advances in Neural Information Processing Systems},
  author = {Pfrommer, Daniel and Zhang, Thomas and Tu, Stephen and Matni, Nikolai},
  editor = {Koyejo, S. and Mohamed, S. and Agarwal, A. and Belgrave, D. and Cho, K. and Oh, A.},
  year = {2022},
  volume = {35},
  pages = {20162--20174},
  publisher = {Curran Associates, Inc.}
}

@inproceedings{laskeyDARTNoiseInjection2017,
  title = {{{DART}}: {{Noise Injection}} for {{Robust Imitation Learning}}},
  shorttitle = {{{DART}}},
  booktitle = {Proceedings of the 1st {{Annual Conference}} on {{Robot Learning}}},
  author = {Laskey, Michael and Lee, Jonathan and Fox, Roy and Dragan, Anca and Goldberg, Ken},
  year = {2017},
  month = oct,
  pages = {143--156},
  publisher = {PMLR},
  issn = {2640-3498},
  urldate = {2025-05-20},
  abstract = {One approach to Imitation Learning is  Behavior Cloning, in which a robot observes a supervisor and infers a control policy. A known problem with this ``off-policy" approach is that the robot's errors compound when drifting away from the supervisor's demonstrations.  On-policy, techniques alleviate this by iteratively collecting corrective actions for the current robot policy. However, these techniques can be difficult for human supervisors, add significant computation burden, and require the robot to visit potentially dangerous states during training.  We propose an off-policy approach that {\textbackslash}emphinjects noise into the supervisor's policy while demonstrating. This forces the supervisor and robot to explore and recover from errors without letting them compound. We propose a new algorithm, DART, that collects demonstrations with injected noise, and optimizes the noise level to approximate the error of the robot's trained policy during data collection.  We  provide a theoretical analysis to illustrate that DART reduces covariate shift more than Behavior Cloning for a robot with non-zero error. We evaluate DART in two domains: in simulation with an algorithmic supervisor on the MuJoCo locomotive tasks and in physical experiments with human supervisors training a Toyota HSR robot to perform grasping in clutter.  For challenging tasks like Humanoid, DART can be up to \$280\%\$ faster in computation time and only decreases the supervisor's cumulative reward by \$5\%\$ during training, whereas DAgger executes policies that have \$80\%\$ less cumulative reward than the supervisor.  On the grasping in clutter task, DART obtains on average \$62\%\$ performance increase over Behavior Cloning.},
  langid = {english},
  file = {C:\Users\sande\Zotero\storage\DRA3CYMZ\Laskey et al. - 2017 - DART Noise Injection for Robust Imitation Learning.pdf}
}

@inproceedings{sunDualPolicyIteration2018,
  title = {Dual {{Policy Iteration}}},
  booktitle = {Advances in {{Neural Information Processing Systems}}},
  author = {Sun, Wen and Gordon, Geoffrey J and Boots, Byron and Bagnell, J.},
  year = {2018},
  volume = {31},
  publisher = {Curran Associates, Inc.},
  urldate = {2025-05-20},
  abstract = {Recently, a novel class of Approximate Policy Iteration (API) algorithms have demonstrated impressive practical performance (e.g., ExIt from [1], AlphaGo-Zero from [2]). This new family of algorithms maintains, and alternately optimizes, two policies: a fast, reactive policy (e.g., a deep neural network) deployed at test time, and a slow, non-reactive policy (e.g., Tree Search), that can plan multiple steps ahead. The reactive policy is updated under supervision from the non-reactive policy, while the non-reactive policy is improved with guidance from the reactive policy. In this work we study this Dual Policy Iteration (DPI) strategy in an alternating optimization framework and provide a convergence analysis that extends existing API theory. We also develop a special instance of this framework which reduces the update of non-reactive policies to model-based optimal control using learned local models, and provides a theoretically sound way of unifying model-free and model-based RL approaches with unknown dynamics. We demonstrate the efficacy of our approach on various continuous control Markov Decision Processes.},
  file = {C:\Users\sande\Zotero\storage\UMQ957SN\Sun et al. - 2018 - Dual Policy Iteration.pdf}
}

@inproceedings{rossReductionImitationLearning2011a,
  title = {A {{Reduction}} of {{Imitation Learning}} and {{Structured Prediction}} to {{No-Regret Online Learning}}},
  booktitle = {Proceedings of the {{Fourteenth International Conference}} on {{Artificial Intelligence}} and {{Statistics}}},
  author = {Ross, Stephane and Gordon, Geoffrey and Bagnell, Drew},
  year = {2011},
  month = jun,
  pages = {627--635},
  publisher = {{JMLR Workshop and Conference Proceedings}},
  issn = {1938-7228},
  urldate = {2025-05-20},
  abstract = {Sequential prediction problems such as imitation learning, where future observations depend on previous predictions (actions), violate the common i.i.d. assumptions made in statistical learning. This leads to poor performance in theory and often in practice. Some recent approaches provide stronger guarantees in this setting, but remain somewhat unsatisfactory as they train either non-stationary or stochastic policies and require a large number of iterations. In this paper, we propose a new iterative algorithm, which trains a stationary deterministic policy, that can be seen as a no regret algorithm in an online learning setting. We show that any such no regret algorithm, combined with additional reduction assumptions, must find a policy with good performance under the distribution of observations it induces in such sequential settings. We demonstrate that this new approach outperforms previous approaches on two challenging imitation learning problems and a benchmark sequence labeling problem.},
  langid = {english},
  file = {C:\Users\sande\Zotero\storage\FJZS3AN3\Ross et al. - 2011 - A Reduction of Imitation Learning and Structured Prediction to No-Regret Online Learning.pdf}
}

@inproceedings{rossReductionImitationLearning2011,
  title = {A {{Reduction}} of {{Imitation Learning}} and {{Structured Prediction}} to {{No-Regret Online Learning}}},
  booktitle = {Proceedings of the {{Fourteenth International Conference}} on {{Artificial Intelligence}} and {{Statistics}}},
  author = {Ross, Stephane and Gordon, Geoffrey and Bagnell, Drew},
  year = {2011},
  month = jun,
  pages = {627--635},
  publisher = {{JMLR Workshop and Conference Proceedings}},
  issn = {1938-7228},
  urldate = {2025-05-19},
  abstract = {Sequential prediction problems such as imitation learning, where future observations depend on previous predictions (actions), violate the common i.i.d. assumptions made in statistical learning. This leads to poor performance in theory and often in practice. Some recent approaches provide stronger guarantees in this setting, but remain somewhat unsatisfactory as they train either non-stationary or stochastic policies and require a large number of iterations. In this paper, we propose a new iterative algorithm, which trains a stationary deterministic policy, that can be seen as a no regret algorithm in an online learning setting. We show that any such no regret algorithm, combined with additional reduction assumptions, must find a policy with good performance under the distribution of observations it induces in such sequential settings. We demonstrate that this new approach outperforms previous approaches on two challenging imitation learning problems and a benchmark sequence labeling problem.},
  langid = {english},
  file = {C:\Users\sande\Zotero\storage\536AB6C7\Ross et al. - 2011 - A Reduction of Imitation Learning and Structured Prediction to No-Regret Online Learning.pdf}
}

@article{husseinDeepImitationLearning2018,
  title = {Deep Imitation Learning for {{3D}} Navigation Tasks},
  author = {Hussein, Ahmed and Elyan, Eyad and Gaber, Mohamed Medhat and Jayne, Chrisina},
  year = {2018},
  month = apr,
  journal = {Neural Computing and Applications},
  volume = {29},
  number = {7},
  pages = {389--404},
  issn = {0941-0643, 1433-3058},
  doi = {10.1007/s00521-017-3241-z},
  urldate = {2025-05-19},
  langid = {english},
  file = {C:\Users\sande\Zotero\storage\7YQXPQKA\Hussein et al. - 2018 - Deep imitation learning for 3D navigation tasks.pdf}
}

@inproceedings{abbeelApprenticeshipLearningInverse2004a,
  title = {Apprenticeship Learning via Inverse Reinforcement Learning},
  booktitle = {Twenty-First International Conference on {{Machine}} Learning  - {{ICML}} '04},
  author = {Abbeel, Pieter and Ng, Andrew Y.},
  year = {2004},
  pages = {1},
  publisher = {ACM Press},
  address = {Banff, Alberta, Canada},
  doi = {10.1145/1015330.1015430},
  urldate = {2025-05-19},
  langid = {english},
  file = {C:\Users\sande\Zotero\storage\R2CEEKDZ\Abbeel and Ng - 2004 - Apprenticeship learning via inverse reinforcement learning.pdf}
}

@inproceedings{codevillaEndtoEndDrivingConditional2018,
  title = {End-to-{{End Driving Via Conditional Imitation Learning}}},
  booktitle = {2018 {{IEEE International Conference}} on {{Robotics}} and {{Automation}} ({{ICRA}})},
  author = {Codevilla, Felipe and M{\"u}ller, Matthias and L{\'o}pez, Antonio and Koltun, Vladlen and Dosovitskiy, Alexey},
  year = {2018},
  month = may,
  pages = {4693--4700},
  issn = {2577-087X},
  doi = {10.1109/ICRA.2018.8460487},
  urldate = {2025-05-19},
  abstract = {Deep networks trained on demonstrations of human driving have learned to follow roads and avoid obstacles. However, driving policies trained via imitation learning cannot be controlled at test time. A vehicle trained end-to-end to imitate an expert cannot be guided to take a specific turn at an upcoming intersection. This limits the utility of such systems. We propose to condition imitation learning on high-level command input. At test time, the learned driving policy functions as a chauffeur that handles sensorimotor coordination but continues to respond to navigational commands. We evaluate different architectures for conditional imitation learning in vision-based driving. We conduct experiments in realistic three-dimensional simulations of urban driving and on a 1/5 scale robotic truck that is trained to drive in a residential area. Both systems drive based on visual input yet remain responsive to high-level navigational commands.},
  keywords = {Cameras,Navigation,Roads,Robot sensing systems,Task analysis,Vehicles},
  file = {C:\Users\sande\Zotero\storage\2DFAXX3E\Codevilla et al. - 2018 - End-to-End Driving Via Conditional Imitation Learning.pdf}
}

@inproceedings{pomerleauALVINNAutonomousLand1988,
  title = {{{ALVINN}}: {{An Autonomous Land Vehicle}} in a {{Neural Network}}},
  shorttitle = {{{ALVINN}}},
  booktitle = {Advances in {{Neural Information Processing Systems}}},
  author = {Pomerleau, Dean A.},
  year = {1988},
  volume = {1},
  publisher = {Morgan-Kaufmann},
  urldate = {2025-05-19},
  abstract = {ALVINN (Autonomous Land Vehicle In a Neural Network) is a 3-layer  back-propagation network designed for the task of road following. Cur(cid:173) rently ALVINN takes images from a camera and a laser range finder as input  and produces as output the direction the vehicle should travel in order to  follow the road. Training has been conducted using simulated road images.  Successful tests on the Carnegie Mellon autonomous navigation test vehicle  indicate that the network can effectively follow real roads under certain field  conditions. The representation developed to perfOIm the task differs dra(cid:173) matically when the networlc is trained under various conditions, suggesting  the possibility of a novel adaptive autonomous navigation system capable of  tailoring its processing to the conditions at hand.},
  file = {C:\Users\sande\Zotero\storage\V5P9JCIB\Pomerleau - 1988 - ALVINN An Autonomous Land Vehicle in a Neural Network.pdf}
}

@misc{ciftciSAFEGILSAFEtyGuided2024a,
  title = {{{SAFE-GIL}}: {{SAFEty Guided Imitation Learning}} for {{Robotic Systems}}},
  shorttitle = {{{SAFE-GIL}}},
  author = {Ciftci, Yusuf Umut and Chiu, Darren and Feng, Zeyuan and Sukhatme, Gaurav S. and Bansal, Somil},
  year = {2024},
  month = nov,
  number = {arXiv:2404.05249},
  eprint = {2404.05249},
  primaryclass = {cs},
  publisher = {arXiv},
  doi = {10.48550/arXiv.2404.05249},
  urldate = {2025-05-19},
  abstract = {Behavior cloning (BC) is a widely-used approach in imitation learning, where a robot learns a control policy by observing an expert supervisor. However, the learned policy can make errors and might lead to safety violations, which limits their utility in safety-critical robotics applications. While prior works have tried improving a BC policy via additional real or synthetic action labels, adversarial training, or runtime filtering, none of them explicitly focus on reducing the BC policy's safety violations during training time. We propose SAFE-GIL, a design-time method to learn safety-aware behavior cloning policies. SAFE-GIL deliberately injects adversarial disturbance in the system during data collection to guide the expert towards safety-critical states. This disturbance injection simulates potential policy errors that the system might encounter during the test time. By ensuring that training more closely replicates expert behavior in safety-critical states, our approach results in safer policies despite policy errors during the test time. We further develop a reachability-based method to compute this adversarial disturbance. We compare SAFE-GIL with various behavior cloning techniques and online safety-filtering methods in three domains: autonomous ground navigation, aircraft taxiing, and aerial navigation on a quadrotor testbed. Our method demonstrates a significant reduction in safety failures, particularly in low data regimes where the likelihood of learning errors, and therefore safety violations, is higher. See our website here: https://y-u-c.github.io/safegil/},
  archiveprefix = {arXiv},
  keywords = {Computer Science - Machine Learning,Computer Science - Robotics,Computer Science - Systems and Control,Electrical Engineering and Systems Science - Systems and Control},
  file = {C\:\\Users\\sande\\Zotero\\storage\\VKYCJI6J\\Ciftci et al. - 2024 - SAFE-GIL SAFEty Guided Imitation Learning for Robotic Systems.pdf;C\:\\Users\\sande\\Zotero\\storage\\EFT9L3SY\\2404.html}
}

@inproceedings{ankileJUICERDataEfficientImitation2024,
  title = {{{JUICER}}: {{Data-Efficient Imitation Learning}} for {{Robotic Assembly}}},
  shorttitle = {{{JUICER}}},
  booktitle = {2024 {{IEEE}}/{{RSJ International Conference}} on {{Intelligent Robots}} and {{Systems}} ({{IROS}})},
  author = {Ankile, Lars and Simeonov, Anthony and Shenfeld, Idan and Agrawal, Pulkit},
  year = {2024},
  month = oct,
  pages = {5096--5103},
  issn = {2153-0866},
  doi = {10.1109/IROS58592.2024.10802498},
  urldate = {2025-05-19},
  abstract = {While learning from demonstrations is powerful for acquiring visuomotor policies, high-performance imitation without large demonstration datasets remains challenging for tasks requiring precise, long-horizon manipulation. This paper proposes a pipeline for improving imitation learning performance with a small human demonstration budget. We apply our approach to assembly tasks that require precisely grasping, reorienting, and inserting multiple parts over long horizons and multiple task phases. Our pipeline combines expressive policy architectures and various techniques for dataset expansion and simulation-based data augmentation. These help expand dataset support and supervise the model with locally corrective actions near bottleneck regions requiring high precision. We demonstrate our pipeline on four furniture assembly tasks in simulation, enabling a manipulator to assemble up to five parts over nearly 2500 time steps directly from RGB images, outperforming imitation and data augmentation baselines. Project website: https://imitation-juicer.github.io/.},
  keywords = {Assembly,Data augmentation,Grasping,Imitation learning,Intelligent robots,Manipulators,Pipelines,Robotic assembly},
  file = {C:\Users\sande\Zotero\storage\FFWXV8WZ\Ankile et al. - 2024 - JUICER Data-Efficient Imitation Learning for Robotic Assembly.pdf}
}

@article{husseinImitationLearningSurvey2018,
  title = {Imitation {{Learning}}: {{A Survey}} of {{Learning Methods}}},
  shorttitle = {Imitation {{Learning}}},
  author = {Hussein, Ahmed and Gaber, Mohamed Medhat and Elyan, Eyad and Jayne, Chrisina},
  year = {2018},
  month = mar,
  journal = {ACM Computing Surveys},
  volume = {50},
  number = {2},
  pages = {1--35},
  issn = {0360-0300, 1557-7341},
  doi = {10.1145/3054912},
  urldate = {2025-05-19},
  abstract = {Imitation learning techniques aim to mimic human behavior in a given task. An agent (a learning machine) is trained to perform a task from demonstrations by learning a mapping between observations and actions. The idea of teaching by imitation has been around for many years; however, the field is gaining attention recently due to advances in computing and sensing as well as rising demand for intelligent applications. The paradigm of learning by imitation is gaining popularity because it facilitates teaching complex tasks with minimal expert knowledge of the tasks. Generic imitation learning methods could potentially reduce the problem of teaching a task to that of providing demonstrations, without the need for explicit programming or designing reward functions specific to the task. Modern sensors are able to collect and transmit high volumes of data rapidly, and processors with high computational power allow fast processing that maps the sensory data to actions in a timely manner. This opens the door for many potential AI applications that require real-time perception and reaction such as humanoid robots, self-driving vehicles, human computer interaction, and computer games, to name a few. However, specialized algorithms are needed to effectively and robustly learn models as learning by imitation poses its own set of challenges. In this article, we survey imitation learning methods and present design options in different steps of the learning process. We introduce a background and motivation for the field as well as highlight challenges specific to the imitation problem. Methods for designing and evaluating imitation learning tasks are categorized and reviewed. Special attention is given to learning methods in robotics and games as these domains are the most popular in the literature and provide a wide array of problems and methodologies. We extensively discuss combining imitation learning approaches using different sources and methods, as well as incorporating other motion learning methods to enhance imitation. We also discuss the potential impact on industry, present major applications, and highlight current and future research directions.},
  langid = {english},
  file = {C:\Users\sande\Zotero\storage\XRKKS3XP\Hussein et al. - 2018 - Imitation Learning A Survey of Learning Methods.pdf}
}

\end{document}